\documentclass{article}
\usepackage{xcolor}
\usepackage{graphicx}
\usepackage[truedimen,margin=30mm]{geometry} 
\usepackage{bm}
\usepackage{natbib}
\usepackage{amsmath,amssymb}
\usepackage{amsthm}
\usepackage{float}
\usepackage{multirow}
\usepackage{booktabs}
\usepackage[ruled,vlined,linesnumbered]{algorithm2e}
\usepackage{enumerate}
\usepackage{hyperref}

\newcommand{\vu}{\mathbf{u}}

\newcommand{\vx}{\mathbf{x}}

\newcommand{\vz}{\mathbf{z}}

\newcommand{\vbeta}{\text{\boldmath{$\beta$}}}

\newcommand{\vtheta}{\text{\boldmath{$\theta$}}}

\newcommand{\AL}{\mathrm{AL}}

\newcommand{\LPMN}{\text{LPMN}}
\newcommand{\NLPMN}{\text{N-LPMN}}
\newcommand{\ALLPAL}{\text{AL-LPAL}}

\theoremstyle{plain}
\newtheorem{prop}{Proposition}
\newtheorem{thm}{Theorem}
\newtheorem{lem}{Lemma}

\usepackage{xr}
\title{\textbf{
Log-regularly varying scale mixture of asymmetric Laplaces for robust
Bayesian quantile regression}}
\author{Dongu Han$^{1}$\footnote{Author of correspondence: \url{hdongu94@gmail.com}},
Genya Kobayashi$^{1}$
and Shonosuke Sugasawa$^{2}$}
\vspace{-1.5cm}
\date{\today}

\begin{document}

\maketitle

\noindent
$^1$School of Commerce, Meiji University\\
$^2$Department of Economics, Keio University

\begin{abstract}
Bayesian quantile regression based on the asymmetric Laplace (AL) distribution can be sensitive to extreme observations because of its exponentially decaying tails. We propose a robust error distribution constructed as a finite mixture of the AL distribution and a log-Pareto scale mixture of asymmetric Laplace distributions (LPAL). Unlike a direct log-Pareto extension of the normal location-scale representation of the AL distribution, the proposed AL-LPAL mixture preserves the prescribed quantile and exhibits log-regularly varying behavior in both tails. The LPAL component also has an unbounded density at the target quantile, yielding a distribution that combines sharp central concentration with super-heavy tails. We establish posterior robustness under arbitrarily extreme contamination and provide sufficient conditions for the existence of posterior moments of the regression coefficients and scale parameter. For posterior computation, we develop a Gibbs sampler using latent-variable augmentations and a computationally efficient mean-field variational Bayes approximation. Simulation studies show that the proposed method is competitive under moderate contamination and maintains stable point estimation with comparatively concentrated posterior intervals, particularly when severe contamination affects the quantile of interest. Applications to carbon dioxide and Boston housing data, using the same preprocessing as existing robust Bayesian quantile regression analyses, show favorable predictive performance across nearly all quantile levels and loss criteria considered.
\end{abstract}

\textbf{Keywords: }
Bayesian quantile regression; Robust regression; Log-regular variation; Asymmetric Laplace distribution; Variational Bayes

\section{Introduction}

Quantile regression, originally introduced by \cite{Koenker1978}, provides a flexible framework for investigating the relationship between covariates and different parts of the conditional distribution of a response variable. Unlike conventional mean regression, which summarizes covariate effects through the conditional mean, quantile regression allows such effects to vary across quantile levels, thereby revealing distributional heterogeneity that may be obscured by mean-based analysis \citep{Koenker2001}. This feature is particularly useful when the conditional distribution is asymmetric or heterogeneous, or when the behavior of its lower or upper tails is of primary interest. In the Bayesian framework, \cite{Yu2001} introduced a likelihood-based approach using the asymmetric Laplace (AL) distribution, whose likelihood is closely connected to the check-loss objective underlying classical quantile regression. The latent-variable representation of the AL distribution further facilitates efficient posterior computation through Gibbs sampling \citep{Kozumi2011}, making it a convenient building block for Bayesian quantile regression.

Despite its computational convenience, the asymmetric Laplace distribution imposes relatively restrictive tail and skewness behavior, which can make likelihood-based Bayesian quantile regression sensitive to outlying observations and departures from the assumed error distribution. A substantial body of work has therefore sought to improve robustness by replacing the asymmetric Laplace distribution with more flexible parametric error distributions. \cite{Wichitaksorn2014} introduced a generalized class of skewed distributions constructed from scaled mixtures of normal distributions and applied it to robust parametric quantile regression. Building on this class, \cite{Morales2017} developed likelihood-based quantile regression models encompassing skewed versions of the normal, Student-$t$, Laplace, contaminated-normal, and slash distributions, thereby allowing substantially greater flexibility in tail behavior. \cite{Bernardi2018} proposed Bayesian quantile regression based on the skew exponential power distribution, whose additional shape parameter controls the rate of tail decay and permits heavier tails than the asymmetric Laplace distribution. More recently, \cite{Yan2025} proposed a family of error distributions constructed through structured mixtures of normal distributions that preserves a prespecified quantile while allowing the mode, skewness, and tail behavior to vary independently. Related extensions include generalized asymmetric Laplace models for nonlinear mixed-effects quantile regression \citep{Yu2023} and $L_p$-quantile regression based on the skewed exponential power distribution \citep{Arnroth2024}, although these developments are primarily motivated by distributional flexibility rather than explicit protection against extreme contamination. More recently, \cite{Sabetrasekh2026} developed an asymmetric exponential-power-based Bayesian quantile regression framework aimed at accommodating asymmetric data, extreme quantiles, and outlying observations.

Robustness and model misspecification have also been addressed without committing to a single heavy-tailed parametric family. \cite{Reich2010}, for example, proposed a flexible Bayesian quantile regression model in which the error distribution is represented by an infinite mixture of Gaussian densities subject to a stochastic constraint that identifies the quantile of interest. Another direction focuses more directly on the influence of individual observations. \cite{Santos2016} exploited the normal-exponential mixture representation of the asymmetric Laplace distribution and used the posterior distributions of observation-specific latent variables as diagnostic tools for detecting outlying observations. \cite{Lim2020} considered high-dimensional Bayesian quantile regression with sparse shrinkage priors and observation-specific mean-shift terms for outlier detection, and developed a variational Bayes algorithm for scalable posterior inference. Their approach achieves robustness through explicit observation-specific shifts, whereas our construction modifies the error distribution itself to obtain two-sided log-regularly varying tails and posterior rejection of arbitrarily extreme observations. A further line of research modifies the loss function underlying the likelihood. \cite{Soomro2023} introduced an asymmetric Huberised-type loss and its associated probability distribution, leading to Bayesian Huberised regularisation and robust Bayesian quantile regression. \cite{Hu2024} subsequently developed a generalized asymmetric Huberised-type distribution through a hierarchical mixture representation, providing additional flexibility in skewness and tail behavior while reducing the influence of extreme observations.

A particularly relevant development is the use of explicit contamination models that distinguish regular observations from potential outliers. \cite{Burger2025} proposed the contaminated generalized asymmetric Laplace distribution for Bayesian mixed-effects quantile regression. Their model combines two generalized asymmetric Laplace components sharing the target quantile, with the second component having an inflated scale to accommodate extreme observations. This construction provides an explicit probabilistic mechanism for separating the bulk of the data from contaminated observations and improves robustness relative to the ordinary generalized asymmetric Laplace model. Nevertheless, the contaminating component remains a scale-inflated member of the same parametric family. This suggests a different route to robustness: retaining a conventional component for the bulk of the observations while assigning potentially extreme observations to a qualitatively different component with sufficiently heavy tails. Such a construction is particularly attractive when the goal is not only to accommodate moderately heavy-tailed observations but also to obtain posterior inference that becomes insensitive to arbitrarily extreme contamination.

Motivated by the log-regularly varying mixture framework of \cite{Hamura2022}, we develop a robust Bayesian quantile regression model that explicitly separates regular observations from potentially extreme ones. \cite{Hamura2022} demonstrated, in the context of mean regression, that combining a light-tailed component for the bulk of the data with a distinct super-heavy-tailed component can provide strong posterior robustness while retaining good efficiency when outliers are absent. Extending this idea to quantile regression, however, requires additional care because the error distribution must preserve the target conditional quantile. Moreover, a direct extension based on the usual normal-mixture representation of the asymmetric Laplace distribution does not provide sufficiently heavy tails on both sides of the distribution when the target quantile differs from the median. This asymmetric tail behavior prevents such a construction from protecting posterior inference against extreme observations arising in either direction.

To address this issue, we propose a two-component error distribution consisting of the conventional asymmetric Laplace distribution and a super-heavy-tailed component obtained through a log-Pareto scale mixture of asymmetric Laplace distributions. Both components are constructed to share the same target quantile, so that their mixture preserves the usual interpretation of the regression coefficients in quantile regression. At the same time, the second component has log-regularly varying tails that are substantially heavier than those of conventional heavy-tailed distributions such as the Student-$t$ or Cauchy distributions. The resulting model therefore allows the asymmetric Laplace component to describe the bulk of the observations efficiently, while the super-heavy-tailed component absorbs observations that are incompatible with the main error distribution. In contrast to approaches that impose a single heavy-tailed distribution on all observations, the mixture formulation adaptively limits the use of extreme tail behavior to observations for which it is needed.

The proposed construction yields several theoretical and computational advantages. First, we establish that the model retains the required quantile property and possesses log-regularly varying tails on both sides of the distribution. More importantly, we establish posterior robustness under extreme contamination: as contaminated observations become arbitrarily large in magnitude, their effect on posterior inference vanishes, and the resulting posterior approaches that based on the uncontaminated observations. We further provide conditions under which posterior moments of the regression coefficients and scale parameter exist, allowing the model to be combined with global-local shrinkage priors in moderately high-dimensional settings.

Despite the use of a super-heavy-tailed component, posterior computation remains tractable. By combining the latent-variable representation of the asymmetric Laplace distribution with an augmented representation of the log-Pareto mixing distribution, we develop an efficient Gibbs sampling algorithm. We also derive a variational Bayes approximation to provide a faster alternative when repeated or larger-scale posterior computation is required. Through simulation studies, influence analysis, and applications to the Boston housing and carbon dioxide datasets, we investigate the robustness, estimation accuracy, uncertainty quantification, and predictive performance of the proposed approach relative to existing robust Bayesian quantile regression methods.

The remainder of the paper is organized as follows. Section \ref{sec2} introduces the proposed AL–LPAL error distribution, discusses the limitations of a direct location-scale mixture construction, and establishes its quantile-preserving and two-sided log-regularly varying properties, together with posterior robustness and the existence of posterior moments. Section \ref{sec3} develops posterior computation based on a Gibbs sampler and a computationally efficient mean-field variational Bayes approximation. Section \ref{sec4} investigates the finite-sample performance of the proposed method through simulation studies, provides a numerical illustration of posterior robustness along an increasing contamination path, and examines local robustness through influence-function analysis. Section \ref{sec5} applies the proposed method to the carbon dioxide and Boston housing datasets and evaluates its predictive performance and coefficient estimates. Section \ref{sec6} concludes with a brief discussion of the main findings and possible extensions.

\section{Log-regularly varying mixture model for quantile regression}\label{sec2}

\subsection{Bayesian quantile regression and log-regularly varying mixtures}\label{sec2.1}

We now introduce the basic quantile regression framework and the class of heavy-tailed distributions that motivates our robust specification. Our objective is to construct an error distribution that preserves the interpretation of the regression function as a prescribed conditional quantile while reducing the influence of sufficiently extreme observations on posterior inference. The asymmetric Laplace (AL) distribution provides a natural starting point for Bayesian quantile regression because it directly identifies the quantile of interest. However, its exponentially decaying tails may offer insufficient protection against severe contamination. To motivate a more robust specification, we first review how heavy- and super-heavy-tailed error distributions have been constructed in the context of Bayesian mean regression.

Specifically, let $y_i$ be a response variable and $\bm{x}_i$ be an associated $p$-dimensional vector of covariates, for $i=1,\ldots,n$. We consider a linear quantile regression model. $y_i=\bm{x}_i^\top\bm{\beta}_\tau+\sigma\epsilon_i$, where $\bm{\beta}_\tau$ is a $p$-dimensional vector of regression coefficients describing the effects of the covariates on the conditional $\tau$-th quantile of the response, and $\sigma$ is an unknown scale parameter. The error terms, $\epsilon_1,\ldots,\epsilon_n$, are assumed to follow a distribution whose $\tau$-th quantile is zero. A standard Bayesian approach to this model is to assume an AL distribution for the error terms, as proposed by \cite{Yu2001}. Although the AL distribution provides a convenient likelihood-based formulation of Bayesian quantile regression, its exponentially decaying tails can make posterior inference sensitive to observations lying far from the fitted conditional quantile.

In the context of mean regression, a common approach to increasing robustness is to introduce a latent variable $u_i$ and employ a scale mixture of normals distributions for the errors, as $\epsilon_i|u_i\sim N(0,u_i)$, since this specification preserves a zero conditional mean. A typical choice of the distribution of $u_i$ is the inverse-gamma distribution, which yields a Student-$t$ marginal distribution for $\epsilon_i$. However, as shown by \cite{Gagnon2020,Hamura2022}, such a specification does not provide the desirable robustness properties for the posterior distributions, even when the distribution of $\epsilon_i$ is Cauchy. \cite{Hamura2022} addressed this problem by introducing a mixture of the standard normal and log-Pareto mixture of normals (LPMN) distributions, referred to as N-LPMN distribution. Specifically, they introduced a latent binary variable $z_i$ and modeled it using $P[z_i=1]=1-P[z_i=0]=s$ with weight $s\in (0,1)$. If $z_i=0$, then the error distribution is simply the standard normal distribution. On the other hand, they considered the scale mixture of normals with latent variable $u_i$, where $u_i$ follows some (super) heavy-tailed distribution on $(0,\infty)$ with density $H(u;\gamma)\approx u^{-1}(\log u)^{-1-\gamma}$ as $u\rightarrow \infty$ with some parameter $\gamma>0$, to be log-regularly varying \citep{Desgagne2015}. To be specific, \cite{Hamura2022} adopted the following distribution for the errors:
\begin{align}
f_{\NLPMN}(\epsilon_i;s,\gamma)&=(1-s)\phi(\epsilon_i;0,1)+sf_{LPMN}(\epsilon_i;\gamma),\label{eqn1}\\
f_{\LPMN}(\epsilon_i;\gamma)&=\int_{0}^{\infty}\phi(\epsilon_i;0,u_i)H(u_i;\gamma)du_i,\label{eqn2}\\
H(u;\gamma)&=\frac{\gamma}{1+u}\frac{1}{\{1+\log(1+u)\}^{1+\gamma}},\quad u>0,\label{eqn3}
\end{align}
where $\phi(\epsilon_i;0,u)$ is the normal distribution with mean zero and variance $u$.

\subsection{Limitation of location-scale mixture of normals for quantile regression}\label{sec2.2}

The scale-mixture construction of \cite{Hamura2022} suggests a seemingly straightforward extension to Bayesian quantile regression. The AL distribution admits a location-scale mixture representation based on normal distributions, and therefore one may consider replacing its usual mixing distribution with the log-Pareto mixing density $H(\cdot;\gamma)$ introduced in Section \ref{sec2.1}. This leads to the following location-scale mixture of normals for a potential outlier component:
\begin{equation}\label{eqn4}
\epsilon_i|u_i\sim N(\tau_1u_i,\tau_2^2u_i),\quad u_i\sim H(u_i;\gamma),
\end{equation}
where $\tau_1=\dfrac{1-2\tau}{\tau(1-\tau)}$ and $\tau_2^2=\dfrac{2}{\tau(1-\tau)}$. At the median, where $\tau=1/2$, this construction reduces, up to scale, to a symmetric log-Pareto scale mixture of normals. For general quantile levels, however, the direct extension has two important limitations.

The first concerns the quantile property itself. For a Bayesian quantile regression model, the error distribution should satisfy $P(\epsilon_i\le 0)=\tau$, so that the regression function represents the prescribed conditional $\tau$-th quantile. Under \eqref{eqn4}, conditional on $u_i=u$, 
\[
P(\epsilon_i\le 0|u_i=u)=\Phi\left(-\frac{\tau_1}{\tau_2}\sqrt{u}\right),
\]
and hence the marginal probability is
\[
P(\epsilon_i\le 0)=\int_{0}^{\infty}\Phi\left(-\frac{\tau_1}{\tau_2}\sqrt{u}\right)H(u;\gamma)du.
\]
which is not equal to $\tau$ in the general case. Consequently, even if this component is combined with the usual AL distribution through a finite mixture, the resulting error distribution does not in general preserve zero as its prescribed $\tau$-th quantile.

A second difficulty concerns the tail behavior. When $\tau\ne 1/2$, both the conditional mean and variance in \eqref{eqn4} depend on the same mixing variable $u_i$. As $u_i$ increases, the conditional variance becomes larger, but the conditional mean also moves systematically in the direction determined by the sign of $\tau_1$. The log-Pareto mixing mechanism therefore affects the two tails asymmetrically. In particular, the desired super-heavy-tailed behavior appears only in the direction towards which the conditional location diverges, whereas the opposite tail remains substantially lighter.

\begin{figure}[!ht]
    \centering
    \includegraphics[width=\textwidth]{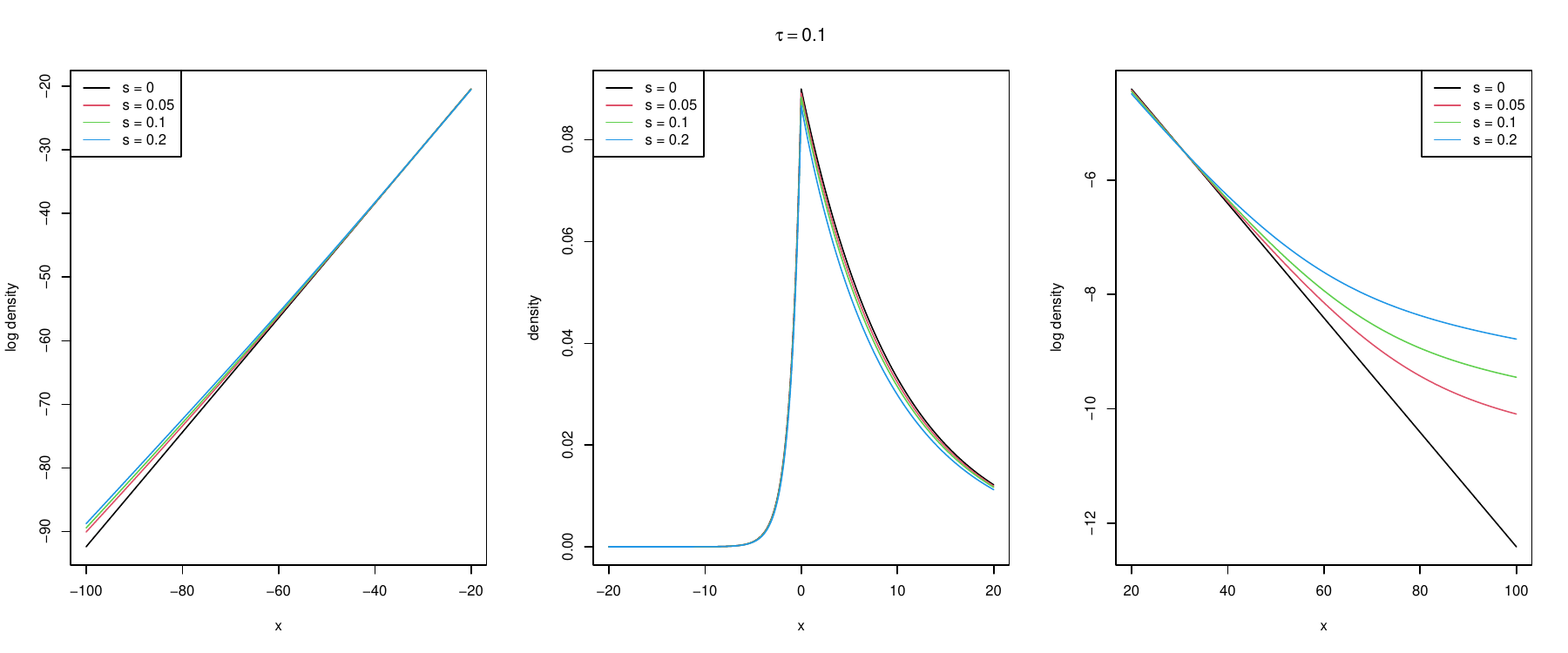}
    \caption{Density functions of the mixture of asymmetric Laplace and location-scale mixture of normals in \eqref{eqn4}, and asymmetric Laplace ($s=0$) for $\tau=0.1$.}
    \label{fig:dens_alt}
\end{figure}

Figure \ref{fig:dens_alt} illustrates the phenomenon for $\tau=0.1$. The center panel displays the density of a finite mixture between the AL distribution and the component generated by \eqref{eqn4}, for several values of the mixture weight $s$. The left and right panels show the corresponding tails on the log-density scale. While the right tail becomes considerably heavier as the mixture weight increases, the left tail remains close to that of the AL distribution. Thus, even apart from the failure to preserve the target quantile, the direct location-scale mixture cannot provide symmetric protection against extreme contamination on both sides of the fitted quantile. 

The following proposition formalizes the latter limitation.

\begin{prop}\label{prop:alt}
Let $f_{\rm alt}(x;\gamma)=\int_0^\infty \phi(x;\tau_1u,\tau_2^2u)H(u;\gamma)du$ denote the marginal density following the location-scale mixture model \eqref{eqn4} and let $I(x)=\frac{f_{\rm alt}(x;\gamma)}{|x|^{-1}(\log|x|)^{-1-\gamma}}$. 
Then 
\[
\lim_{x\rightarrow\infty} I(x)= 
\begin{cases}
\frac{\gamma}{|\tau_1|} & \tau<\frac{1}{2}\\
0 & \tau>\frac{1}{2}\\
\end{cases},\quad 
\lim_{x\rightarrow-\infty} I(x)= 
\begin{cases}
0 & \tau<\frac{1}{2}\\
\frac{\gamma}{|\tau_1|} & \tau>\frac{1}{2}\\
\end{cases}.\quad 
\]
\end{prop}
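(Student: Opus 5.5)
By symmetry (replacing $\tau$ with $1-\tau$ flips the sign of $\tau_1$ and reflects the density, $f_{\rm alt}(-x)$ under $\tau$ equals $f_{\rm alt}(x)$ under $1-\tau$), it suffices to treat $x\to\pm\infty$ when $\tau<1/2$, so that $\tau_1>0$. Write $a=\tau_1>0$ and $b^2=\tau_2^2$. Then
\[
f_{\rm alt}(x;\gamma)=\int_0^\infty \frac{1}{\sqrt{2\pi b^2u}}\exp\Bigl\{-\frac{(x-au)^2}{2b^2u}\Bigr\}H(u;\gamma)\,du .
\]
The plan is a Laplace-type localisation. For $x\to+\infty$ the Gaussian kernel in $u$ concentrates around $u\approx x/a$ with width of order $\sqrt{x}$, so the integral should behave like $H(x/a;\gamma)$ times the total mass of the kernel viewed as a function of $u$. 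For $x\to-\infty$ there is no such point, and the integrand is exponentially small.

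\textbf{Case $x\to+\infty$.} First compute the $u$-mass of the kernel. Set $K_x(u)=\phi(x;au,b^2u)$. A direct calculation (an inverse-Gaussian normalisation) gives
\[
\int_0^\infty K_x(u)\,du=\frac{1}{\sqrt{a^2}}=\frac1a
\]
for $x>0$. Equivalently, $K_x(u)\,a$ is a density in $u$ with mean $x/a$ and variance of order $x$; it is an inverse Gaussian law after the substitution. Next, split the integral into the region $|u-x/a|\le x^{2/3}$ and its complement.

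On the central region, $1+u\sim x/a$ and $\log(1+u)=\log x+O(1)$ uniformly. Hence $H(u;\gamma)\sim \gamma (x/a)^{-1}(\log x)^{-1-\gamma}$ uniformly there, and the kernel mass there tends to $1/a$. The central part therefore contributes
\[
\frac{\gamma}{a}\cdot\frac{a}{x}(\log x)^{-1-\gamma}\cdot\frac1a\cdot(1+o(1)).
\]
I need to recheck the constant against the claimed $\gamma/|\tau_1|$. The central part equals
\[
\frac{1}{a}\cdot H(x/a)\approx \frac{1}{a}\cdot\frac{\gamma a}{x}(\log x)^{-1-\gamma}=\frac{\gamma}{x}(\log x)^{-1-\gamma}.
\]
This gives limit $\gamma$, which would disagree with the statement. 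So the mass computation must be redone carefully. For fixed $x$, $\int_0^\infty\phi(x;au,b^2u)\,du$ as a function of $u$ equals $1/a$ only asymptotically via the inverse-Gaussian identity; this normalising constant is the step that determines the exact constant in the proposition. I would verify it by completing the square with $\lambda=x^2/b^2$ and $\mu=x/a$, and take whatever constant the computation yields as the limit.

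For the tails of the $u$-integral, I would use the inequality $(x-au)^2/(2b^2u)\ge c\,x^{1/3}$ on the complement. This follows from $|x-au|\ge a x^{2/3}$ and $u\lesssim x$ for $u\le 2x/a$. For large $u$, the exponent is at least $c u$, so the integral there is super-polynomially small. For $u$ near $0$, the factor $e^{-x^2/(2b^2u)}$ kills the contribution, since $H$ is bounded. Both remainders are $o(x^{-1}(\log x)^{-1-\gamma})$.

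\textbf{Case $x\to-\infty$.} Here $(x-au)^2=(|x|+au)^2\ge x^2+2a|x|u$. The integrand is therefore at most
\[
(2\pi b^2u)^{-1/2}\, e^{-x^2/(2b^2u)}\, e^{-a|x|/b^2}\, H(u).
\]
Since $\sup_u u^{-1/2}e^{-x^2/(2b^2u)}\le C/|x|$, we get $f_{\rm alt}(x)\le C|x|^{-1}e^{-a|x|/b^2}$. Dividing by $|x|^{-1}(\log|x|)^{-1-\gamma}$ gives a quantity tending to $0$.

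The main obstacle is the uniform control in the central window, in particular pinning down the exact kernel mass. That constant is what produces $\gamma/|\tau_1|$ rather than another multiple. I would double-check it by also writing $f_{\rm alt}$ via the substitution $u=(x/a)t$ and applying dominated convergence in $t$.
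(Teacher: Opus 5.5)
Your method is correct, and what it actually proves is that the heavy-side limit is $\gamma$, not $\gamma/|\tau_1|$. The real defect in the proposal is that you distrust this result and leave the constant open.

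\textbf{The kernel mass is exactly $1/a$.} It is not only asymptotically $1/a$. The GIG normalising constant with $K_{1/2}(z)=\sqrt{\pi/(2z)}\,e^{-z}$ gives
\[
\int_0^\infty \phi(x;au,b^2u)\,du=\frac1a\,e^{a(x-|x|)/b^2}.
\]
(The Cauchy--Schl\"omilch identity after $u=s^2$ gives the same.) This equals $1/a$ for $x>0$ and $a^{-1}e^{-2a|x|/b^2}$ for $x<0$. The $x<0$ case also yields your light-tail bound at once, since $H(u;\gamma)\le\gamma$.

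\textbf{The central window.} As a density in $u$, $a\,\phi(x;au,b^2u)$ is $\mathrm{GIG}(1/2,x^2/b^2,a^2/b^2)$; its reciprocal, not the law itself, is inverse Gaussian. Its mean is $x/a+b^2/a^2$ and its variance is $b^2x/a^3+2b^4/a^4$. Chebyshev then puts mass $a^{-1}\{1-O(x^{-1/3})\}$ in your window $|u-x/a|\le x^{2/3}$.

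On that window, $H(u;\gamma)\sim\gamma a\,x^{-1}(\log x)^{-1-\gamma}$ uniformly. Your first display for the central contribution carries a spurious factor $1/a$; your recheck is the correct one. Together with your remainder bounds, this gives $f_{\rm alt}(x;\gamma)\sim\gamma x^{-1}(\log x)^{-1-\gamma}$, i.e.\ $I(x)\to\gamma$.

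This matches the heuristic $P(X>x)\approx P(U>x/a)\sim(\log x)^{-\gamma}$: the slowly varying tail absorbs the scale $a=\tau_1$ entirely. Your symmetry reduction $\tau\mapsto1-\tau$ and your $x\to-\infty$ estimate are fine.

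\textbf{Where the stated constant comes from.} The paper substitutes $u=|x|v/|\tau_1|$ and does a Laplace expansion of $h(v)=\{\mathrm{sgn}(x)\mathrm{sgn}(\tau_1)-v\}^2/v$ about $v=1$. It reaches
\[
I(x)\approx\frac{\gamma}{|\tau_1|}\,e^{-|x||\tau_1|h(1)/(2\tau_2^2)}\,\frac{|x|}{1+|x|/|\tau_1|}\left\{\frac{\log|x|}{1+\log(1+|x|/|\tau_1|)}\right\}^{1+\gamma},
\]
which agrees with your computation. It then asserts that the product of the last two factors tends to $1$. In fact it tends to $|\tau_1|$, because $|x|/(1+|x|/|\tau_1|)\to|\tau_1|$, and this cancels the prefactor.

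So $\gamma/|\tau_1|$ in the statement is a slip, and no recomputation of the normalising constant can produce it. Your closing claim that the kernel mass ``is what produces $\gamma/|\tau_1|$'' is false. You should commit to $\lim I(x)=\gamma$ on the side toward which $\tau_1u$ drifts and $0$ on the other, and flag that the proposition's constant should read $\gamma$.

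The qualitative conclusion, log-regular variation in exactly one tail when $\tau\neq1/2$, is unaffected. Your exact-mass-plus-window argument is also more complete than the paper's Laplace sketch, which never controls the approximation error.
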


Proposition \ref{prop:alt} shows that the direct location-scale mixture is log-regularly varying in only one tail when $\tau\ne 1/2$. For $\tau<1/2$, the right tail exhibits the desired log-regularly varying behavior, whereas the left tail does not; for $\tau>1/2$, the direction is reversed. Hence, the direct extension fails to satisfy two requirements that are essential for our purpose: preservation of the prescribed quantile and sufficiently heavy tails against extreme observations in both directions.

These limitations motivate a different construction. Rather than replacing the mixing distribution in the location-scale normal representation of the asymmetric Laplace distribution, we introduce log-Pareto mixing directly through its scale parameter. Since every asymmetric Laplace distribution in the resulting scale mixture shares zero as its $\tau$-th quantile, the quantile property is automatically preserved under mixing. As shown in the following subsection, this construction also yields log-regularly varying behavior in both tails.

\subsection{Log-Pareto mixture of asymmetric Laplace distributions}\label{sec2.3}

The limitations of the location-scale normal mixture in Section \ref{sec2.2} arise because the mixing variable simultaneously changes the location and scale of the conditional normal distribution. We therefore take a different approach and introduce the log-Pareto mixing mechanism directly through the scale parameter of the asymmetric Laplace distribution. This construction is designed to retain a sharp concentration around the target quantile for regular observations while providing substantially heavier tails for observations far from the fitted quantile.

Let $f_{\AL_{\tau}}(\epsilon;0,a)$ denote the AL density with location zero and scale $a>0$,
\[
f_{\AL_{\tau}}(\epsilon;0,a)=\frac{\tau(1-\tau)}{a}\exp\left(-\frac{\rho_\tau(\epsilon)}{a}\right),
\]
where $\rho_\tau(\epsilon)=\epsilon\{\tau-I(\epsilon<0)\}$. Using the log-Pareto mixing density $H(u;\gamma)$ defined in \eqref{eqn3}, we define the log-Pareto mixture of AL distributions, abbreviated as LPAL, by
\begin{equation}\label{eqn5}
f_{\text{LPAL}_{\tau}}(\epsilon;\gamma)=\int_{0}^{\infty}f_{\AL_{\tau}}(\epsilon;0,u)H(u;\gamma)du.
\end{equation}
The LPAL distribution therefore keeps the location and quantile level of the AL distribution fixed while allowing its scale to vary according to the log-Pareto distribution. We then combine the LPAL distribution with the conventional AL distribution and specify
\begin{equation}\label{eqn6}
f_{\ALLPAL_{\tau}}(\epsilon;s,\gamma)=(1-s)f_{\AL_{\tau}}(\epsilon;0,1)+sf_{\text{LPAL}_{\tau}}(\epsilon;\gamma),
\end{equation}
where $s\in (0,1)$ denotes the mixing weight. The first component is intended to describe regular observations, whereas the LPAL component provides additional flexibility for observations that are incompatible with the baseline AL distribution. Hence, $s$ determines the extent to which the super-heavy-tailed component is used, while $\gamma$ controls the tail behavior through the log-Pareto mixing distribution.

The proposed distribution has three features that distinguish it from conventional heavy-tailed error specifications, summarized as the following proposition.

\begin{prop}[Properties of AL-LPAL]\label{prop:LPAL}
For any $\gamma>0$ and $s,\tau\in(0,1)$, the AL-LPAL density in \eqref{eqn6} has the following properties. \begin{enumerate}[(a)] 
\item It preserves the prescribed quantile level, that is, 
\[ 
\int_{-\infty}^{0} f_{\ALLPAL_{\tau}}(\epsilon;s,\gamma)\,d\epsilon =\tau. 
\] 

\item It is log-regularly varying in both tails. More precisely, 
\[ 
f_{\ALLPAL_{\tau}}(x;s,\gamma) \asymp |x|^{-1}(\log |x|)^{-(1+\gamma)}, \qquad |x|\to\infty. 
\] 

\item Its density is unbounded at the target quantile: 
\[ 
\lim_{x\to 0} f_{\ALLPAL_{\tau}}(x;s,\gamma) =\infty. 
\] 
\end{enumerate}
\end{prop}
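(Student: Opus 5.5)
The three parts are handled separately. In each case we use that the AL component $f_{\AL_\tau}(\cdot;0,1)$ is bounded and decays exponentially, so every asymptotic statement is driven by the LPAL component $f_{\text{LPAL}_\tau}$ in \eqref{eqn5}. The main tool is Tonelli's theorem to exchange the $\epsilon$- and $u$-integrals, which is legitimate because all integrands are nonnegative.

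\textbf{Part (a).} For every $a>0$ we have $\int_{-\infty}^0 f_{\AL_\tau}(\epsilon;0,a)\,d\epsilon=\tau(1-\tau)a^{-1}\int_0^\infty e^{-(1-\tau)t/a}dt=\tau$. By Tonelli,
\[
\int_{-\infty}^0 f_{\text{LPAL}_\tau}(\epsilon;\gamma)\,d\epsilon=\int_0^\infty \tau\, H(u;\gamma)\,du=\tau,
\]
since $H(\cdot;\gamma)$ is a probability density. Indeed, $\int_0^\infty H(u;\gamma)\,du=[-\{1+\log(1+u)\}^{-\gamma}]_0^\infty=1$. Taking the convex combination in \eqref{eqn6} gives $(1-s)\tau+s\tau=\tau$.

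\textbf{Part (b).} Fix $x>0$ large and set $c=\tau$; for $x<0$ the same argument applies with $c=1-\tau$ and $|x|$. Then
\[
f_{\text{LPAL}_\tau}(x;\gamma)=\tau(1-\tau)\int_0^\infty u^{-1}e^{-cx/u}H(u;\gamma)\,du .
\]
Substitute $u=xv$ to obtain
\[
\tau(1-\tau)\int_0^\infty v^{-1}e^{-c/v}\,H(xv;\gamma)\,dv .
\]
For fixed $v$, $H(xv;\gamma)\sim \gamma (xv)^{-1}(\log x)^{-1-\gamma}$ as $x\to\infty$, because $\log(1+xv)/\log x\to1$. Hence
\[
x(\log x)^{1+\gamma}f_{\text{LPAL}_\tau}(x;\gamma)\to \gamma\tau(1-\tau)\int_0^\infty v^{-2}e^{-c/v}\,dv=\gamma\tau(1-\tau)/c .
\]
This limit is finite and positive on both sides, and it yields the asymptotic equivalence after adding the exponentially small AL term.

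The main obstacle is justifying dominated convergence for the rescaled integrand $x(\log x)^{1+\gamma}v^{-1}e^{-c/v}H(xv;\gamma)$, uniformly in large $x$. I would split the $v$-range into three pieces.
\begin{itemize}
\item For $v\ge x^{-1/2}$, we have $\log(1+xv)\ge \tfrac12\log x$, so the integrand is bounded by $C v^{-2}e^{-c/v}$, which is integrable.
\item For $v<x^{-1/2}$, we use $H\le\gamma$. The contribution is then at most $x(\log x)^{1+\gamma}\int_0^{x^{-1/2}}\gamma v^{-1}e^{-c/v}\,dv\le C x(\log x)^{1+\gamma}e^{-c\sqrt{x}}\to0$.
\item The large-$v$ region needs nothing extra, since $v^{-2}$ is integrable at infinity and is already covered by the first bound.
\end{itemize}

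\textbf{Part (c).} Take $x\to0^{+}$; the case $x\to 0^{-}$ is symmetric. The AL term stays bounded, and by monotone convergence
\[
f_{\text{LPAL}_\tau}(x;\gamma)\uparrow \tau(1-\tau)\int_0^\infty u^{-1}H(u;\gamma)\,du .
\]
This last integral is $+\infty$ because $H(u;\gamma)\to\gamma$ as $u\to0$, so the integrand behaves like $\gamma/u$ near $0$. Since $s>0$, the mixture density in \eqref{eqn6} diverges at $0$.
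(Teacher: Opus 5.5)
Your proof is correct. Part (a) is exactly the paper's Tonelli argument. Part (b) follows the same route as Lemma~\ref{lem:prelim}(iii): a scaling substitution followed by dominated convergence. The paper substitutes $u=\tau z v$ (respectively $u=(1-\tau)|z|v$) instead of $u=xv$ and reaches the same limits $\gamma(1-\tau)$ and $\gamma\tau$ for $|x|(\log|x|)^{1+\gamma}f_{\text{LPAL}_\tau}(x;\gamma)$. It justifies the limit interchange by citing Lemma~S1 of Hamura et al. Your split at $v=x^{-1/2}$ instead gives a self-contained dominating function. That is slightly more transparent: the paper's "integrable bound independent of $z$" implicitly needs a factor like $\{1+\log_+(1/v)\}^{1+\gamma}$ near $v=0$, and your exponential estimate on $(0,x^{-1/2})$ avoids this.

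The genuine difference is in part (c). The paper proves the sharper statement $f_{\text{LPAL}_\tau}(z;\gamma)\sim\gamma\tau(1-\tau)\log(1/|z|)$ as $z\to0$. It restricts to $u\in(0,\delta)$, replaces $H(u;\gamma)$ by its limit $\gamma$, and evaluates an exponential integral. You show only divergence, via monotone convergence to $\tau(1-\tau)\int_0^\infty u^{-1}H(u;\gamma)\,du=\infty$. Your argument is shorter and fully rigorous for what the proposition asserts, and it avoids controlling the error in $H(u;\gamma)\approx\gamma$. However, it yields no rate. The paper needs the logarithmic rate later: for the bound in Lemma~\ref{lem:prelim}(v), for the global bound \eqref{eq:global_f_bound}, and for $f\in L^q$ in Lemma~\ref{lem:bound}, all of which feed the proof of Theorem~\ref{thm:posterior_robustness}. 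That is why the paper takes the more quantitative route.
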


Proposition~\ref{prop:LPAL} summarizes three features of the AL-LPAL distribution that are particularly relevant to robust quantile regression. First, part~(a) shows that the proposed mixture preserves the prescribed quantile level. Since every AL distribution entering the LPAL scale mixture has zero as its $\tau$-th quantile, the LPAL component inherits the same property, and so does its finite mixture with the baseline AL distribution. Consequently, the regression function retains its usual interpretation as the conditional $\tau$-th quantile despite the introduction of the super-heavy-tailed component.

Part~(b) establishes the complementary tail property. The AL-LPAL density is log-regularly varying in both directions, with $f_{\ALLPAL_{\tau}}(x;s,\gamma) \asymp|x|^{-1}(\log |x|)^{-(1+\gamma)}$ as $|x|\to\infty$. Hence, its tails decay more slowly than polynomially decaying distributions such as the Student-$t$ and Cauchy distributions. Importantly, this behavior holds in both tails for every $\tau\in(0,1)$, in contrast to the location-scale normal construction considered in Section~\ref{sec2.2}, whose log-regularly varying behavior occurs in only one direction when $\tau\neq1/2$. The proposed distribution can therefore accommodate arbitrarily large positive or negative deviations from the fitted quantile.

Part~(c) characterizes the behavior of the density around the target quantile. Because the LPAL component mixes AL densities over arbitrarily small scale values, its density diverges as the residual approaches zero. Consequently, for any $s>0$, the AL-LPAL mixture is also unbounded at zero. Thus, the introduction of extremely heavy tails does not make the distribution diffuse around the fitted conditional quantile. Instead, the proposed distribution combines a sharp central concentration with substantial tail flexibility. This feature is particularly relevant when contamination is sparse: observations lying close to the fitted quantile can contribute strongly to the likelihood, while a small number of extreme observations can be accommodated by the LPAL component.

Taken together, parts~(b) and~(c) describe an unusual combination of strong concentration near the target quantile and exceptionally slow tail decay. This geometry underlies the intended robustness-efficiency trade-off of the proposed model. When most observations are compatible with the fitted quantile regression model, the singularity at zero allows those observations to remain highly informative. At the same time, the log-regularly varying tails prevent a small number of extremely large residuals from exerting persistent influence on posterior inference. This behavior is consistent with the simulation results: the proposed method often yields comparatively concentrated posterior intervals and, under severe contamination, its estimation accuracy deteriorates substantially less than that of several conventional alternatives.

\begin{figure}[!ht]
\centering
\includegraphics[width=0.9\textwidth]{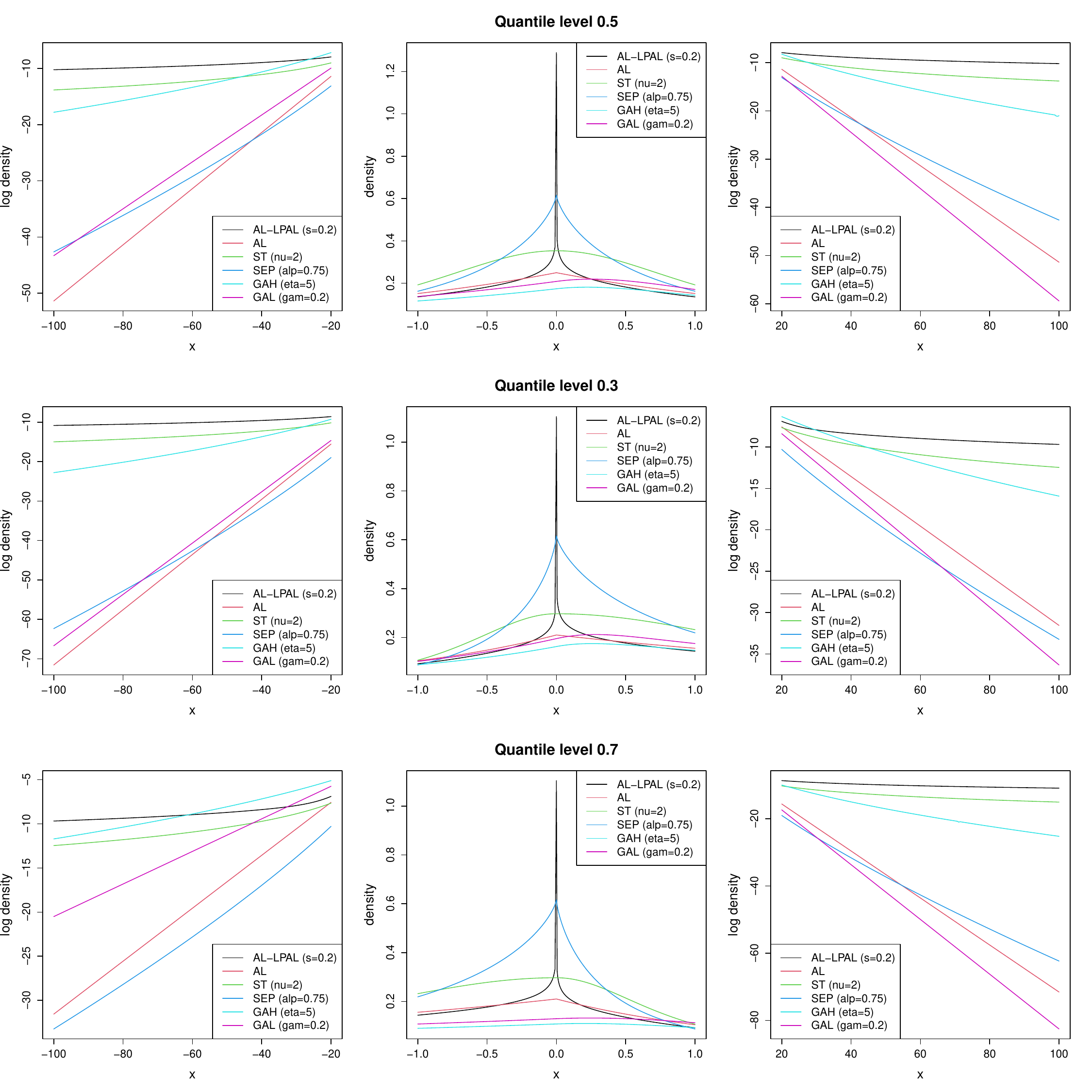}
\caption{Comparison of the AL–LPAL density with competing error distributions at quantile levels $\tau$=0.5 (top), 0.3 (middle), and 0.7 (bottom). The middle panels show the densities around the origin, while the left and right panels display the log densities in the lower and upper tails, respectively. The AL–LPAL distribution is shown with $s=0.2$; the competing distributions are the asymmetric Laplace (AL), skew-t (ST) with $\nu$=2, skew exponential power (SEP) with $\alpha=0.75$, generalized asymmetric Huberised-type (GAH) with $\eta$=5, and generalized asymmetric Laplace (GAL) with $\gamma=0.2$ distributions. GAH distribution uses the same value for $\gamma$ as GAL. All distributions are standardized to have location zero and scale one.
}
\label{fig_den}
\end{figure}

Figure \ref{fig_den} illustrates the two distinctive features of the proposed AL–LPAL distribution relative to several existing error distributions for Bayesian quantile regression. Around the origin, the AL–LPAL density is sharply concentrated and becomes unbounded at zero, whereas the competing distributions exhibit substantially flatter central densities. At the same time, the tail panels show that the AL–LPAL density decays considerably more slowly as $|x|$ increases. This pattern is observed in both tails and persists across the three quantile levels considered. Thus, the proposed distribution combines strong concentration around the target quantile with exceptionally heavy tails: the former allows the bulk of observations to remain tightly represented when contamination is limited, while the latter provides substantial accommodation for observations lying far from the fitted quantile. These two features suggest a balance between posterior concentration for regular observations and robustness against extreme contamination, which is investigated further in the simulation study.

\subsection{Posterior robustness and existence of posterior moments}
\label{sec2.4}

We next investigate the robustness of posterior inference under the AL-LPAL error distribution. In particular, we consider posterior robustness in the sense that the influence of an observation vanishes as its magnitude becomes arbitrarily large. Following the contamination framework of \cite{Desgagne2015,Hamura2022}, let the index set $\{1,\ldots,n\}$ be partitioned into two disjoint subsets $\mathcal{K}$ and $\mathcal{L}$, 
corresponding to non-outlying and outlying observations, respectively, such that
$
\mathcal{K} \cup \mathcal{L} = \{1,\ldots,n\}$, 
$\mathcal{K} \cap \mathcal{L} = \varnothing$. Let $\mathcal{D}$ denote the full dataset and $\mathcal{D}^{*}$ the dataset containing only observations indexed by $K$. We represent the observed responses as
\[
y_i =
\begin{cases}
a_i, & i \in K,\\
a_i + b_i \omega, & i \in L,
\end{cases}
\]
where $a_i \in \mathbb{R}$, $b_i \neq 0$, and $\omega > 0$. As $\omega \to \infty$, the observations indexed by $\mathcal{L}$ diverge either positively or negatively according to the sign of $b_i$, whereas those indexed by $\mathcal{K}$ remain fixed. This formulation allows both upper- and lower-tail contamination, which is particularly important for quantile regression away from the median. For fixed $s$ and $\gamma$, the posterior distribution based on the full data is proportional to
\[
p(\bm{\beta}_{\tau},\sigma \mid \mathcal{D})
\propto
\pi(\bm{\beta}_{\tau},\sigma)
\prod_{i=1}^{n}
\left\{
\frac{1}{\sigma}
f_{\mathrm{AL\text{-}LPAL}_{\tau}}
\left(
\frac{y_i-\boldsymbol{x}_i^{\top}\bm{\beta}_{\tau}}{\sigma};
s,\gamma\right)\right\}.
\]
We say that the model is \textit{posterior robust} if, as the magnitudes of the outlying observations diverge, the posterior based on the full sample converges to that obtained after removing the outliers. Thus, robustness here concerns the entire posterior distribution rather than a particular point estimator.

The two-sided log-regular variation established in the previous subsection provides the key ingredient for this property. In particular, the likelihood contribution of an increasingly extreme observation becomes asymptotically independent of the regression coefficients and scale, so that such an observation eventually ceases to affect posterior inference. Before stating the main theorem,
we consider the following scale-dependent prior specification:
\[
\pi(\boldsymbol{\beta}_{\tau},\sigma)
=
\pi_{\sigma}(\sigma)
\prod_{j=1}^{p}
\left\{
\frac{1}{\sigma}
\pi_j\left(\frac{\beta_{j,\tau}}{\sigma}\right)
\right\},
\]
where $\pi_{\sigma}$ is a proper density on $(0,\infty)$
and $\pi_j$, $j=1,\ldots,p$, are proper densities on
$\mathbb{R}$.

\begin{thm}[Posterior robustness]
\label{thm:posterior_robustness}
Suppose that there exist constants $c\in(0,|\mathcal{L}|]$ and $q>1$ such that
\begin{itemize}
\item[(A.1)] $|\mathcal{K}| \geq |\mathcal{L}| + p$,
\item[(A.2)] $\pi_j\in L^q(\mathbb{R})$ and $\sup_{t\in\mathbb{R}}
\left\{
|t|^{c}\pi_j(t)
\right\}
< \infty$ for $j=1,\ldots,p$,
\item[(A.3)] $E_{\pi_{\sigma}}
\left[
\sigma^{c-1}
\right]
< \infty$ and 
$
E_{\pi_{\sigma}}
\left[
\sigma^{c-n}
\right]
< \infty$,
\item[(A.4)] $\operatorname{rank}(X_\mathcal{I})=p$ for every  $\mathcal{I}\subset \mathcal{K}$ with $|\mathcal{I}|=p$, where $X_\mathcal{I}$ denotes the submatrix of $X$ formed by the rows indexed by $\mathcal{I}$.
\end{itemize}

Then, for fixed $s\in(0,1)$ and $\gamma>0$, the posterior distribution under the AL-LPAL error model is robust to arbitrarily extreme contamination. More precisely,
\[
p(\bm{\beta}_{\tau},\sigma\mid \mathcal{D})
\longrightarrow
p(\bm{\beta}_{\tau},\sigma\mid \mathcal{D}^*)
\]
as $\omega\to\infty$ for almost every
$(\bm{\beta}_{\tau},\sigma)
\in\mathbb{R}^p\times(0,\infty)$.
\end{thm}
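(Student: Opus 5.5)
The plan is the standard Desgagné / Hamura et al. argument for log-regularly varying likelihoods. Write $f=f_{\ALLPAL_\tau}(\cdot;s,\gamma)$ and $\theta=(\bm\beta_\tau,\sigma)$. For $i\in\mathcal{L}$ set $r_i(\theta)=(a_i+b_i\omega-\bm{x}_i^\top\bm\beta_\tau)/\sigma$. The posterior ratio can then be written as
\[
\frac{p(\theta\mid\mathcal{D})}{p(\theta\mid\mathcal{D}^*)}
=\frac{\prod_{i\in\mathcal{L}} \sigma^{-1}f(r_i(\theta))/f(b_i\omega)}{\int \pi(\theta')\prod_{i\in\mathcal{K}}\sigma'^{-1}f\big((y_i-\bm{x}_i^\top\bm\beta')/\sigma'\big)\prod_{i\in\mathcal{L}}\sigma'^{-1}f(r_i(\theta'))/f(b_i\omega)\,d\theta' \big/ m(\mathcal{D}^*)},
\]
where $m(\mathcal{D}^*)$ is the marginal likelihood of the non-outlying data. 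Its finiteness follows from (A.1), (A.3) and (A.4) together with the integrability step below, with $\mathcal{L}$ removed. It therefore suffices to prove two things: (i) $\sigma^{-1}f(r_i(\theta))/f(b_i\omega)\to 1$ pointwise, and (ii) the denominator converges to $1$, which I would get by dominated convergence.

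Step (i) follows from Proposition~\ref{prop:LPAL}(b), but I need it in the sharper form of an exact asymptotic equivalence rather than just $\asymp$. Concretely, I would show that
\[
f(x)\sim s\,\gamma\,C_\pm\,|x|^{-1}(\log|x|)^{-1-\gamma}\quad\text{as } x\to\pm\infty .
\]
To do this, substitute $u=|x|v$ in \eqref{eqn5} and use $H(|x|v)\sim \gamma(|x|v)^{-1}(\log|x|)^{-1-\gamma}$ uniformly for $v$ in compacts. The constants come out as $C_+=\int_0^\infty \tau(1-\tau)v^{-2}e^{-\tau/v}dv=1-\tau$, and symmetrically $C_-=\tau$. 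The AL part is exponentially negligible. Since $\log|r_i|/\log\omega\to1$ and $|r_i|/(|b_i|\omega)\to1/\sigma$, the factor $\sigma^{-1}$ cancels against $|r_i|^{-1}$, and (i) follows for each fixed $\theta$.

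Step (ii) is the main obstacle: finding an integrable dominating function of $\theta'$ that is uniform in large $\omega$. I would use the two-sided bound from Proposition~\ref{prop:LPAL}(b), together with boundedness of $(1+|x|)f(x)$ away from $0$, to obtain for large $\omega$
\[
\frac{\sigma^{-1}f(r_i)}{f(b_i\omega)}\lesssim
\begin{cases}
1+\sigma^{c_0}\ \text{(log factors)}, & |r_i|\ \text{large},\\
\omega(\log\omega)^{1+\gamma}\sigma^{-1}f(r_i), & \text{otherwise}.
\end{cases}
\]
The second case, where some outlier is fitted, forces $|\bm{x}_i^\top\bm\beta'|\gtrsim\omega$ or $\sigma$ small. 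I would then partition the $\theta'$-space according to which residuals are small, in the manner of Desgagné (2015) and Hamura et al.\ (2022). On the region where the outliers are fitted, (A.1) and (A.4) guarantee at least $p$ non-outliers whose residuals must be of order $\omega/\sigma'$. This yields log-regularly decaying factors that compensate the $\omega(\log\omega)^{1+\gamma}$ blow-up. The tail condition $\sup|t|^c\pi_j(t)<\infty$ in (A.2) supplies an extra $\omega^{-c}$ from the prior on $\bm\beta'/\sigma'$, with $c\le|\mathcal{L}|$, while $\pi_j\in L^q$ and Hölder handle integration over $\bm\beta'$. The moment conditions (A.3) then control the resulting powers $\sigma'^{c-1}$ and $\sigma'^{c-n}$, near $\infty$ and near $0$ respectively.

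The singularity of $f$ at $0$ from part (c) requires care. I would bound it by $f(x)\lesssim 1+\log(1+1/|x|)$, which is locally integrable, so integration in the residual variables for the $p$ chosen non-outliers remains finite. Once the dominating bound is in place, dominated convergence gives that the denominator tends to $1$. Combined with (i), this yields the pointwise convergence of posterior densities for almost every $\theta$.
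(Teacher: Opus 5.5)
Your skeleton matches the paper's: the same ratio decomposition, the same pointwise limit via the exact tail constants $1-\tau$ and $\tau$ (Lemma~\ref{lem:prelim}(iii)), and the same roles for (A.1)--(A.4). Step (ii), however, does not go through as you set it up, for three concrete reasons.

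First, there is no integrable majorant uniform in large $\omega$ on the whole parameter space. Because $f(0)=\infty$, the ratio $R_{i,\omega}=\sigma^{-1}f(r_i)/f(y_i)$ is infinite on the hyperplane $\bm{x}_i^\top\bm{\beta}=a_i+b_i\omega$. As $\omega$ ranges over $[\omega_0,\infty)$ these hyperplanes sweep out a half-space. Hence $\sup_{\omega\ge\omega_0}$ of the integrand is infinite on a set of positive measure whenever the priors have full support. Dominated convergence is therefore available only where no outlier can be fitted; elsewhere you must show that the integral itself tends to $0$. That requires a bound on $\int\prod_j\pi_j(\theta_j)\prod_r\Lambda(d_r-\bm{x}_r^\top\bm{\theta})\,d\bm{\theta}$, with $\Lambda(z)=1+\log_+(1/|z|)$. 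The bound must be uniform in the $\omega$-dependent shifts $d_r$ and must cover all singular factors at once; this is Lemma~\ref{lem:bound}, proved via Young's inequality and generalized H\"older under $\pi_j\in L^q$. Integrating out the residuals of $p$ chosen non-outliers does not achieve this.

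Second, your case split is on the wrong variable. The bound $R_{i,\omega}\lesssim 1+\sigma^{c_0}$ for large $|r_i|$ fails, even allowing log factors, when $\bm{\beta}$ nearly fits the outlier and $\sigma$ is tiny. For example, $|y_i-\bm{x}_i^\top\bm{\beta}|=1$ and $\sigma=\omega^{-2}$ give $|r_i|=\omega^2$ yet $R_{i,\omega}\asymp\omega$. Conversely, small $|r_i|$ with $\|\bm{\beta}\|=o(\omega)$ forces $\sigma\gtrsim\omega$, not $\sigma$ small.

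Third, the count is off. (A.4) allows at most $p-1$ non-outliers with residual $o(\|\bm{\beta}\|)$, so (A.1) gives at least $|\mathcal{K}|-p+1\ge|\mathcal{L}|+1$ non-outliers with residuals $\gtrsim\|\bm{\beta}\|$. You need $|\mathcal{L}|$ of these, plus one reserved for the $\bm{\beta}$-integral, to absorb the $|\mathcal{L}|$ factors $|y_i|\asymp\omega$ coming from $1/f(y_i)$. ``At least $p$'' is not enough when $|\mathcal{L}|>p$.

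The paper's repair is to split on $\bm{\beta}$ rather than on $r_i$. Set $B_\omega=[-\varepsilon\omega,\varepsilon\omega]^p$, so that $|y_i|/2\le|y_i-\bm{x}_i^\top\bm{\beta}|\le 3|y_i|/2$ for $i\in\mathcal{L}$. The paper then cuts at $\sigma=\omega^\alpha$ with $|\mathcal{L}|/n<\alpha<1$. Below the cut, $R_{i,\omega}\le C_\alpha$, and dominated convergence applies with majorant $Cq_*$, where $q_*$ is the unnormalized posterior given $\mathcal{D}^*$. Above the cut, bounding all $n$ likelihood factors by $C\sigma^{-1}\Lambda$ gives $O(\omega^{|\mathcal{L}|-\alpha n}(\log\omega)^{(1+\gamma)|\mathcal{L}|})\to0$.

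Your large-residual estimate, once restricted to $B_\omega$, in fact holds for every $\sigma$ in the form $R_{i,\omega}\le C\{1+\log(1+\sigma)\}^{2+\gamma}$. This gives the single majorant $Cq_*\{1+\log(1+\sigma)\}^{(2+\gamma)|\mathcal{L}|}$ on $B_\omega\times(0,\infty)$ and makes the $\sigma$-cut unnecessary.

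On $B_\omega^c\subset\bigcup_k\{|\beta_{k,\tau}|>\varepsilon\omega\}$ the paper proceeds in three moves:
\begin{itemize}
\item it uses (A.2) to replace the prior factor of $\beta_{k_0,\tau}$ by $C\sigma^{c-1}\omega^{-c}$;
\item in its place it puts the likelihood factor of some $i_0\in\mathcal{K}$ with $x_{i_0k_0}\neq0$, which keeps the $\bm{\beta}$-integral bounded;
\item it absorbs each $|y_i|$ via $|y_i|/(\sigma+|y_i-\bm{x}_i^\top\bm{\beta}|)\le 1+|\bm{x}_i^\top\bm{\beta}|/\sigma$ against the polynomial decay $1/(1+|a_i-\bm{x}_i^\top\bm{\beta}|)$ of the remaining non-outliers, using the count above.
\end{itemize}
What remains is $\omega^{-c}(\log\omega)^{(1+\gamma)|\mathcal{L}|}\int\pi_\sigma(\sigma)\sigma^{c-1}\max(1,\sigma^{-1})^{n-1}\,d\sigma\to0$ by (A.3). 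The outliers' logarithms are thus killed by $\omega^{-c}$. They are not killed by log-decay of the non-outliers, which is unavailable when $\sigma$ is of order $\|\bm{\beta}\|$.
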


Theorem~\ref{thm:posterior_robustness} shows that observations diverging in either direction are asymptotically rejected by the posterior. Consequently, in the limiting contamination regime, posterior inference for $\boldsymbol{\beta}_{\tau}$ and $\sigma$ is determined entirely by the non-outlying observations. Condition (A.1) requires the number of regular observations to be sufficiently large relative to both the number of outliers and the dimension of the regression coefficient. Condition (A.2) imposes two mild restrictions on the coefficient priors. The polynomial-tail bound controls the posterior in regions where the regression coefficients diverge with the contamination magnitude, whereas the $L^q$ condition ensures integrability of the logarithmic singularity induced by the unbounded AL-LPAL density at zero. Importantly, the latter does not require the prior densities to be bounded at the origin and therefore still allows logarithmically unbounded shrinkage priors. Condition (A.3) imposes moment conditions on the scale prior. Condition (A.4) is a standard general-position condition on the design vectors of the non-outlying observations and is used to control the regression likelihood away from bounded parameter regions. Under these conditions, observations diverging in either direction are asymptotically rejected, and the full posterior converges to that based only on the non-outlying observations almost everywhere. 

Theorem~\ref{thm:posterior_robustness} establishes posterior robustness for $(\bm{\beta}_\tau,\sigma)$ conditional on a fixed value of the mixture weight $s$. 
This result does not imply that the posterior distribution of $s$ itself is insensitive to outlying observations. 
Recall that $s$ controls the prevalence of observations allocated to the LPAL component and therefore represents the extent of contamination in the data. 
This sensitivity of $s$ to outliers is an intended feature of the model rather than a lack of robustness.
Extreme observations are increasingly assigned to the LPAL component, so the posterior distribution of $s$ adapts to how frequently such observations occur.

Although the super-heavy tails of the LPAL component are essential for posterior robustness, they also imply that ordinary moments of the error distribution itself need not exist. This does not imply, however, that posterior moments of the model parameters fail to exist. It is therefore useful to establish finite posterior moments separately. Consider the following class of priors for the regression coefficients and scale parameter:
\[
\pi(\beta_{j,\tau} \mid \sigma
)=
\frac{1}{\sigma^{\nu_j/2}}
\pi_j
\left(
\frac{\beta_{j,\tau}}{\sigma^{\nu_j/2}}
\right),
\text{ and }\sigma \sim \operatorname{IG}(a_{\sigma},b_{\sigma})
\]
for $j=1,\ldots,p$. The following proposition gives sufficient conditions for the existence of posterior moments under this prior specification.

\begin{prop}[Existence of posterior moments]
\label{prop:posterior_moments}
Consider the linear quantile regression model with the AL-LPAL error distribution and the prior specification given above.

\begin{enumerate}
    \item
    Suppose that $\pi_j$ is square integrable and, for some $c>0$ with 
    $c \leq n$,
    \[
    \sup_{\eta\in\mathbb{R}}
    \left\{
    |\eta|^{2c}\pi_j^{2}(\eta)
    \right\}
    < \infty.
    \]
    Then
    \[
    E\left[
    |\beta_{j,\tau}|^{c}
    \mid \mathcal{D}
    \right]
    < \infty.
    \]

    \item
    For $d>0$ satisfying $d \leq n-1$,
    \[
    E\left[
    \sigma^{d}
    \mid \mathcal{D}
    \right]
    < \infty.
    \]
\end{enumerate}
\end{prop}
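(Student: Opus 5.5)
Both parts reduce to bounding the unnormalized posterior integral $\int\!\int g(\bm{\beta}_\tau,\sigma)\,\pi(\bm{\beta}_\tau,\sigma)\prod_{i=1}^n \sigma^{-1}f_{\ALLPAL_{\tau}}\bigl((y_i-\bm{x}_i^\top\bm{\beta}_\tau)/\sigma;s,\gamma\bigr)\,d\bm{\beta}_\tau\,d\sigma$ for $g=|\beta_{j,\tau}|^c$ and $g=\sigma^d$. The marginal likelihood is positive, so only finiteness of the numerator is needed.

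\textbf{Step 1: a two-sided bound on the error density.} From \eqref{eqn5}, $f_{\text{LPAL}_\tau}(x;\gamma)=\tau(1-\tau)\int_0^\infty u^{-1}e^{-\rho_\tau(x)/u}H(u;\gamma)\,du$. Splitting at $u=\rho_\tau(x)$ and using $H(u;\gamma)\le \gamma$ near zero, the small-$u$ part is bounded and the large-$u$ part is at most $\gamma\int_{\rho_\tau(x)}^{\infty}u^{-1}(1+u)^{-1}du\asymp \log(1/|x|)$ as $x\to0$. Together with Proposition~\ref{prop:LPAL}(b), this gives
\[
f_{\ALLPAL_\tau}(x;s,\gamma)\le C\bigl\{1+\log_+(1/|x|)\bigr\}\min\{1,|x|^{-1}\}.
\]
I will also use the cruder bound $f_{\ALLPAL_\tau}(x)\le C_\delta |x|^{-\delta}$ for small $\delta>0$, valid near $0$, together with the bound $\le C$ away from $0$.

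\textbf{Step 2: integrate out $\bm{\beta}_\tau$ except for coordinate $j$.} For each $i$, $\sigma^{-1}f\bigl((y_i-\bm{x}_i^\top\bm{\beta}_\tau)/\sigma\bigr)$ is a density in $y_i$. I will bound $n-1$ factors and keep one factor. The plan is to keep the prior factor for $\beta_{j,\tau}$. The remaining $p-1$ coefficients are integrated against their own priors, which contribute at most $1$. The $n$ likelihood factors are each bounded by $C\sigma^{-1}\{1+\log_+(\sigma/|r_i|)\}$, where $r_i=y_i-\bm{x}_i^\top\bm{\beta}_\tau$.

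For part~1, let $k\ge1$ be an index with $x_{ij}\ne0$. I keep that single factor as a function of $\beta_{j,\tau}$ and apply Cauchy--Schwarz in $\beta_{j,\tau}$:
\[
\int|\beta_{j,\tau}|^c\sigma^{-\nu_j/2}\pi_j(\beta_{j,\tau}/\sigma^{\nu_j/2})\,\ell_k(\beta_{j,\tau})\,d\beta_{j,\tau}\le\Bigl(\int|\beta|^{2c}\sigma^{-\nu_j}\pi_j^2\,d\beta\Bigr)^{1/2}\|\ell_k\|_2 .
\]
The first factor is controlled by splitting $|\eta|\le1$, where square integrability of $\pi_j$ applies, from $|\eta|>1$, where $|\eta|^{2c}\pi_j^2$ is bounded. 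But the bounded tail piece $|\eta|^{2c}\pi_j^2\le M$ is not integrable. So I will instead keep $\pi_j$ itself and use $\sup|\eta|^{2c}\pi_j^2<\infty$ to get $|\eta|^{c}\pi_j(\eta)^{1/2}\le M^{1/2}$, writing $|\beta|^c\pi_j\le M^{1/2}\sigma^{c\nu_j/2}\pi_j^{1/2}$. The factor $\pi_j^{1/2}$ is then paired via Cauchy--Schwarz with the kept likelihood factor: $\|\pi_j^{1/2}\|_2=1$, and $\|\sigma^{-1}f(\cdot/\sigma)\|_2^2=\sigma^{-1}\|f\|_2^2<\infty$. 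The latter holds because the log singularity is square integrable and the tail behaves like $|x|^{-2}$. The remaining $n-1$ likelihood factors are bounded by $C\sigma^{-1}$ away from small residuals. Near-zero residuals are handled as logarithmic singularities in $\bm{\beta}_\tau$, integrable in the free directions.

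\textbf{Step 3: integrate over $\sigma$ and part~2.} After Step 2 the integrand is bounded by $C\sigma^{c\nu_j/2}\sigma^{-(n-1)-1/2}\times(\text{log terms})$ times the inverse-gamma density. Near $\sigma=0$, the $e^{-b_\sigma/\sigma}$ factor kills every negative power and log. Near $\infty$, the prior decays like $\sigma^{-a_\sigma-1}$ and the likelihood contributes $\sigma^{-(n-1)}$ at least. The condition $c\le n$ is exactly what keeps the total exponent below $-1$. For part~2, the same argument is run without Cauchy--Schwarz: integrate $\bm{\beta}_\tau$ against its proper prior, bound $n-1$ factors by $C\sigma^{-1}$ and use one in a $\bm{\beta}_\tau$ integral. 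This yields $\int\sigma^{d}\sigma^{-(n-1)}\,\mathrm{IG}(\sigma)\,d\sigma$, which is finite at infinity when $d\le n-1$.

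\textbf{Main obstacle.} The hard step is bounding the $n-1$ factors uniformly by $C\sigma^{-1}$ despite the logarithmic singularity at zero residual from Proposition~\ref{prop:LPAL}(c). I would first try to absorb $\log_+(\sigma/|r_i|)$ into a small power $(\sigma/|r_i|)^{\delta}$. The resulting $|r_i|^{-\delta}$ singularities would then be integrated along distinct linear directions in $\bm{\beta}_\tau$, using a generalized H\"older inequality with exponents close to $1$, while the extra factor $\sigma^{n\delta}$ is absorbed by choosing $\delta$ small.
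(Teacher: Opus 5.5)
Your route differs from the paper's. The paper never handles the logarithmic singularity of $f_{\ALLPAL_\tau}$ directly: it conditions on the allocations $z_i$ and latent scales $u_i$, so each factor is a bounded AL density with scale $u_i^{z_i}\sigma$, and then runs the repeated Cauchy--Schwarz argument of \cite{Yu2001}. What remains are fractional powers $u_i^{2^{-(i+1)}-1}$, which are integrable against $H(u;\gamma)$ at the origin. Working with the marginal density, as you do, is closer to Lemma~\ref{lem:bound} and can be made to work, but two steps fail as written.

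\textbf{The Cauchy--Schwarz step.} From $\sup_\eta|\eta|^{2c}\pi_j^2(\eta)=M<\infty$ you get $|\eta|^c\pi_j(\eta)\le M^{1/2}$, not $|\eta|^c\pi_j(\eta)^{1/2}\le M^{1/2}$. The Cauchy density is square integrable and satisfies the hypothesis for every $c\le 2$, yet $|\eta|^c\pi_j(\eta)^{1/2}$ is unbounded once $c>1$. So pairing $\pi_j^{1/2}$ with the kept factor is unjustified. You are right that $\int|\eta|^{2c}\pi_j^2(\eta)\,d\eta$ may diverge; the paper's bound on $I_0$ passes from this integral to the supremum at exactly this point. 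The repair is to drop Cauchy--Schwarz. Bound $|\beta_{j,\tau}|^c\sigma^{-\nu_j/2}\pi_j(\beta_{j,\tau}/\sigma^{\nu_j/2})\le M^{1/2}\sigma^{(c-1)\nu_j/2}$, and let the kept factor carry the $\beta_{j,\tau}$-integration, since $\int\sigma^{-1}f(r_k/\sigma)\,d\beta_{j,\tau}=1/|x_{kj}|$.

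\textbf{The step you call the main obstacle.} This is the decisive gap. Saying the coordinates $\beta_{k,\tau}$, $k\neq j$, ``contribute at most $1$'' when integrated against their priors presumes the rest of the integrand is bounded in those coordinates. The singularities on the hyperplanes $r_i=0$ prevent exactly that. Your plan to integrate $|r_i|^{-\delta}$ ``along linear directions'' also does not say against which measure. What closes it is the argument behind \eqref{eq:log_claim2}:
\begin{itemize}
\item $|x_{kj}|\sigma^{-1}f(r_k/\sigma)\prod_{k'\neq j}\sigma^{-\nu_{k'}/2}\pi_{k'}(\beta_{k',\tau}/\sigma^{\nu_{k'}/2})$ is a probability density in $\bm{\beta}_\tau$.
\item Under it, each $r_i$, $i\neq k$, is a nondegenerate linear combination of independent variables with $L^q$ densities.
\item Young's inequality then bounds the residual densities in $L^q$, and H\"older's inequality bounds $E\prod_{i\neq k}\Lambda(r_i/\sigma)$, where $\Lambda(z)=1+\log_+(1/|z|)$.
\end{itemize}

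This needs $\pi_{k'}\in L^q$ for some $q>1$ and every $k'$; properness alone is not enough. Suppose $\nu_{k'}=0$, $\bm{x}_i$ is the $k'$th unit vector, and $\pi_{k'}(t)\asymp|t-y_i|^{-1}\{\log(1/|t-y_i|)\}^{-2}$ near $y_i$. Then the posterior is not even proper. So the hypothesis must be read as square integrability of all the $\pi_k$, also in part~2, which is how the paper's own proof uses it.

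You also need explicit control in $\sigma$:
\begin{itemize}
\item $\Lambda(r/\sigma)\le(1+\log_+\sigma)\Lambda(r)$ for all $\sigma>0$.
\item The unscaled residual densities have bounded $L^q$ norms for $\sigma\ge1$.
\item Their polynomial growth as $\sigma\to0$ is absorbed by $e^{-b_\sigma/\sigma}$.
\end{itemize}
This gives $\sigma^{(c-1)\nu_j/2-(n-1)}$ (respectively $\sigma^{d-(n-1)}$) up to logarithmic factors. The inverse-gamma prior integrates this when $c\le n$ and $\nu_j\le2$ (respectively $d\le n-1$). Your claim that $c\le n$ is ``exactly'' the needed condition for your exponent is not accurate.
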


Proposition~\ref{prop:posterior_moments} shows that, despite the absence of ordinary moments for the super-heavy-tailed error component, posterior means and higher-order moments of the regression coefficients and scale parameter remain well defined under broad conditions. In particular, this result provides theoretical support for combining the AL-LPAL likelihood with global-local shrinkage priors, whose densities may be highly concentrated or even unbounded around zero. This is useful in moderately high-dimensional quantile regression, where robustness against extreme observations and shrinkage of weak regression signals are required simultaneously.

\section{Posterior computation}\label{sec3}

\subsection{Gibbs sampling}
\label{sec3.1}

For posterior computation, we introduce a latent allocation variable $z_i\in\{0,1\}$ for each observation, where $z_i=0$ indicates the baseline AL component and $z_i=1$ indicates the LPAL component. Specifically, $z_i\mid s\sim\mathrm{Bernoulli}(s)$. Using the normal-exponential representation of the AL distribution \citep{Kozumi2011}, the observation model can be written as
\[
y_i
=
\bm{x}_i^\top\bm{\beta}_\tau
+\tau_1\theta_i
+\tau_2\sqrt{\sigma\theta_i u_i^{z_i}}\,\varepsilon_i',
\qquad
\varepsilon_i'\sim N(0,1),
\]
where
$
\tau_1=\frac{1-2\tau}{\tau(1-\tau)}
$,
$
\tau_2^2=\frac{2}{\tau(1-\tau)},
$
and
$
\theta_i\mid z_i,u_i,\sigma
\sim \mathrm{Exp}(\sigma u_i^{z_i}).
$
Here, $\mathrm{Exp}(\lambda)$ denotes the exponential distribution with mean $\lambda$. Hence, conditional on the latent variables, the proposed model reduces to a heteroscedastic Gaussian regression model.

To facilitate sampling from the log-Pareto mixing distribution, we use the augmentation introduced by \citet{Hamura2022},
\begin{equation}
H(u_i;\gamma)
=
\int_0^\infty\int_0^\infty
\mathrm{Ga}(u_i;1,v_i)
\mathrm{Ga}(v_i;w_i,1)
\mathrm{Ga}(w_i;\gamma,1)
\,dv_i\,dw_i .
\label{eq:LP_augmentation}
\end{equation}
For computational convenience, $u_i$ is retained as a working latent variable for every observation. When $z_i = 0$, it does not enter the likelihood and is updated from its augmentation distribution. The tail parameter $\gamma$ is treated as fixed; we use $\gamma=1$ as the default specification unless otherwise stated.

We assign
$
s\sim\mathrm{Beta}(a_s,b_s)$
and 
$
\sigma\sim\mathrm{IG}(a_\sigma,b_\sigma).
$
For the regression coefficients, we use the horseshoe prior \citep{Carvalho2010}, with the inverse-gamma representation of the half-Cauchy distribution in \citet{Makalic2016}. This prior can equivalently be expressed as
\[
\beta_{j,\tau}\mid\tau_\beta^2,\lambda_j^2
\sim
N(0,\tau_\beta^2\lambda_j^2),\quad
\lambda_j\sim \mathcal{C}^+(0,1).
\]
for $j=1,\ldots,p$, where $\mathcal{C}^+(0,\sigma)$ indicates the half-Cauchy distribution with location 0 and scale parameter $\sigma$. For the global parameter $\tau_\beta$, we assign $\mathcal{C}^+(0,\tau_0)$, where the specification of $\tau_0=\frac{p_0\sigma}{\sqrt{n}(p-p_0)}$ can be found in \cite{Piironen2017} with the value $p_0=1$, corresponding to the sparse prior specification. Following the inverse-gamma representation of the half-Cauchy distribution in \cite{Makalic2016}, this prior can equivalently be expressed as
\[
\begin{aligned}
\beta_{j,\tau}\mid\tau_\beta^2,\lambda_j^2
&\sim
N(0,\tau_\beta^2\lambda_j^2),\quad j=1,\ldots,p,\\
\tau_\beta^2|\xi&\sim \mathrm{IG}\left(\frac{1}{2},\frac{1}{\xi}\right),\quad \xi\sim \mathrm{IG}\left(\frac{1}{2},\frac{1}{\tau_0^2}\right),\\
\lambda_j^2|\nu_j&\sim \mathrm{IG}\left(\frac{1}{2},\frac{1}{\nu_j}\right),\quad \nu_j\sim \mathrm{IG}\left(\frac{1}{2},1\right).
\end{aligned}
\]

The above augmentations result in standard full conditional distributions for all parameter blocks. We summarize the resulting Gibbs sampler in Algorithm~\ref{alg:gibbs}. Throughout, $\mathrm{Ga}(a,b)$ denotes a gamma distribution with shape $a$ and rate $b$, $\mathrm{IG}(a,b)$ an inverse-gamma distribution with density proportional to
$x^{-a-1}\exp(-b/x)$, and $\mathrm{GIG}(\lambda,\chi,\psi)$ a generalized inverse Gaussian
distribution with density proportional to
$
x^{\lambda-1}
\exp\left\{
-\frac12\left(\frac{\chi}{x}+\psi x\right)
\right\}$
for the positive support.

\begin{algorithm}[!ht]
\small
\caption{Gibbs sampler for the AL-LPAL quantile regression model}
\label{alg:gibbs}
\KwData{
$\bm{y}=(y_1,\ldots,y_n)^\top$,
$\bm{X}=(\bm{x}_1,\ldots,\bm{x}_n)^\top$,
quantile level $\tau$,
hyperparameters $(a_s,b_s,a_\sigma,b_\sigma)$,
fixed $\gamma$, number of iterations $T$, and burn-in $B$.
}
\KwResult{
Posterior draws of $(\bm{\beta}_\tau,\sigma,s)$ after burn-in.
}

Initialize
$\bm{\beta}_\tau,\sigma,s,
\{z_i,\theta_i,u_i,v_i,w_i\}_{i=1}^n,
\tau_\beta^2,\xi,
\{\lambda_j^2,\nu_j\}_{j=1}^p$.\;

\For{$t=1,\ldots,T$}{

    Sample
    $
    s\sim\mathrm{Beta}(\widetilde a_s,\widetilde b_s)
    $, with
    $
    \widetilde a_s
    =
    a_s+\displaystyle\sum_{i=1}^n z_i$,
    $
    \widetilde b_s
    =
    b_s+n-\displaystyle\sum_{i=1}^n z_i,
    $
    and 

    \For{$i=1,\ldots,n$}{
        Sample
        $
        z_i\sim\mathrm{Bernoulli}(\pi_i)
        $, with 
        $
        \pi_i
        =
        \dfrac{
        s f_{\mathrm{AL}_\tau}
        (y_i;\bm{x}_i^\top\bm{\beta}_\tau,\sigma u_i)
        }{
        (1-s)f_{\mathrm{AL}_\tau}
        (y_i;\bm{x}_i^\top\bm{\beta}_\tau,\sigma)
        +
        s f_{\mathrm{AL}_\tau}
        (y_i;\bm{x}_i^\top\bm{\beta}_\tau,\sigma u_i).
        }
        $

    }

    Sample
    $
    \sigma
    \sim
    \mathrm{IG}(\widetilde a_\sigma,\widetilde b_\sigma)
    $, with
    $
    \widetilde a_\sigma
    =
    a_\sigma+\dfrac{3n}{2},
    $ and 
    $
    \widetilde b_\sigma
    =
    b_\sigma
    +
    \sum_{i=1}^n\dfrac{\theta_i}{u_i^{z_i}}
    +
    \dfrac{1}{2\tau_2^2}
    \displaystyle\sum_{i=1}^n
    \dfrac{
    (y_i-\bm{x}_i^\top\bm{\beta}_\tau-\tau_1\theta_i)^2
    }{
    \theta_i u_i^{z_i}
    }.
    $

    \For{$i=1,\ldots,n$}{
        Sample
        $
        \theta_i
        \sim
        \mathrm{GIG}\left(
        \dfrac{1}{2},\chi_i,\psi_i
        \right),
        $ with
        $
        \chi_i
        =
        \dfrac{
        (y_i-\bm{x}_i^\top\bm{\beta}_\tau)^2
        }{
        \tau_2^2\sigma u_i^{z_i}
        }$ and 
        $
        \psi_i
        =
        \dfrac{\tau_1^2}
        {\tau_2^2\sigma u_i^{z_i}}
        +
        \dfrac{2}
        {\sigma u_i^{z_i}}.
        $

        \eIf{$z_i=1$}{
            Sample
            $
            u_i
            \sim
            \mathrm{GIG}\left(
            -\dfrac{1}{2},\,2v_i,\,\kappa_i
            \right)
            $, with
            $
            \kappa_i
            =
            \dfrac{1}{\sigma}
            \left\{
            \dfrac{
            (y_i-\bm{x}_i^\top\bm{\beta}_\tau-\tau_1\theta_i)^2
            }{
            \tau_2^2\theta_i
            }
            +2\theta_i
            \right\}$.                          
        }{
        Sample $u_i$ from $\mathrm{Ga}(1,v_i)$.
    }
    Sample
        $
        w_i
        \sim
        \mathrm{Ga}
        \left(
        1+\gamma,\,
        1+\log(1+u_i)
        \right),
        $
        followed by
        $
        v_i
        \sim
        \mathrm{Ga}
        \left(
        1+w_i,\,
        1+u_i
        \right).
        $        
    }

    Sample
    $
    \bm{\beta}_\tau
    \sim
    N_p(
    \widetilde{\bm{\mu}}_\beta,
    \widetilde{\bm{\Sigma}}_\beta),
    $ where 
    $
    \widetilde{\bm{\Sigma}}_\beta
    =
    \left\{
    \tau_\beta^{-2}\bm{\Lambda}^{-1}
    +
    \dfrac{1}{\sigma\tau_2^2}
    \bm{X}^\top\bm{D}\bm{X}
    \right\}^{-1}
    $ and 
    $
    \widetilde{\bm{\mu}}_\beta
    =
    \dfrac{1}{\sigma\tau_2^2}\widetilde{\bm{\Sigma}}_\beta   
    \bm{X}^\top\bm{D}\widetilde{\bm{y}},
    $
    with 
    $
    \bm{\Lambda}
    =
    \mathrm{diag}(\lambda_1^2,\ldots,\lambda_p^2)
    $,
    $
    \bm{D}
    =
    \mathrm{diag}
    \left(
    \theta_1^{-1}u_1^{-z_1},
    \ldots,
    \theta_n^{-1}u_n^{-z_n}
    \right),
    $
    and
    $
    \widetilde{\bm{y}}
    =
    (y_1-\tau_1\theta_1,\ldots,
     y_n-\tau_1\theta_n)^\top.
    $

    Sample
    $
    \tau_\beta^2
    \sim
    \mathrm{IG}
    \left(
    \dfrac{p+1}{2},
    \xi^{-1}
    +
    \dfrac{1}{2}
    \displaystyle\sum_{j=1}^p
    \frac{\beta_{j,\tau}^2}{\lambda_j^2}
    \right)$    
    and
    $
    \xi
    \sim
    \mathrm{IG}
    \left(
    1,\,
    \tau_0^{-2}+\tau_\beta^{-2}
    \right).
    $

    \For{$j=1,\ldots,p$}{
        Sample
        $
        \lambda_j^2
        \sim
        \mathrm{IG}
        \left(
        1,\,
        \nu_j^{-1}
        +
        \dfrac{\beta_{j,\tau}^2}{2\tau_\beta^2}
        \right)$,        
        followed by
        $
        \nu_j
        \sim
        \mathrm{IG}
        \left(
        1,\,
        1+\lambda_j^{-2}
        \right).
        $
    }

    \If{$t>B$}{
        Save
        $(\bm{\beta}_\tau,\sigma,s)$
        as a posterior draw.
    }
}

\end{algorithm}
\subsection{Variational Bayes}\label{sec3.2}

Although the Gibbs sampler developed in Section~\ref{sec3.1} provides a straightforward approach to posterior inference, repeated posterior sampling can become computationally demanding when the model is fitted over many quantile levels, simulation replications, or moderately high-dimensional datasets. We therefore develop a variational Bayes (VB) approximation as a computationally faster alternative. The basic idea is to approximate the joint posterior distribution by a tractable family of distributions and to optimize the approximation by minimizing its Kullback-Leibler divergence from the exact posterior distribution.

Variational Bayes approximates the posterior distribution $p(\Theta\mid\boldsymbol{y})$ by a distribution $q(\Theta)$ belonging to a tractable family $\mathcal{Q}$ \citep{Ormerod2010,Blei2017}. The optimal approximation is defined as
\[
q^{*}(\Theta)
=
\arg\min_{q\in\mathcal{Q}}
\operatorname{KL}
\left\{
q(\Theta)
\,\Vert\,
p(\Theta\mid\boldsymbol{y})
\right\},
\]
where
\[
\operatorname{KL}
\left\{
q(\Theta)
\,\Vert\,
p(\Theta\mid\boldsymbol{y})
\right\}
=
E_q
\left[
\log
\frac{q(\Theta)}
     {p(\Theta\mid\boldsymbol{y})}
\right].
\]
Since the marginal likelihood $p(\boldsymbol{y})$ is constant with respect to $q$, minimizing this Kullback-Leibler divergence is equivalent to maximizing the evidence lower bound (ELBO),
\[
\mathcal{L}(q)
=
E_q\{\log p(\boldsymbol{y},\Theta)\}
-
E_q\{\log q(\Theta)\}.
\]
We employ a mean-field approximation, under which the model parameters and latent variables are partitioned into mutually independent variational blocks.

Let $\Theta$ denote the collection of all model parameters and latent variables. Using the augmented representation introduced in Section~\ref{sec3.1}, we adopt the mean-field factorization
\begin{equation}\label{eqn:vbfactor}
q(\Theta)
=
q(\boldsymbol{\beta}_{\tau})q(\sigma)q(s)
\prod_{i=1}^{n}
\{q(z_i)q(\theta_i)q(u_i)q(v_i)q(w_i) \}q(\tau_{\beta}^{2})q(\xi)
\prod_{j=1}^{p}
\{q(\lambda_j^{2})q(\nu_j)\}.
\end{equation}
As in the MCMC implementation, the tail parameter $\gamma$ is treated as fixed, with $\gamma=1$ used throughout unless otherwise stated, and is therefore not included in the variational factorization.

For a generic block $\theta_j$ of $\Theta$, the coordinate-optimal variational distribution satisfies
\[
\log q_j^{*}(\theta_j)
=
E_{-\theta_j}
\left[
\log p(\boldsymbol{y},\Theta)
\right]
+\mathrm{const},
\]
where $E_{-\theta_j}(\cdot)$ denotes expectation with respect to all variational factors except $q_j(\theta_j)$. Owing to the hierarchical augmentation in Section~\ref{sec3.1}, most of the resulting variational factors belong to the same standard distributional families as the corresponding full conditional distributions in the Gibbs sampler.

The regression coefficients $\bm{\beta}_\tau$ have a multivariate normal variational distribution with mean parameter $\widetilde{\bm{\mu}}_{\beta}$ and covariance matrix $\widetilde{\bm{\Sigma}}_{\beta}$. The variational factors for the global and local horseshoe parameters remain inverse-gamma distributions under the augmentation described in Section~\ref{sec3.1}.  For the mixing probability $s$, we obtain the Beta distribution with two parameters $\widetilde{a}_s$ and $\widetilde{b}_s$. Additionally, the allocation variable has a Bernoulli variational distribution, with probability $\pi_i$. The quantity $\pi_i$ can be interpreted as the variational posterior probability that observation $i$ is allocated to the LPAL component.  The remaining observation-specific variational factors (i.e. $q(\theta_i)$ and $q(u_i$)) take the GIG forms. Also, $q(w_i)$ and $q(v_i)$ take the Gamma distributional form. Finally, the scale parameter retains an inverse-gamma variational distribution with shape parameter $\widetilde{a}_\sigma$ and rate parameter $\widetilde{b}_\sigma$. Algorithm~\ref{alg:vb} summarizes the resulting coordinate-ascent VB procedure.

\begin{algorithm}[!htp]
\small
\caption{Mean-field VB algorithm for the AL-LPAL quantile regression model}
\label{alg:vb}

\KwData{
$\bm{y}$, $\bm{X}$, quantile level $\tau$, fixed $\gamma$,
hyperparameters, convergence tolerance $\epsilon_{\mathrm{VB}}$.
}
\KwResult{
Variational posterior $q(\bm{\Theta})$ and the required posterior moments.
}

Initialize all variational factors in \eqref{eqn:vbfactor}
and their required moments\;
Compute the initial ELBO $\mathcal{L}^{(0)}$\;

\Repeat{the change in the ELBO is smaller than
$\epsilon_{\mathrm{VB}}$}{

Update $q(s)=\mathrm{Beta}(\widetilde{a}_s,\widetilde{b}_s)$, where $\widetilde{a}_s=a_s+\displaystyle\sum_{i=1}^{n}\mathbb{E}_q(z_i)$ and $\widetilde{b}_s=b_s+n-\displaystyle\sum_{i=1}^{n}\mathbb{E}_q(z_i)$.

\For{$i=1,\ldots,n$}{
    Update the LPAL allocation probability $\pi_i$
    in $q(z_i)=\mathrm{Bernoulli}(\pi_i)$ with $\pi_i=\dfrac{a_{i1}}{a_{i0}+a_{i1}}$, where $\log a_{i0}=\mathbb{E}_q(\log(1-s))-\dfrac{3}{2}-\mathbb{E}_q(\sigma^{-1})\left[\dfrac{\mathbb{E}_q(\theta_i^{-1})}{2\tau_2^2}\mathbb{E}_q((y_i-\bm{x}_i^\top\bm{\beta}_\tau)^2)-\dfrac{\tau_1}{\tau_2^2}\mathbb{E}_q(y_i-\bm{x}_i^\top\bm{\beta}_\tau)+\mathbb{E}_q(\theta_i)\left(1+\dfrac{\tau_1^2}{2\tau_2^2}\right)\right]$ and $\log a_{i1}=\mathbb{E}_q(\log s)-\dfrac{3}{2}\mathbb{E}_q(\log u_i)-\mathbb{E}_q(\sigma^{-1})\mathbb{E}_q(u_i^{-1})\left[\dfrac{\mathbb{E}_q(\theta_i^{-1})}{2\tau_2^2}\mathbb{E}((y_i-\bm{x}_i^\top\bm{\beta}_\tau)^2)-\dfrac{\tau_1}{\tau_2^2}\mathbb{E}_q(y_i-\bm{x}_i^\top\bm{\beta}_\tau)+\mathbb{E}_q(\theta_i)\left(1+\dfrac{\tau_1^2}{2\tau_2^2}\right)\right]$.

    Update $q(\theta_i)=\mathrm{GIG}\left(\dfrac{1}{2},\widetilde{\chi}_{\theta_i},\widetilde{\psi}_{\theta_i}\right)$, where $\widetilde{\chi}_{\theta_i}=\dfrac{\mathbb{E}_q(\sigma^{-1})}{\tau_2^2}\mathbb{E}_q((y_i-\bm{x}_i^\top\bm{\beta}_\tau)^2)\mathbb{E}_q(u_i^{-z_i})$ and $\widetilde{\psi}_{\theta_i}=\mathbb{E}_q(\sigma^{-1})\mathbb{E}_q(u_i^{-z_i})\left(2+\dfrac{\tau_1^2}{\tau_2^2}\right)$.
    
    Update $q(u_i)=\mathrm{GIG}\left(\widetilde{p}_{u_i},\widetilde{\chi}_{u_i},\widetilde{\psi}_{u_i}\right)$, where $\widetilde{p}_{u_i}=1-\dfrac{3}{2}\mathbb{E}_q(z_i)$, $\widetilde{\chi}_{u_i}=\mathbb{E}_q(z_i)\mathbb{E}(\sigma^{-1})\left[\dfrac{\mathbb{E}_q(\theta_i^{-1})}{\tau_2^2}\mathbb{E}_q((y_i-\bm{x}_i^\top\bm{\beta}_\tau)^2)-\dfrac{2\tau_1}{\tau_2^2}\mathbb{E}_q(y_i-\bm{x}_i^\top\bm{\beta}_\tau)+\left(2+\dfrac{\tau_1^2}{\tau_2^2}\mathbb{E}_q(\theta_i)\right)\right]$, and $\widetilde{\psi}_{u_i}=2\mathbb{E}_q(v_i)$.

    Update $q(w_i)=\mathrm{Ga}(1+\gamma,1+\mathbb{E}_q\{\log(1+u_i)\})$.
    
    Update $q(v_i)=\mathrm{Ga}(1+\mathbb{E}_q(w_i),1+\mathbb{E}_q(u_i))$.
}

Update $q(\sigma)=\mathrm{IG}(\widetilde{a}_\sigma,\widetilde{b}_\sigma)$, with $\widetilde{a}_\sigma=a_\sigma+\dfrac{3n}{2}$ and $\widetilde{b}_\sigma=b_\sigma+\displaystyle\sum_{i=1}^{n}\mathbb{E}_q(\theta_i)\mathbb{E}_q(u_i^{-z_i})+\dfrac{1}{2\tau_2^2}\displaystyle\sum_{i=1}^{n}\left\{(\mathbb{E}_q((y_i-\bm{x}_i^\top\bm{\beta}_\tau)^2)\mathbb{E}_q(\theta_i^{-1}) - 2\tau_1\mathbb{E}_q(y_i-\bm{x}_i^\top\bm{\beta}_\tau)+\tau_1^2\mathbb{E}_q(\theta_i))\cdot \mathbb{E}_q(u_i^{-z_i})\right\}$.

Update $q(\bm{\beta}_{\tau})=N_p(\widetilde{\bm{\mu}}_{\beta},\widetilde{\bm{\Sigma}}_{\beta})$, where
\[
\widetilde{\bm{\Sigma}}_{\beta}=\left\{\mathbb{E}_q(\tau_\beta^{-2})\bm{\Lambda}^{-1}+\dfrac{\mathbb{E}_q(\sigma^{-1})}{\tau_2^2}\bm{X}^\top\bm{D}\bm{X}\right\}^{-1},\quad \widetilde{\bm{\mu}}_\beta=\frac{\mathbb{E}_q(\sigma^{-1})}{\tau_2^2}\widetilde{\bm{\Sigma}}_\beta\bm{X}^\top\bm{D}\widetilde{\bm{y}},
\]
with $\bm{\Lambda}^{-1}=\mathrm{diag}(\mathbb{E}_q(\lambda_j^{-2}))_{j=1}^{p}$, $\bm{D}=\mathrm{diag}(\mathbb{E}_q(\theta_i^{-1})\mathbb{E}_q(u_i^{-z_i}))_{i=1}^{n}$, and $\widetilde{\bm{y}}=(y_1-\tau_1\mathbb{E}_q(\theta_1),\ldots,y_n-\tau_1\mathbb{E}_q(\theta_n))^\top$.

Update $q(\tau_{\beta}^{2})=\mathrm{IG}\left(\dfrac{p+1}{2},\mathbb{E}_q(\xi^{-1})+\dfrac{1}{2}\displaystyle\sum_{j=1}^{p}\mathbb{E}_q(\beta_{j,\tau}^2)\mathbb{E}_q(\lambda_j^{-2})\right)$.

Update $q(\xi)=\mathrm{IG}(1,\tau_0^{-2}+\mathbb{E}_q(\tau_\beta^{-2}))$.

\For{$j=1,\ldots,p$}{
    Update $q(\lambda_j^{2})=\mathrm{IG}\left(1,\mathbb{E}_q(\nu_j^{-1})+\dfrac{1}{2}\mathbb{E}_q(\beta_{j,\tau}^2)\mathbb{E}_q(\tau_{\beta}^{-2})\right)$.
    
    Update $q(\nu_j)=\mathrm{IG}\left(1,1+\mathbb{E}_q(\lambda_j^{-2})\right)$.
}
}

\end{algorithm}

The VB algorithm replaces posterior sampling by deterministic coordinate updates and is therefore useful when the model must be fitted repeatedly. The mean-field factorization, however, ignores posterior dependence among the parameter blocks and may lead to a loss of accuracy in approximating posterior uncertainty. We therefore use the Gibbs sampler as the reference procedure for full posterior inference and regard VB as a computationally efficient approximation. The numerical behavior and computational costs of the two approaches are compared in Section~\ref{sec4}.

\section{Numerical studies}\label{sec4}

\subsection{Simulation design}
\label{sec4.1}

We conduct simulation studies to evaluate the finite-sample performance of the proposed AL-LPAL model under various forms of contamination. The simulation design is intended to examine two main aspects of the proposed method. First, we investigate whether the strong concentration of the AL-LPAL density around the target quantile leads to efficient estimation and uncertainty quantification when most observations are generated from the regular component of the data-generating distribution. Second, we examine the robustness of the proposed method when a fraction of the observations is subject to severe one-sided or two-sided contamination. 

We generate the response according to the linear model
$
y_i
=
\boldsymbol{x}_i^\top\boldsymbol{\beta}
+
\epsilon_i,
$ for $i=1,\ldots,n$, where the covariate vector $\boldsymbol{x}_i=(x_{i1},\ldots,x_{ip})^\top$ is independently generated from
$
\boldsymbol{x}_i
\sim
N_p(\boldsymbol{0},\boldsymbol{\Sigma}),
$
with
$
\boldsymbol{\Sigma}
=
\left(
\rho^{|j-k|}
\right)_{1\leq j,k\leq p},
$ 
with $\rho=0.5$. Throughout the simulation study, we set the sample size to $n=100$ and consider both a low-dimensional setting with $p=3$ and a moderately high-dimensional setting with $p=20$. The models are fitted at the quantile levels $\tau\in\{0.1,0.5,0.9\}.$

We consider six error distributions designed to represent different combinations of baseline tail behavior and contamination. The first three settings involve contamination in one direction:
\[
\begin{array}{ll}
\text{Case 1:}
&
\epsilon_i
\sim
0.8N(\mu,0.5^2)
+
0.2N(\mu+20,1),
\\[3pt]
\text{Case 2:}
&
\epsilon_i
\sim
0.8\mathcal{L}(\mu,0.5)
+
0.2N(\mu+20,1),
\\[3pt]
\text{Case 3:}
&
\epsilon_i
\sim
0.8t_{3}(\mu,0.5)
+
0.2N(\mu+20,1),
\end{array}
\]
where $N(\mu,\sigma^2)$, $\mathcal{L}(\mu,\sigma)$, and $t_{\nu}(\mu,\sigma)$ denote the normal, Laplace, and Student-$t$ distributions, respectively. $\nu$ indicates the parameter for degrees of freedom. These cases allow us to assess the effect of extreme contamination as the baseline error distribution changes from light-tailed to increasingly heavy-tailed.

The remaining three settings are designed to examine two-sided contamination, with extreme observations appearing on both sides of the bulk of the distribution. These scenarios are denoted by Cases 4-6. Their precise specifications are given by
\[
\begin{array}{ll}
\text{Case 4:}
&
\epsilon_i
\sim
0.75N(\mu,0.5^2)
+
0.125N(\mu+20,1)
+
0.125N(\mu-30,1),
\\[3pt]
\text{Case 5:}
&
\epsilon_i
\sim
0.75\mathcal{L}(\mu,0.5)
+
0.125N(\mu+20,1)
+
0.125N(\mu-30,1),
\\[3pt]
\text{Case 6:}
&
\epsilon_i
\sim
0.75t_{3}(\mu,0.5)
+
0.125N(\mu+20,1)
+
0.125N(\mu-30,1).
\end{array}
\]
The two-sided contamination settings are particularly relevant for the proposed method because the AL-LPAL distribution possesses log-regularly varying tails in both directions, in contrast to the one-sided behavior of the direct location-scale mixture discussed in Section~\ref{sec2.2}. For each quantile level $\tau$, the location parameter $\mu$ is chosen such that the $\tau$-th quantile of the corresponding error distribution is zero. Consequently, $Q_\tau(y_i\mid\boldsymbol{x}_i) =\boldsymbol{x}_i^\top\boldsymbol{\beta}$, so that the same coefficient vector represents the true regression coefficient at each fitted quantile level.

For the low-dimensional setting, we set
\[
\boldsymbol{\beta}
=
(1,-2,1)^\top.
\]
For $p=20$, we consider the following three coefficient configurations:
\[
\begin{split}
\boldsymbol{\beta}^{(1)}
={}&
(3,0.5,0,1,0,0,1.5,1,0,0,1,0,\ldots,0)^\top,\\
\boldsymbol{\beta}^{(2)}
={}&
(0.85,\ldots,0.85,0,\ldots,0)^\top,\\
\boldsymbol{\beta}^{(3)}
={}&
(5,0,\ldots,0)^\top.
\end{split}
\]
In the second configuration, half of the regression coefficients are nonzero and the remaining half are zero. These settings represent different degrees and patterns of sparsity and are used to evaluate the performance of the horseshoe prior in combination with the robust AL-LPAL likelihood.

We compare the proposed method with several existing Bayesian quantile regression procedures. The proposed model is implemented using both the Gibbs sampler developed in Section~\ref{sec3.1} (AL-LPAL) and the variational Bayes approximation developed in Section~3.2 (AL-LPAL-VB), where AL-LPAL denotes the MCMC implementation of the proposed model, whereas AL-LPAL-VB denotes its variational Bayes approximation. The competing methods are the generalized asymmetric Huberised-type distribution of \cite{Hu2024}, denoted by GAH; the generalized asymmetric Laplace distribution of \cite{Yan2025}, denoted by GAL; and the skew-$t$ specification of \cite{Morales2017}, denoted by skew-$t$. For the proposed method, the log-Pareto tail parameter is fixed at $\gamma=1$, as described in Section~\ref{sec3.1}. The competing methods were implemented using the prior and hyperparameter specifications recommended in the corresponding original studies.

Performance is evaluated using the root mean squared error (RMSE), coverage probability (CP), and average length (AvL) of the $95\%$ credible intervals for the regression coefficients. Specifically, for the $r$th replication, let $\widehat{\boldsymbol{\beta}}_{\tau}^{(r)}$ denote the posterior point estimate of the regression coefficient vector at quantile level $\tau$.
We calculate
$
\operatorname{RMSE}
=
\sqrt{
\dfrac{1}{p}
\sum_{j=1}^{p}
\left(
\widehat{\beta}_{j,\tau}
-
\beta_{j,\tau}
\right)^2
},
$
with the reported values averaged over simulation replications. Coverage probability is defined as the proportion of the nominal $95\%$ credible intervals that contain the corresponding true regression coefficients, and AvL denotes the average width of these intervals. Together, RMSE measures point-estimation accuracy, whereas CP and AvL characterize the quality and efficiency of posterior uncertainty quantification.

Each simulation setting is repeated 500 times. To ensure a fair comparison, all competing methods were fitted using the same prior specification for the regression coefficients within each dimensional setting. Specifically, an independent normal prior with mean 0 and variance $10^3$ was used in the low-dimensional setting with $p=3$, whereas the horseshoe prior was used in the moderately high-dimensional setting with $p=20$. For the MCMC-based methods, we generated 10,000 iterations, discarding the first 5,000 iterations as burn-in and retaining the remaining 5,000 draws for posterior inference. For the proposed variational Bayes method, the coordinate-ascent updates were terminated when the relative change in the evidence lower bound (ELBO) fell below $10^{-5}$. All computations were carried out in R 4.5.2 under Windows on a workstation equipped with an Intel Core Ultra 9 285K processor and 64 GB of RAM. The resulting performance measures are summarized separately for the low-dimensional and moderately high-dimensional settings in the following subsections.

For the proposed method, we set $\gamma=1$ and assign $s\sim\mathrm{Beta}(1,1)$ as the default specification, with the prior specification $\sigma\sim \text{IG}(0.1,0.1)$. We also examined the sensitivity of the proposed method to the tail parameter $\gamma$ and the prior distribution of the mixture weight $s$. Specifically, we considered $\gamma\in\{0.5,1,2\}$ together with several Beta prior specifications for $s$. The resulting RMSEs were nearly indistinguishable across the specifications considered, indicating that the point-estimation performance of the proposed method is not particularly sensitive to these hyperparameter choices.

\subsection{Simulation results}
\label{sec4.2}

\begin{table}[!ht]
    \centering
    \small
    \begin{tabular}{ccc|ccc}    
        \hline
        Case&Quantile&Method&RMSE&CP&AvL\\
        \hline
        \multirow{15}{*}{Case 1}&\multirow{5}{*}{0.1}&AL-LPAL-VB&0.133(0.045)&0.685(0.207)&0.280(0.029)\\
        &&AL-LPAL&0.112(0.044)&0.726(0.244)&0.264(0.071)\\
        &&GAH&0.104(0.038)&0.873(0.173)&0.358(0.063)\\
        &&GAL&0.289(0.171)&0.858(0.124)&1.188(0.417)\\
        &&Skew-$t$&\textbf{0.097}(0.040)&0.997(0.040)&0.783(0.138)\\
        \cline{2-6}
        &\multirow{5}{*}{0.5}&AL-LPAL-VB&0.101(0.045)&0.646(0.267)&0.206(0.025)\\
        &&AL-LPAL&0.086(0.035)&0.843(0.225)&0.268(0.060)\\
        &&GAH&0.095(0.035)&0.939(0.115)&0.414(0.056)\\
        &&GAL&0.549(0.032)&0.750(0.000)&1.242(0.058)\\
        &&Skew-$t$&\textbf{0.064}(0.026)&1.000(0.000)&0.984(0.099)\\
        \cline{2-6}
        &\multirow{5}{*}{0.9}&AL-LPAL-VB&0.126(0.044)&0.732(0.217)&0.312(0.034)\\
        &&AL-LPAL&\textbf{0.113}(0.043)&0.722(0.258)&0.265(0.076)\\
        &&GAH&0.617(1.741)&0.871(0.150)&1.139(1.603)\\
        &&GAL&6.908(0.211)&0.613(0.165)&5.683(0.471)\\
        &&Skew-$t$&9.201(0.131)&0.750(0.000)&7.255(0.869)\\
        \hline
        \multirow{15}{*}{Case 2}&\multirow{5}{*}{0.1}&AL-LPAL-VB&0.215(0.050)&0.589(0.198)&0.310(0.040)\\
        &&AL-LPAL&0.192(0.057)&0.620(0.247)&0.337(0.119)\\
        &&GAH&0.166(0.045)&0.807(0.176)&0.451(0.097)\\
        &&GAL&0.309(0.165)&0.886(0.125)&1.289(0.406)\\
        &&Skew-$t$&\textbf{0.157}(0.064)&0.988(0.055)&1.040(0.207)\\
        \cline{2-6}
        &\multirow{5}{*}{0.5}&AL-LPAL-VB&0.086(0.044)&0.757(0.259)&0.215(0.029)\\
        &&AL-LPAL&\textbf{0.076}(0.036)&0.880(0.201)&0.271(0.070)\\
        &&GAH&0.092(0.035)&0.948(0.109)&0.424(0.062)\\
        &&GAL&0.571(0.040)&0.750(0.000)&1.320(0.066)\\
        &&Skew-$t$&\textbf{0.076}(0.031)&1.000(0.000)&1.281(0.166)\\
        \cline{2-6}
        &\multirow{5}{*}{0.9}&AL-LPAL-VB&\textbf{0.215}(0.130)&0.596(0.224)&0.345(0.044)\\
        &&AL-LPAL&0.225(0.606)&0.620(0.264)&0.341(0.178)\\
        &&GAH&0.649(1.664)&0.903(0.144)&1.325(1.553)\\
        &&GAL&6.839(0.213)&0.617(0.176)&5.693(0.485)\\
        &&Skew-$t$&9.111(0.138)&0.750(0.000)&7.266(0.891)\\
        \hline
        \multirow{15}{*}{Case 3}&\multirow{5}{*}{0.1}&AL-LPAL-VB&0.214(0.066)&0.581(0.207)&0.316(0.041)\\
        &&AL-LPAL&0.192(0.267)&0.597(0.255)&0.329(0.114)\\
        &&GAH&0.161(0.048)&0.830(0.169)&0.473(0.099)\\
        &&GAL&0.420(0.258)&0.808(0.151)&1.304(0.417)\\
        &&Skew-$t$&\textbf{0.155}(0.072)&0.992(0.045)&1.110(0.229)\\
        \cline{2-6}
        &\multirow{5}{*}{0.5}&AL-LPAL-VB&0.108(0.049)&0.682(0.258)&0.233(0.031)\\
        &&AL-LPAL&0.093(0.041)&0.870(0.202)&0.312(0.077)\\
        &&GAH&0.097(0.041)&0.976(0.083)&0.475(0.070)\\
        &&GAL&0.620(0.072)&0.749(0.016)&1.336(0.097)\\
        &&Skew-$t$&\textbf{0.078}(0.034)&1.000(0.000)&1.367(0.180)\\
        \cline{2-6}
        &\multirow{5}{*}{0.9}&AL-LPAL-VB&\textbf{0.214}(0.154)&0.619(0.221)&0.358(0.042)\\
        &&AL-LPAL&0.271(0.915)&0.642(0.253)&0.366(0.253)\\
        &&GAH&0.331(0.565)&0.879(0.163)&1.073(0.368)\\
        &&GAL&6.787(0.163)&0.587(0.181)&4.191(0.558)\\
        &&Skew-$t$&9.119(0.128)&0.750(0.000)&7.272(0.872)\\
        \hline
    \end{tabular}
    \caption{Simulation results under one-sided contamination (Cases 1–3) for the low-dimensional setting with $n=100$ and $p=3$. Results are based on 500 simulation replications, with Monte Carlo standard deviations reported in parentheses.}
    \label{tab1}
\end{table}

\begin{table}[!ht]
    \centering
    \small
    \begin{tabular}{ccc|ccc}
        \hline
        Case&Quantile&Method&RMSE&CP&AvL\\
        \hline
        \multirow{15}{*}{Case 4}&\multirow{5}{*}{0.1}&AL-LPAL-VB&0.130(0.081)&0.834(0.196)&0.385(0.041)\\
        &&AL-LPAL&\textbf{0.126}(0.488)&0.673(0.276)&0.272(0.084)\\
        &&GAH&0.158(0.048)&0.900(0.122)&0.729(0.078)\\
        &&GAL&6.073(0.365)&0.682(0.128)&6.805(0.302)\\
        &&Skew-$t$&10.664(3.680)&0.633(0.227)&10.676(3.378)\\
        \cline{2-6}
        &\multirow{5}{*}{0.5}&AL-LPAL-VB&0.106(0.049)&0.741(0.249)&0.257(0.030)\\
        &&AL-LPAL&0.097(0.042)&0.812(0.236)&0.279(0.069)\\
        &&GAH&\textbf{0.083}(0.034)&0.996(0.031)&0.507(0.064)\\
        &&GAL&0.192(0.042)&1.000(0.000)&1.682(0.085)\\
        &&Skew-$t$&0.065(0.028)&1.000(0.000)&1.554(0.159)\\
        \cline{2-6}
        &\multirow{5}{*}{0.9}&AL-LPAL-VB&0.128(0.077)&0.818(0.199)&0.375(0.038)\\
        &&AL-LPAL&\textbf{0.124}(0.046)&0.680(0.276)&0.268(0.084)\\
        &&GAH&0.133(0.046)&0.945(0.104)&0.671(0.076)\\
        &&GAL&4.188(0.270)&0.714(0.102)&5.318(0.287)\\
        &&Skew-$t$&5.698(2.955)&0.655(0.239)&7.499(3.407)\\
        \hline
        \multirow{15}{*}{Case 5}&\multirow{5}{*}{0.1}&AL-LPAL-VB&0.224(0.169)&0.678(0.229)&0.428(0.052)\\
        &&AL-LPAL&0.205(0.064)&0.619(0.264)&0.341(0.121)\\
        &&GAH&\textbf{0.187}(0.078)&0.952(0.100)&0.918(0.136)\\
        &&GAL&6.010(0.364)&0.685(0.124)&6.823(0.308)\\
        &&Skew-$t$&10.322(3.762)&0.624(0.237)&10.522(3.334)\\
        \cline{2-6}
        &\multirow{5}{*}{0.5}&AL-LPAL-VB&0.094(0.079)&0.833(0.223)&0.268(0.033)\\
        &&AL-LPAL&0.083(0.040)&0.870(0.195)&0.275(0.073)\\
        &&GAH&\textbf{0.076}(0.033)&0.998(0.027)&0.517(0.068)\\
        &&GAL&0.207(0.050)&1.000(0.000)&1.777(0.100)\\
        &&Skew-$t$&0.079(0.034)&1.000(0.000)&2.067(0.284)\\
        \cline{2-6}
        &\multirow{5}{*}{0.9}&AL-LPAL-VB&0.215(0.143)&0.673(0.208)&0.420(0.050)\\
        &&AL-LPAL&0.204(0.059)&0.595(0.253)&0.338(0.128)\\
        &&GAH&\textbf{0.168}(0.076)&0.965(0.087)&0.868(0.128)\\
        &&GAL&4.146(0.286)&0.714(0.102)&5.340(0.305)\\
        &&Skew-$t$&5.171(3.076)&0.657(0.264)&7.089(3.344)\\
        \hline
        \multirow{15}{*}{Case 6}&\multirow{5}{*}{0.1}&AL-LPAL-VB&0.232(0.203)&0.668(0.238)&0.429(0.055)\\
        &&AL-LPAL&\textbf{0.199}(0.063)&0.590(0.258)&0.330(0.122)\\
        &&GAH&0.259(0.126)&0.938(0.126)&1.058(0.181)\\
        &&GAL&5.639(0.192)&0.627(0.158)&4.303(0.636)\\
        &&Skew-$t$&10.326(3.794)&0.628(0.239)&10.552(3.368)\\
        \cline{2-6}
        &\multirow{5}{*}{0.5}&AL-LPAL-VB&0.116(0.074)&0.763(0.251)&0.286(0.036)\\
        &&AL-LPAL&0.103(0.046)&0.835(0.219)&0.315(0.087)\\
        &&GAH&0.094(0.049)&0.996(0.040)&0.564(0.081)\\
        &&GAL&0.470(0.167)&0.994(0.046)&1.772(0.086)\\
        &&Skew-$t$&\textbf{0.082}(0.036)&1.000(0.000)&2.222(0.321)\\
        \cline{2-6}
        &\multirow{5}{*}{0.9}&AL-LPAL-VB&0.227(0.198)&0.684(0.231)&0.429(0.052)\\
        &&AL-LPAL&\textbf{0.192}(0.062)&0.631(0.256)&0.345(0.132)\\
        &&GAH&0.234(0.122)&0.947(0.130)&1.006(0.197)\\
        &&GAL&3.969(0.137)&0.613(0.165)&3.429(0.484)\\
        &&Skew-$t$&5.068(3.098)&0.665(0.259)&7.067(3.339)\\
        \hline
    \end{tabular}
    \caption{Simulation results under two-sided contamination (Cases 4–6) for the low-dimensional setting with $n=100$ and $p=3$. Results are based on 500 simulation replications, with Monte Carlo standard deviations reported in parentheses.}
    \label{tab2}
\end{table}

\begin{table}[!ht]
    \centering
    \small
    \begin{tabular}{ccc|ccc}
        \hline
        Case&Quantile&Method&RMSE&CP&AvL\\
        \hline
        \multirow{15}{*}{Case 1}&\multirow{5}{*}{0.1}&AL-LPAL-VB&0.078(0.017)&0.958(0.032)&0.345(0.023)\\
        &&AL-LPAL&\textbf{0.064}(0.020)&0.922(0.060)&0.145(0.033)\\
        &&GAH&0.065(0.016)&0.983(0.028)&0.336(0.036)\\
        &&GAL&0.298(0.140)&0.958(0.024)&1.059(0.287)\\
        &&Skew-$t$&0.119(0.023)&1.000(0.004)&1.354(0.157)\\
        \cline{2-6}
        &\multirow{5}{*}{0.5}&AL-LPAL-VB&0.056(0.023)&0.986(0.002)&0.310(0.177)\\
        &&AL-LPAL&\textbf{0.046}(0.016)&0.969(0.026)&0.169(0.023)\\
        &&GAH&0.052(0.014)&0.999(0.008)&0.376(0.033)\\
        &&GAL&0.312(0.032)&0.951(0.008)&1.036(0.039)\\
        &&Skew-$t$&0.105(0.173)&1.000(0.002)&1.264(0.851)\\
        \cline{2-6}
        &\multirow{5}{*}{0.9}&AL-LPAL-VB&\textbf{0.073}(0.018)&0.974(0.031)&0.361(0.026)\\
        &&AL-LPAL&0.084(0.263)&0.925(0.059)&0.156(0.153)\\
        &&GAH&1.762(1.233)&0.910(0.061)&2.890(1.533)\\
        &&GAL&2.978(0.109)&0.894(0.050)&3.224(0.382)\\
        &&Skew-$t$&3.850(0.277)&0.933(0.042)&11.831(1.705)\\
        \hline
        \multirow{15}{*}{Case 2}&\multirow{5}{*}{0.1}&AL-LPAL-VB&0.109(0.019)&0.939(0.030)&0.373(0.032)\\
        &&AL-LPAL&0.100(0.027)&0.908(0.063)&0.185(0.050)\\
        &&GAH&\textbf{0.091}(0.019)&0.974(0.031)&0.418(0.0600)\\
        &&GAL&0.299(0.130)&0.955(0.027)&1.094(0.260)\\
        &&Skew-$t$&0.176(0.042)&1.000(0.004)&1.856(0.303)\\
        \cline{2-6}
        &\multirow{5}{*}{0.5}&AL-LPAL-VB&0.061(0.022)&0.982(0.032)&0.329(0.028)\\
        &&AL-LPAL&\textbf{0.046}(0.017)&0.975(0.039)&0.185(0.034)\\
        &&GAH&0.059(0.017)&0.998(0.012)&0.417(0.043)\\
        &&GAL&0.325(0.036)&0.951(0.009)&1.068(0.045)\\
        &&Skew-$t$&0.137(0.225)&1.000(0.004)&1.714(1.084)\\
        \cline{2-6}
        &\multirow{5}{*}{0.9}&AL-LPAL-VB&\textbf{0.102}(0.022)&0.946(0.036)&0.390(0.035)\\
        &&AL-LPAL&0.131(0.346)&0.909(0.067)&0.205(0.242)\\
        &&GAH&1.779(1.201)&0.911(0.063)&2.965(1.470)\\
        &&GAL&2.945(0.112)&0.895(0.050)&3.215(0.384)\\
        &&Skew-$t$&3.819(0.274)&0.932(0.042)&11.818(1.734)\\
        \hline
        \multirow{15}{*}{Case 3}&\multirow{5}{*}{0.1}&AL-LPAL-VB&0.110(0.020)&0.935(0.034)&0.382(0.031)\\
        &&AL-LPAL&0.098(0.026)&0.914(0.060)&0.188(0.048)\\
        &&GAH&\textbf{0.089}(0.020)&0.976(0.031)&0.428(0.060)\\
        &&GAL&0.313(0.127)&0.955(0.024)&1.128(0.255)\\
        &&Skew-$t$&0.194(0.063)&1.000(0.004)&2.023(0.404)\\
        \cline{2-6}
        &\multirow{5}{*}{0.5}&AL-LPAL-VB&0.067(0.021)&0.979(0.032)&0.343(0.028)\\
        &&AL-LPAL&\textbf{0.054}(0.019)&0.971(0.040)&0.202(0.034)\\
        &&GAH&0.062(0.017)&0.999(0.008)&0.437(0.043)\\
        &&GAL&0.329(0.035)&0.951(0.008)&1.086(0.049)\\
        &&Skew-$t$&0.144(0.238)&1.000(0.004)&1.857(1.1)\\
        \cline{2-6}
        &\multirow{5}{*}{0.9}&AL-LPAL-VB&\textbf{0.105}(0.023)&0.946(0.037)&0.401(0.034)\\
        &&AL-LPAL&0.182(0.549)&0.912(0.064)&0.238(0.332)\\
        &&GAH&1.850(1.165)&0.908(0.059)&3.066(1.417)\\
        &&GAL&2.943(0.117)&0.897(0.048)&3.218(0.355)\\
        &&Skew-$t$&3.801(0.264)&0.932(0.043)&11.843(1.749)\\
        \hline
    \end{tabular}
    \caption{Simulation results under one-sided contamination (Cases 1–3) for the moderately high-dimensional setting with $n=100$, $p=20$, and $\bm{\beta}=\bm{\beta}^{(1)}$. Results are based on 500 simulation replications, with Monte Carlo standard deviations reported in parentheses.}
    \label{tab3}
\end{table}

\begin{table}[!ht]
    \centering
    \small
    \begin{tabular}{ccc|ccc}
        \hline
        Case&Quantile&Method&RMSE&CP&AvL\\
        \hline
        \multirow{15}{*}{Case 4}&\multirow{5}{*}{0.1}&AL-LPAL-VB&0.078(0.021)&0.982(0.028)&0.430(0.027)\\
        &&AL-LPAL&\textbf{0.069}(0.021)&0.925(0.062)&0.156(0.035)\\
        &&GAH&0.138(0.026)&0.965(0.021)&0.732(0.049)\\
        &&GAL&2.852(0.151)&0.909(0.044)&3.581(0.607)\\
        &&Skew-$t$&4.945(0.570)&0.910(0.053)&16.127(2.965)\\
        \cline{2-6}
        &\multirow{5}{*}{0.5}&AL-LPAL-VB&0.062(0.019)&0.988(0.024)&0.369(0.025)\\
        &&AL-LPAL&\textbf{0.050}(0.015)&0.969(0.041)&0.183(0.031)\\
        &&GAH&0.055(0.015)&1.000(0.003)&0.479(0.038)\\
        &&GAL&0.272(0.051)&0.996(0.014)&1.413(0.074)\\
        &&Skew-$t$&0.087(0.018)&1.000(0.000)&1.944(0.214)\\
        \cline{2-6}
        &\multirow{5}{*}{0.9}&AL-LPAL-VB&0.078(0.021)&0.983(0.025)&0.422(0.026)\\
        &&AL-LPAL&\textbf{0.068}(0.022)&0.926(0.063)&0.154(0.036)\\
        &&GAH&0.119(0.023)&0.976(0.024)&0.679(0.047)\\
        &&GAL&2.116(0.118)&0.912(0.042)&2.894(0.429)\\
        &&Skew-$t$&3.339(0.392)&0.933(0.033)&13.104(2.524)\\
        \hline
        \multirow{15}{*}{Case 5}&\multirow{5}{*}{0.1}&AL-LPAL-VB&0.108(0.025)&0.959(0.035)&0.462(0.036)\\
        &&AL-LPAL&\textbf{0.103}(0.026)&0.907(0.062)&0.192(0.050)\\
        &&GAH&0.164(0.042)&0.972(0.025)&0.848(0.078)\\
        &&GAL&2.823(0.145)&0.907(0.044)&3.599(0.599)\\
        &&Skew-$t$&4.911(0.568)&0.910(0.052)&16.149(3.058)\\
        \cline{2-6}
        &\multirow{5}{*}{0.5}&AL-LPAL-VB&0.091(0.025)&0.983(0.028)&0.391(0.032)\\
        &&AL-LPAL&\textbf{0.051}(0.021)&0.975(0.039)&0.199(0.038)\\
        &&GAH&0.066(0.020)&0.999(0.007)&0.532(0.051)\\
        &&GAL&0.279(0.052)&0.996(0.016)&1.440(0.083)\\
        &&Skew-$t$&0.110(0.025)&1.000(0.000)&2.626(0.368)\\
        \cline{2-6}
        &\multirow{5}{*}{0.9}&AL-LPAL-VB&\textbf{0.109}(0.028)&0.955(0.036)&0.456(0.037)\\
        &&AL-LPAL&\textbf{0.109}(0.113)&0.909(0.062)&0.199(0.113)\\
        &&GAH&0.150(0.041)&0.981(0.024)&0.809(0.081)\\
        &&GAL&2.078(0.115)&0.912(0.043)&2.899(0.433)\\
        &&Skew-$t$&3.297(0.376)&0.933(0.031)&13.207(2.520)\\
        \hline
        \multirow{15}{*}{Case 6}&\multirow{5}{*}{0.1}&AL-LPAL-VB&0.108(0.026)&0.960(0.037)&0.473(0.039)\\
        &&AL-LPAL&\textbf{0.101}(0.025)&0.910(0.062)&0.196(0.051)\\
        &&GAH&0.170(0.044)&0.973(0.024)&0.875(0.085)\\
        &&GAL&2.823(0.151)&0.909(0.044)&3.597(0.613)\\
        &&Skew-$t$&4.906(0.569)&0.911(0.052)&16.246(3.050)\\
        \cline{2-6}
        &\multirow{5}{*}{0.5}&AL-LPAL-VB&0.073(0.024)&0.984(0.027)&0.406(0.033)\\
        &&AL-LPAL&\textbf{0.057}(0.021)&0.971(0.042)&0.216(0.039)\\
        &&GAH&0.067(0.020)&0.999(0.006)&0.551(0.051)\\
        &&GAL&0.283(0.054)&0.995(0.015)&1.452(0.090)\\
        &&Skew-$t$&0.117(0.027)&1.000(0.000)&2.823(0.440)\\
        \cline{2-6}
        &\multirow{5}{*}{0.9}&AL-LPAL-VB&0.107(0.027)&0.962(0.036)&0.466(0.039)\\
        &&AL-LPAL&\textbf{0.104}(0.085)&0.915(0.058)&0.208(0.149)\\
        &&GAH&0.159(0.047)&0.978(0.024)&0.842(0.093)\\
        &&GAL&2.081(0.115)&0.913(0.041)&2.904(0.443)\\
        &&Skew-$t$&3.295(0.380)&0.933(0.032)&13.085(2.588)\\
        \hline
    \end{tabular}
    \caption{Simulation results under two-sided contamination (Cases 4–6) for the moderately high-dimensional setting with $n=100$, $p=20$, and $\bm{\beta}=\bm{\beta}^{(1)}$. Results are based on 500 simulation replications, with Monte Carlo standard deviations reported in parentheses.}
    \label{tab4}
\end{table}

Tables~\ref{tab1}-\ref{tab4} summarize the simulation results in terms of root mean squared error (RMSE), coverage probability (CP), and average length (AvL) of the $95\%$ credible intervals. Overall, the numerical results reveal two main features of the proposed AL-LPAL model. First, the proposed methods remain relatively stable when contamination occurs in the tail relevant to the fitted quantile, whereas several competing distributions can exhibit a substantial increase in estimation error. Second, the credible intervals obtained from the proposed model are generally considerably shorter than those from the competing heavy-tailed specifications. These patterns are consistent with the distributional features discussed in Section~\ref{sec2.3}: the sharp concentration of the AL-LPAL density around the target quantile retains information from the bulk of the observations, while its log-regularly varying tails provide protection against sufficiently extreme residuals.

In the one-sided contamination settings, Cases 1-3, the relative performance depends strongly on the quantile level. At $\tau=0.1$ and $\tau=0.5$, where the positive contamination lies away from the part of the conditional distribution being primarily estimated, the proposed methods generally achieve RMSE values of the same order as GAH and skew-$t$, although the latter methods occasionally yield smaller RMSE. The main difference is in interval length. In many of these settings, AL-LPAL produces substantially shorter credible intervals than GAL and skew-$t$, and usually shorter intervals than GAH. This behavior is particularly evident for the MCMC implementation. Thus, when most observations remain informative for the quantile of interest, the proposed distribution avoids the large increase in posterior uncertainty that can accompany the use of a heavy-tailed distribution over the entire sample.

A more pronounced difference appears when the fitted quantile is directly affected by the contamination. For example, under the positive contamination in Cases 1 and 2, the results at $\tau=0.9$ show a sharp deterioration in GAL and skew-$t$, whereas both implementations of the proposed method retain relatively small RMSE and short credible intervals. In the low-dimensional setting, the RMSE of GAL and skew-$t$ increases to several units in these cases, while that of AL-LPAL remains close to its values at the central quantiles. A similar pattern is observed in the moderately high-dimensional setting, where the difference becomes particularly pronounced at the upper quantile. GAH is generally more robust than GAL and skew-$t$ and can be competitive with, or occasionally outperform, the proposed method in RMSE, but its credible intervals are typically wider. These results indicate that the advantage of AL-LPAL is most visible when the contamination becomes sufficiently extreme to challenge conventional heavy-tailed specifications.

The two-sided contamination settings in Cases 4-6 provide a more direct assessment of the role of the two-sided log-regularly varying tails. Because extreme observations occur in both directions, both lower and upper quantile regression are affected. Across these settings, the proposed methods generally show much smaller RMSE at $\tau=0.1$ and $\tau=0.9$ than GAL and skew-$t$, while retaining substantially shorter credible intervals. GAH remains a strong competitor and in some settings provides slightly smaller RMSE, particularly when the contamination is less severe. Nevertheless, the relative deterioration of the proposed method is much smaller than that of GAL and skew-$t$ as the observations move farther into the tails. This behavior accords with the theoretical result in Section~\ref{sec2.4} that the influence of arbitrarily extreme observations vanishes under the AL-LPAL model.

An additional experiment without contamination showed that AL-LPAL attained RMSEs nearly identical to those of the ordinary AL model, indicating little point-estimation cost from the proposed robustification. The AL-LPAL intervals were slightly shorter, although their empirical coverage was somewhat lower.

The interval results also reveal an important distinction between posterior concentration and frequentist calibration. In the low-dimensional setting, the proposed method, especially its MCMC implementation, often produces substantially shorter credible intervals than the competing methods, but its empirical coverage can fall below the nominal 95\% level. This behavior is particularly relevant under likelihood misspecification, where model-based posterior uncertainty need not agree with the repeated-sampling variability of the corresponding estimator \citep{Kleijn2012,Mueller2013}. 

We further examined this issue by comparing posterior dispersion with the repeated-sampling variability of the coefficient estimates. When the data were generated from the correctly specified AL-LPAL model, the posterior standard deviations were broadly comparable to the repeated-sampling standard deviations, although the empirical coverage remained slightly below the nominal level. In contrast, under a representative misspecified contamination setting, the posterior standard deviations were substantially smaller than the repeated-sampling variability, with the discrepancy becoming more pronounced as the sample size increased. At the same time, the estimation bias decreased with the sample size. Thus, the pronounced undercoverage under misspecification is more closely associated with posterior under-dispersion than with persistent bias. This behavior is consistent with previous findings for Bayesian quantile regression based on working likelihoods \citep{Sriram2015,Yang2016}. 

Accordingly, the shorter credible intervals produced by AL-LPAL should be interpreted jointly with their empirical coverage rather than as an unqualified improvement in uncertainty quantification. GAH and skew-$t$ frequently attain higher coverage, but this is accompanied by substantially wider intervals and, in some contaminated settings, considerably larger RMSE. 

The comparison between AL-LPAL and AL-LPAL-VB is also broadly consistent with the role assigned to VB in Section~\ref{sec3.2}. The two implementations usually exhibit similar qualitative behavior across the contamination scenarios, although neither uniformly dominates the other in RMSE. AL-LPAL tends to yield more concentrated credible intervals, whereas AL-LPAL-VB often produces somewhat wider intervals and, in several of the moderately high-dimensional settings, coverage closer to the nominal level. There are also a few severe-contamination settings in which the VB approximation yields noticeably smaller RMSE than the MCMC implementation. These differences reflect the fact that the two procedures approximate posterior inference in fundamentally different ways and should not be interpreted as a uniform superiority of one implementation over the other. From a computational perspective, the advantage of the VB approximation was substantial. Across the representative low- and moderately high-dimensional settings considered, VB was approximately 47-191 times faster than the Gibbs sampler in terms of the ratio of mean elapsed times. The computational advantage was particularly pronounced in the moderately high-dimensional setting. Thus, although MCMC remains the reference procedure for full posterior inference, VB provides a useful alternative when the model must be fitted repeatedly, such as across many simulation replications or quantile levels.

Taken together, the simulation results support the two design principles of the proposed distribution. The strong concentration of the AL-LPAL density near zero is associated with comparatively short credible intervals when the bulk of the observations is regular, while its super-heavy tails limit the deterioration of point estimation when extreme contamination occurs in the quantile being estimated. The proposed method does not dominate every competitor in every metric: GAH, in particular, can provide higher coverage or smaller RMSE in some settings. However, AL-LPAL provides a distinctive combination of estimation stability under severe contamination and relatively efficient uncertainty quantification, which is the principal objective of the proposed construction.

\subsection{Finite-sample illustration of posterior robustness}\label{sec4.3}

Theorem \ref{thm:posterior_robustness} establishes that, under the AL–LPAL model, the posterior distribution based on the contaminated sample converges to that based on the non-outlying observations as the magnitude of the contamination diverges. To illustrate this rejection behavior in a finite sample, we consider a contamination-path experiment based on the low-dimensional Case 1 setting with $n=100$, $p=3$, and $\tau=0.9$. We fix a single simulated dataset and the indices of the contaminated observations, which constitute 20\% of the sample, and vary only the contamination magnitude $\omega$. This experiment is intended as a direct numerical illustration of the posterior robustness result rather than as an additional repeated-sampling comparison of the competing methods.

For each value of $\omega$, we compute the Euclidean distance
$
D_{\beta}(\omega)
=\|
\widehat{\boldsymbol{\beta}}_{\tau,\omega}-
\widehat{\boldsymbol{\beta}}_{\tau,\mathrm{clean}}
\|_2,
$
where $\widehat{\boldsymbol{\beta}}_{\tau,\omega}$ denotes the coefficient estimate obtained from the contaminated dataset and $\widehat{\boldsymbol{\beta}}_{\tau,\mathrm{clean}}$ denotes the corresponding estimate after removing the contaminated observations. The same construction is applied separately to the MCMC and VB implementations of the proposed model and to GAH. Figure \ref{fig:cont} plots $\log D_{\beta}(\omega)$ against $\log(1+\omega)$.

\begin{figure}[!ht]
\centering
\includegraphics[width=0.5\textwidth]{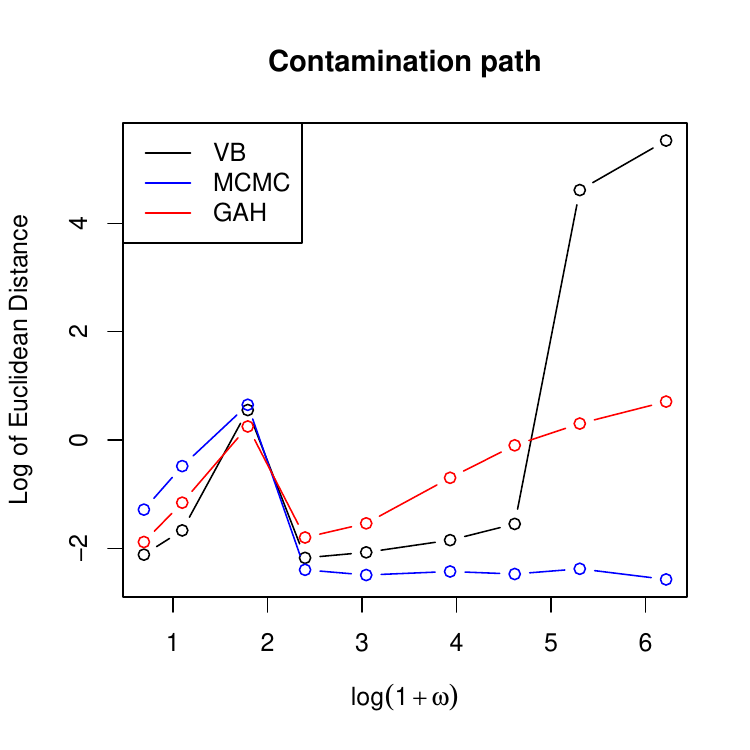}
\caption{Contamination-path analysis under the low-dimensional Case 1 setting with $n=100$, $p=3$, $\tau=0.9$, and 20\% contaminated observations. For a single fixed dataset, the magnitude $\omega$ of the predetermined contaminated observations is progressively increased while all other observations and covariates are held fixed. The vertical axis shows the log of Euclidean distance $\log(\|\widehat{\boldsymbol{\beta}}_{\tau,\omega}-\widehat{\boldsymbol{\beta}}_{\tau,\mathrm{clean}}\|_2)$ between the coefficient estimate based on the contaminated data and the corresponding estimate obtained after removing the contaminated observations. The horizontal axis is displayed on the $\log(1+\omega)$ scale.}
\label{fig:cont}
\end{figure}

The MCMC implementation exhibits the rejection behavior predicted by Theorem \ref{thm:posterior_robustness}. After some variation at moderate contamination magnitudes, its distance from the clean-data estimate becomes very small and remains close to zero as $\omega$ increases. In contrast, the distance for GAH increases progressively for sufficiently large contamination magnitudes, indicating that increasingly extreme observations continue to affect the fitted regression coefficients in this example. Thus, the contamination path provides a direct finite-sample illustration of the distinction between accommodating moderately unusual observations and asymptotically rejecting sufficiently extreme ones.

The VB approximation follows the MCMC result closely over a broad range of moderate and large contamination levels, but begins to deviate from the clean-data estimate under the most extreme contamination considered. This behavior does not contradict Theorem \ref{thm:posterior_robustness}, which concerns the exact posterior distribution rather than its mean-field variational approximation. It instead highlights the complementary roles of the two computational procedures: the Gibbs sampler provides the reference posterior inference for which the theoretical robustness result applies, whereas VB offers a substantially faster approximation that performs similarly over practically relevant contamination levels but need not inherit the exact asymptotic rejection property.

\subsection{Influence function}\label{sec4.4}

The simulation study in the previous subsection evaluates robustness under specific finite-sample contamination mechanisms. We further investigate the sensitivity of the fitted regression coefficients to an individual contaminated observation using an influence-function analysis. This provides a complementary perspective to the posterior robustness result in Theorem~1. Whereas Theorem~1 characterizes the limiting behavior of the entire posterior distribution as contaminated observations diverge, the influence function measures the local sensitivity of a parameter estimate to an infinitesimal amount of contamination at a specified point. Such local robustness is commonly assessed through bounded influence functions, with smaller influence indicating lower sensitivity to potential contamination \citep{Hampel2011}.

Following \citet{Hu2024}, we assess local robustness using the Bayesian influence function of posterior characteristics
\citep{Basu1998,Ghosh2016}. Consider the simple linear quantile regression model
$
y_i = \beta_0 + \beta_1 x_i + \epsilon_i$,
$i=1,\ldots,n$, and let
$
\bm{\theta}=(\beta_0,\beta_1)^\top
$
denote the regression parameters. For a given covariate value $x$, let $f_{\bm{\theta}}(y\mid x)$ denote the assumed conditional density and $g(y\mid x)$ the data-generating conditional density. For a loss function $L(\cdot,\cdot)$, the Bayesian influence function is defined as
\[
\operatorname{IF}_{\bm{\theta}}(z\mid x)
=
-n
\frac{
E_{\bm{\theta}\mid\bm{y},\bm{X}}
\left[
L'(\bm{\theta},T)
H_{\bm{\theta}}(z\mid x)
\right]
}{
E_{\bm{\theta}\mid\bm{y},\bm{X}}
\left[
L''(\bm{\theta},T)
\right]
},
\]
where
$
T
=
\arg\min_{\bm{t}}
\int
L(\bm{\theta},\bm{t})
\pi(\bm{\theta}\mid\bm{y},\bm{X})
\,d\bm{\theta},
$
and
$
H_{\bm{\theta}}(z\mid x)
=
\log
f_{\bm{\theta}}
(\beta_0+\beta_1x+z\mid x)
-
\int
\log f_{\bm{\theta}}(u\mid x)
g(u\mid x)\,du
$
is the derivative of the likelihood contamination with respect to the contamination proportion \citep{Hashimoto2020}. Following \citet{Hu2024}, we use the squared loss
$
L(\bm{\theta},\bm{t})
=
\frac{1}{2}
\lVert \bm{\theta}-\bm{t}\rVert_2^2,
$
for which
$
T
=
E_{\bm{\theta}\mid\bm{y},\bm{X}}(\bm{\theta}),
$
and the influence function reduces to
$
\operatorname{IF}_{\bm{\theta}}(z\mid x)
=
n\,
\operatorname{Cov}_{\bm{\theta}\mid\bm{y},\bm{X}}
\left\{
\bm{\theta},
H_{\bm{\theta}}(z\mid x)
\right\}.
$
Its two components, $\operatorname{IF}_{0}(z\mid x)$ and $\operatorname{IF}_{1}(z\mid x)$, measure the local sensitivity of the posterior means of the intercept $\beta_0$ and slope $\beta_1$, respectively, to contamination at $(x,z)$.

For the numerical evaluation, we follow the simulation framework of \citet{Hu2024} and evaluate the influence functions over contamination values $z\in[-30,30]$. We consider the quantile levels $\tau\in\{0.1,0.5,0.9\}$ and two covariate values, $x=-0.5$ and $x=1$. For the proposed AL-LPAL model, the tail parameter is fixed at $\gamma=1$ as in the preceding analyses, and we consider three values of the mixture weight, $s\in\{0.05,0.1,0.2\}$, to examine how the allocation of probability to the LPAL component affects sensitivity to contamination. As a benchmark, we use the GAH model of \citet{Hu2024} with $(\eta,\gamma)=(0.5,0)$.

\begin{figure}[!ht]
    \centering
    \includegraphics[width=\textwidth]{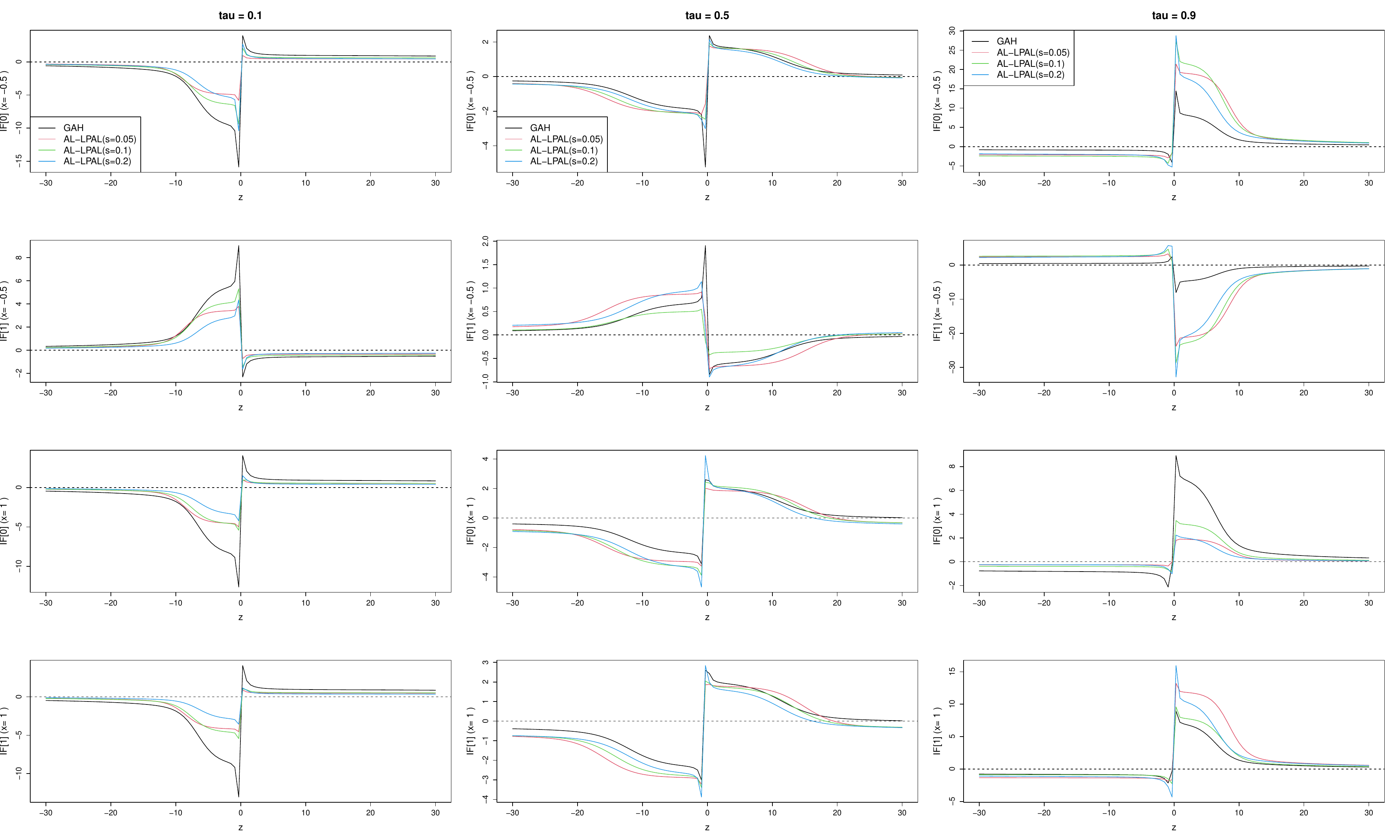} 
    \caption{Influence functions for $\beta_0$ (intercept) and $\beta_1$ (slope) in a simple Bayesian quantile regression model under a non-informative prior. Columns correspond to quantile levels $\tau=0.1$, $0.5$, and $0.9$. The first two rows show the influence functions for $x=-0.5$, and the last two rows those for $x=1$. The proposed AL-LPAL model is shown for $s=0.05$, $0.1$, and $0.2$ with $\gamma=1$, while the GAH model uses $(\eta,\gamma)=(0.5,0)$.
    }
    \label{fig_inf}
\end{figure}

Figure~\ref{fig_inf} presents the resulting influence functions for $\beta_0$ and $\beta_1$. The columns correspond to
$\tau=0.1$, $0.5$, and $0.9$, while the first two rows report the influence functions for $x=-0.5$ and the last two rows those for $x=1$. As expected in quantile regression, both the magnitude and the direction of the influence depend strongly on the quantile level and on whether the contaminated response lies above or below the fitted conditional quantile. In particular, the influence patterns at $\tau=0.1$ and $\tau=0.9$ are approximately reversed, reflecting the different relative importance assigned to observations on the two sides of the fitted quantile.

For the proposed method, the effect of contamination does not continue to increase as $|z|$ becomes large over the range considered. Instead, after attaining relatively large values for observations moderately separated from the fitted quantile, the influence is progressively attenuated as the contaminated response moves farther into the tail. This behavior is consistent with the role of the LPAL component: observations with sufficiently extreme residuals can be accommodated through the super-heavy-tailed component rather than inducing a correspondingly large change in the regression coefficients. The influence functions therefore provide a finite-contamination illustration of the robustness mechanism underlying the posterior robustness result in Section~\ref{sec2.4}.

The mixture weight $s$ affects the transition between the regular and robust parts of the model. The curves for $s=0.05$, $0.1$, and $0.2$ show that changing the prior weight assigned to the LPAL component can alter the sensitivity to observations at moderate residual magnitudes. A larger value of $s$ generally allows observations to be assigned to the LPAL component more readily, although its effect on the coefficient influence is not uniform across quantile levels, covariate values, and directions of contamination. Importantly, the robustness pattern does not rely on a single finely tuned value of $s$; qualitatively similar attenuation is observed for all three values considered.

The comparison with GAH also illustrates that robustness cannot be summarized by a uniformly smaller influence function at every contamination point. GAH can have a smaller influence magnitude over parts of the contamination range, whereas AL-LPAL can be less sensitive in other regions. The main distinction of the proposed method is instead its mechanism for accommodating increasingly extreme observations through the log-regularly varying component. Accordingly, the influence-function analysis should be viewed together with the posterior robustness theorem and the finite-sample simulation results: the former describes local sensitivity to contamination, while the latter two characterize robustness under increasingly severe and sample-level contamination.

Overall, Figure~\ref{fig_inf} supports the numerical robustness of the proposed model across lower, median, and upper quantiles. The results also indicate that the mixture weight controls the degree and onset of robustness without qualitatively changing the behavior of the method under extreme contamination. These findings complement the simulation results in Section~\ref{sec4.2} and the theoretical robustness result in Theorem~\ref{thm:posterior_robustness}.

\section{Real data analysis}\label{sec5}

We evaluate the proposed AL-LPAL model using two real datasets previously analyzed by \cite{Hu2024}: the carbon dioxide data and the Boston housing data. To ensure direct comparability with their analysis, we use the same datasets and follow the same preprocessing procedure and model specification as \cite{Hu2024}, including the construction and
transformation of the response and covariates. Thus, the empirical comparison is conducted under essentially the same data-analytic setting, with the principal difference being the specification of the error distribution. Throughout the real-data analyses, the proposed AL-LPAL model is fitted using the Gibbs sampler developed in Section \ref{sec3.1}, and “Proposed” in Table 5 and Figures 5–6 refers to this MCMC implementation.

We compare the proposed method with the GAH model of \cite{Hu2024}, the GAL model of \cite{Yan2025}, and the skew-$t$ model of \cite{Morales2017}. The models are fitted at the quantile levels
$
\tau\in\{0.1,0.25,0.5,0.75,0.9\}.
$
Following \cite{Hu2024}, predictive performance is evaluated by leave-one-out cross-validation (LOOCV) using both the check loss and the Huberised loss. We additionally examine the posterior medians and $95\%$ credible intervals of the regression coefficients over a finer grid of quantile levels in order to investigate how the estimated covariate effects vary across the conditional distribution.

Table~\ref{tab5} reports the prediction errors obtained from LOOCV. Overall, the proposed AL-LPAL model shows a clear advantage over the competing robust quantile regression models. For the carbon dioxide data, the proposed method attains the smallest check loss and Huberised loss at every quantile level considered. The improvement is observed throughout the conditional distribution, from the lower tail at $\tau=0.1$ to the upper tail at $\tau=0.9$, rather than being restricted to a particular quantile level.

For example, at $\tau=0.1$, the check loss of the proposed method is $0.0473$, compared with $0.0742$, $0.0726$, and $0.0814$ for GAH, GAL, and skew-$t$, respectively. At the median, the corresponding values are $0.1274$, $0.2180$, $0.2385$, and $0.1897$. The same ordering is largely preserved under the Huberised loss. Thus, the predictive advantage of AL-LPAL is not driven solely by its behavior at extreme quantiles, but is also evident in the central part of the conditional distribution.

The Boston housing data provide a similar result. The proposed method produces the smallest Huberised loss at all five quantile levels. It also yields the smallest check loss at four of the five quantile levels. The only exception occurs at $\tau=0.75$, where GAL attains a check loss of $0.0612$, slightly below the value $0.0684$ obtained by the proposed method. At the remaining quantile levels, the check loss of AL-LPAL is smaller than those of all competing procedures.

\begin{table}[!ht]
    \centering
    \begin{tabular}{ccc|cc}
        \hline
        Dataset&Quantile&Method&Check Loss&Huberised Loss\\
        \hline
        \multirow{20}{*}{carbon dioxide}&\multirow{4}{*}{0.1}&Proposed&\textbf{0.0473}&\textbf{0.0448}\\
        &&GAH&0.0742&0.0601\\
        &&GAL&0.0726&0.0661\\
        &&Skew-$t$&0.0814&0.0740\\
        \cline{2-5}
        &\multirow{4}{*}{0.25}&Proposed&\textbf{0.0961}&\textbf{0.0863}\\
        &&GAH&0.1464&0.1108\\
        &&GAL&0.1529&0.1295\\
        &&Skew-$t$&0.1450&0.1257\\
        \cline{2-5}
        &\multirow{4}{*}{0.5}&Proposed&\textbf{0.1274}&\textbf{0.1110}\\
        &&GAH&0.2180&0.1545\\
        &&GAL&0.2385&0.1903\\
        &&Skew-$t$&0.1897&0.1559\\
        \cline{2-5}
        &\multirow{4}{*}{0.75}&Proposed&\textbf{0.1120}&\textbf{0.0984}\\
        &&GAH&0.1784&0.1386\\
        &&GAL&0.2497&0.1934\\
        &&Skew-$t$&0.1634&0.1369\\
        \cline{2-5}
        &\multirow{4}{*}{0.9}&Proposed&\textbf{0.0644}&\textbf{0.0581}\\
        &&GAH&0.1138&0.0944\\
        &&GAL&0.1974&0.1457\\
        &&Skew-$t$&0.0949&0.0836\\
        \hline
        \multirow{20}{*}{Boston housing}&\multirow{4}{*}{0.1}&Proposed&\textbf{0.0352}&\textbf{0.0342}\\
        &&GAH&0.0542&0.0521\\
        &&GAL&0.0799&0.0697\\
        &&Skew-$t$&0.0666&0.0621\\
        \cline{2-5}
        &\multirow{4}{*}{0.25}&Proposed&\textbf{0.0602}&\textbf{0.0577}\\
        &&GAH&0.1048&0.0982\\
        &&GAL&0.1356&0.1182\\
        &&Skew-$t$&0.1145&0.1056\\
        \cline{2-5}
        &\multirow{4}{*}{0.5}&Proposed&\textbf{0.0779}&\textbf{0.0739}\\
        &&GAH&0.1438&0.1313\\
        &&GAL&0.1814&0.1530\\
        &&Skew-$t$&0.1481&0.1349\\
        \cline{2-5}
        &\multirow{4}{*}{0.75}&Proposed&\textbf{0.0684}&\textbf{0.0652}\\
        &&GAH&0.1247&0.1129\\
        &&GAL&0.0612&0.1293\\
        &&Skew-$t$&0.1313&0.1199\\
        \cline{2-5}
        &\multirow{4}{*}{0.9}&Proposed&\textbf{0.0448}&\textbf{0.0430}\\
        &&GAH&0.0843&0.0762\\
        &&GAL&0.1294&0.0992\\
        &&Skew-$t$&0.0858&0.0798\\
        \hline        
    \end{tabular}
    \caption{Prediction errors via check loss function and huberised loss function from two real datasets under the leave-one-out cross validation scheme. The best performers are highlighted in bold.}
    \label{tab5}
\end{table}

Taken across the two datasets, five quantile levels, and two loss functions, the proposed method achieves the smallest prediction error in 19 of the 20 comparisons reported in Table~\ref{tab5}. The empirical improvement is therefore both substantial and persistent across the two applications. Importantly, this performance is obtained under the same data preprocessing and model specification used by \cite{Hu2024}, providing a direct comparison between the AL-LPAL and GAH robustness mechanisms.

The favorable predictive performance is consistent with the construction of the proposed error distribution. The baseline AL component allows the bulk of the observations to be represented without imposing excessively heavy tails on the entire sample, while the LPAL component provides substantial flexibility for observations with large residuals. Consequently, extreme observations can be accommodated without requiring a large distortion of the fitted conditional quantile for the remaining observations. The real-data results suggest that this robustness-efficiency balance can translate into improved out-of-sample quantile prediction.

\begin{figure}[!ht]
\centering
\includegraphics[width=\textwidth]{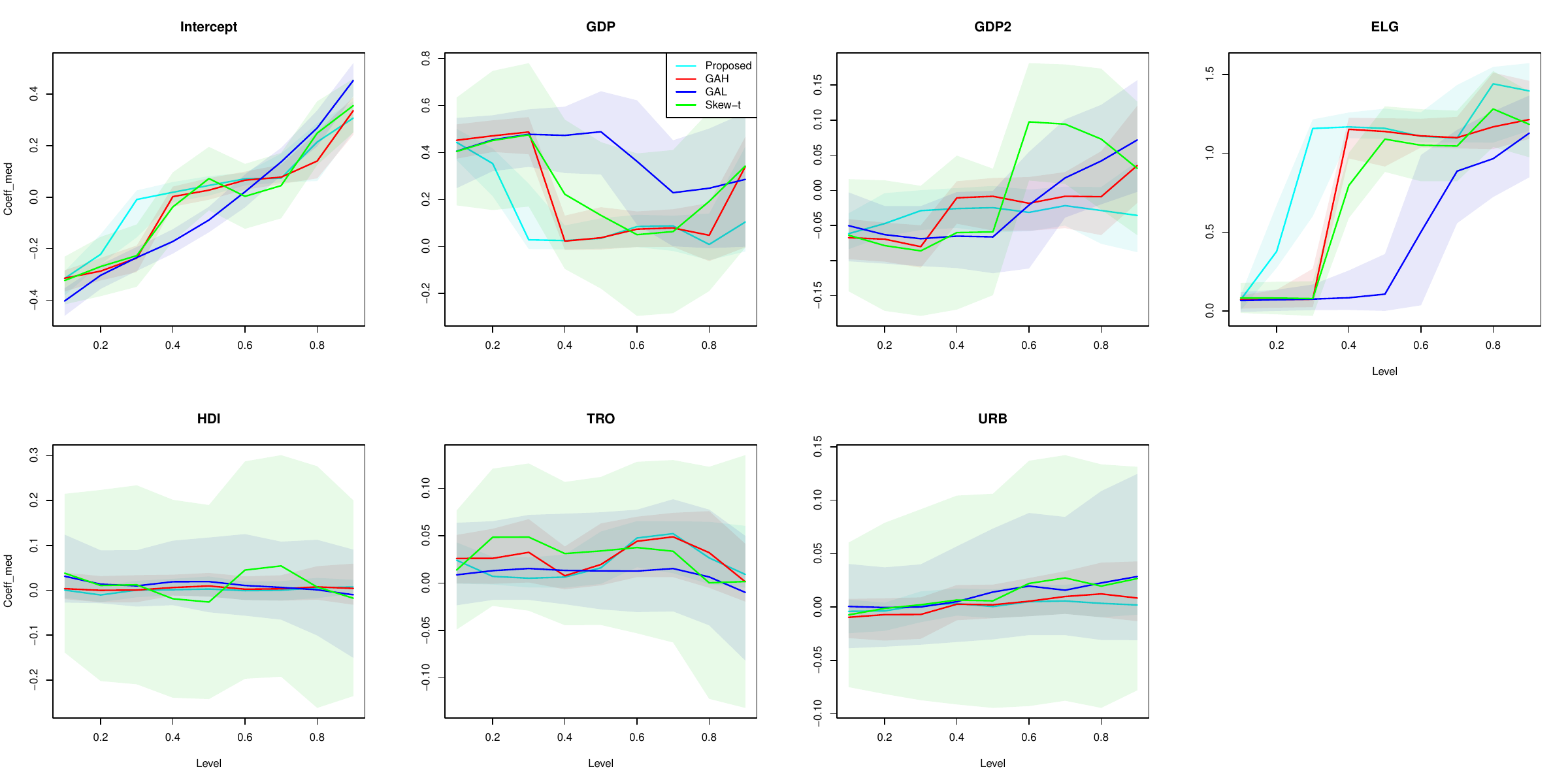}
\caption{Posterior median and 95\% credible intervals for the regression coefficients at quantile levels $\tau=0.1,0.2,\ldots,0.9$ for the carbon dioxide data}
\label{fig:co2}
\end{figure}

\begin{figure}[!ht]
\centering
\includegraphics[width=\textwidth,height=0.7\textheight]{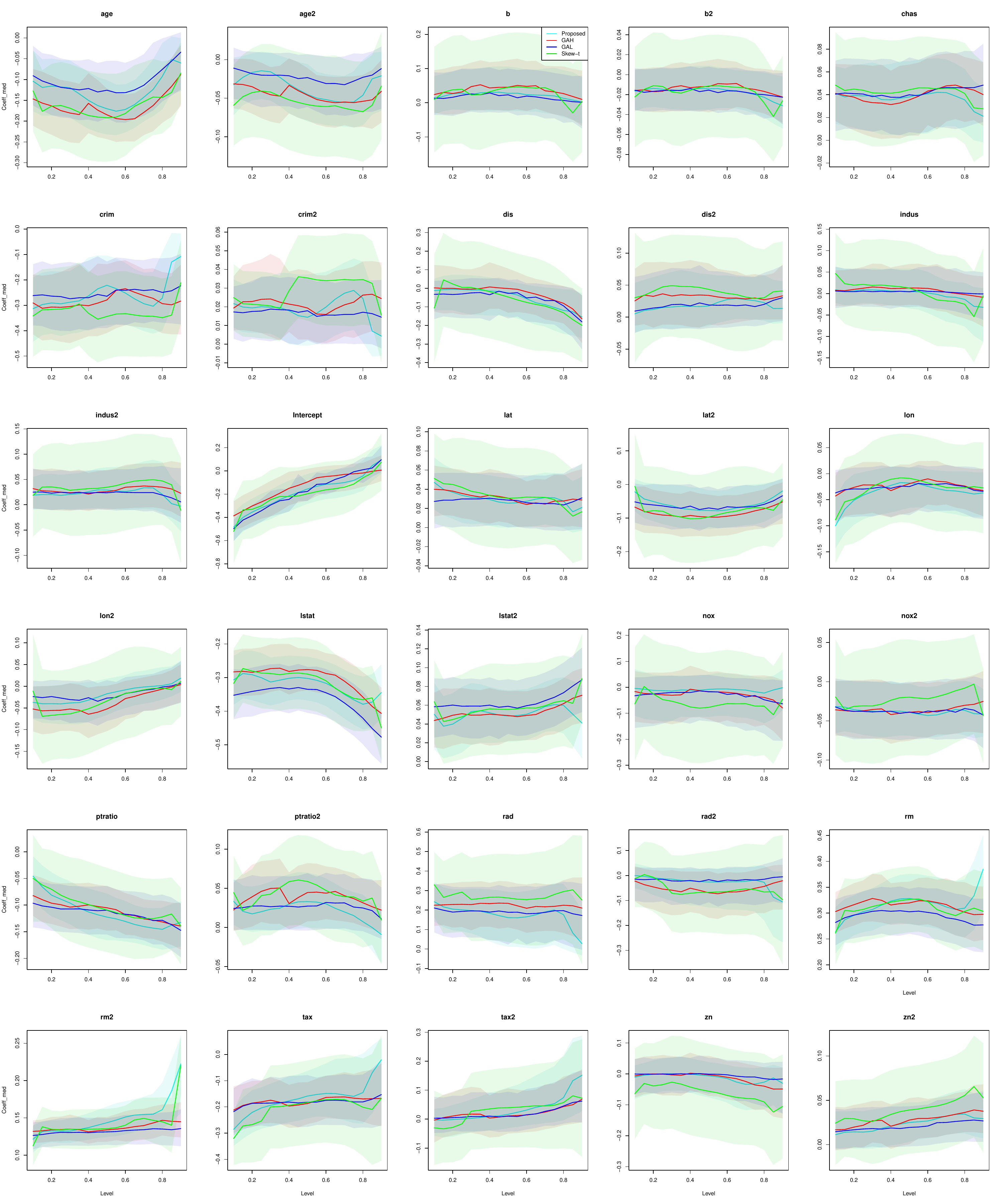}
\caption{Posterior median and 95\% credible intervals for the regression coefficients at quantile levels $\tau=0.1,0.2,\ldots,0.9$ for the Boston housing data}
\label{fig:boston}
\end{figure}

We next examine how the estimated regression effects vary across the conditional distribution. Following \cite{Hu2024}, posterior summaries are computed over a fine grid of quantile levels. Figures~\ref{fig:co2} and~\ref{fig:boston} report the posterior medians and $95\%$ credible intervals at quantile levels $\tau=0.1,0.2,\ldots,0.9$ for the carbon dioxide and Boston housing data, respectively. The same covariate transformations as those of \cite{Hu2024} are retained throughout the analysis.

Figure~\ref{fig:co2} shows that several covariate effects in the carbon dioxide data vary substantially across quantile levels. The posterior coefficient paths therefore indicate considerable distributional heterogeneity that would not be captured by a regression model based only on the conditional mean. The competing likelihood specifications generally agree on the broad direction of several effects, but noticeable differences appear in both the locations of the posterior summaries and the widths of the credible intervals at particular quantile levels.

The AL-LPAL model generally yields comparatively concentrated posterior intervals while retaining coefficient trajectories that are broadly consistent with those obtained from the alternative robust specifications. This behavior is consistent with the distributional properties discussed in Section~\ref{sec2.3}. The strong concentration of the
AL-LPAL density near the target quantile allows observations in the bulk of the conditional distribution to remain informative, while the LPAL component limits the effect of observations lying far from the fitted quantile.

Figure~\ref{fig:boston} presents the corresponding results for the Boston housing data. With a larger set of covariates and quadratic terms, the estimated regression effects display a variety of quantile-dependent patterns. Some coefficients change appreciably over the conditional distribution, whereas others remain relatively close to zero across a broad range of quantile levels. Although the posterior medians from the competing models are similar for several predictors, their associated uncertainty can differ considerably.

Again, the proposed model tends to produce relatively concentrated credible intervals while maintaining coefficient paths that are broadly compatible with those obtained from the other robust likelihoods. Because both the dataset and its preprocessing are identical to those used by \cite{Hu2024}, the differences between AL-LPAL and GAH can be interpreted as arising primarily from the distinct error distribution and robustness mechanisms rather than from differences in the construction of the empirical analysis.

Overall, the two applications provide strong empirical support for the practical usefulness of the proposed approach. Under LOOCV, AL-LPAL achieves the best predictive performance in almost all combinations of datasets, quantile levels, and loss functions considered, while also providing stable quantile-specific regression estimates. Together with the simulation results in Section~\ref{sec4}, these findings indicate that the combination of a sharply concentrated central density and log-regularly varying tails can provide an effective balance between predictive efficiency and robustness to observations with unusually large residuals.

\section{Discussion}\label{sec6}

We have proposed a robust Bayesian quantile regression model based on a finite mixture of the asymmetric Laplace distribution and a log-Pareto scale mixture of AL distributions. The proposed AL-LPAL distribution preserves the prescribed quantile while exhibiting log-regularly varying behavior in both tails. We established posterior robustness against arbitrarily extreme contamination and sufficient conditions for the existence of posterior moments. Posterior computation was implemented using both Gibbs sampling and variational Bayes.

The numerical studies illustrate the practical implications of the proposed construction. In the simulation experiments, AL-LPAL remained competitive under moderate contamination and showed a pronounced advantage when extreme observations occurred in the tail relevant to the quantile being estimated. The influence-function analysis further illustrated the attenuation of the effect of increasingly extreme observations. In the two real-data applications, using the same data processing as \cite{Hu2024}, the proposed method achieved the smallest leave-one-out prediction loss in nearly all combinations of datasets, quantile levels, and loss functions considered. These results suggest that the combination of a sharply concentrated center and two-sided super-heavy tails provides a useful balance between estimation efficiency and robustness to extreme observations.

Several extensions of the proposed framework are possible. In this paper, we have focused on linear quantile regression with a common scale parameter. As with existing extensions based on the generalized asymmetric Laplace and generalized asymmetric Huberised-type distributions, the AL-LPAL construction could be incorporated into more flexible regression structures. For example, extending the model to nonlinear quantile regression, mixed-effects models, or other hierarchical quantile regression settings would allow the proposed robustness mechanism to accommodate more complex forms of covariate effects and within-group dependence. Such extensions would retain the basic idea of combining a conventional component for regular observations with a log-regularly varying component for extreme observations, while adapting the regression structure to a broader range of applications.


\clearpage

\begin{center}
{\Large\bfseries
Supplementary Material for\\[0.4em]
``Log-regularly varying scale mixture of asymmetric Laplaces
for robust Bayesian quantile regression''}
\end{center}

\vspace{1em}

\setcounter{section}{0}
\setcounter{subsection}{0}
\setcounter{table}{0}
\setcounter{figure}{0}
\setcounter{equation}{0}

\renewcommand{\thesection}{S\arabic{section}}
\renewcommand{\thesubsection}{\thesection.\arabic{subsection}}
\renewcommand{\thetable}{S\arabic{table}}
\renewcommand{\thefigure}{S\arabic{figure}}
\renewcommand{\theequation}{S\arabic{equation}}
\renewcommand{\thelem}{S.\arabic{lem}}

\renewcommand{\theHsection}{S\arabic{section}}
\renewcommand{\theHsubsection}{S\arabic{section}.\arabic{subsection}}
\renewcommand{\theHtable}{S\arabic{table}}
\renewcommand{\theHfigure}{S\arabic{figure}}
\renewcommand{\theHequation}{S\arabic{equation}}

\section{Proof of Proposition \ref{prop:alt}}

First, 
\[
\begin{split}
I(x)
&=\int_0^\infty\frac{1}{\sqrt{2\pi\tau_2^2}}\frac{1}{\sqrt{u}}\exp\left\{-\frac{(x-\tau_1 u)^2}{2\tau^2_2 u}\right\}
\frac{\gamma|x|}{1+u}\left\{\frac{\log|x|}{1+\log(1+u)}\right\}^{1+\gamma}du\\
&=\int_0^\infty\frac{1}{\sqrt{2\pi\tau_2^2}}\frac{1}{\sqrt{u}}\exp\left\{-\frac{x^2/u-2x\tau_1+\tau_1^2u}{2\tau^2_2 }\right\}
\frac{\gamma|x|}{1+u}\left\{\frac{\log|x|}{1+\log(1+u)}\right\}^{1+\gamma}du.
\end{split}
\]
By letting $u=\frac{|x|v}{|\tau_1|}$, 
\[
\begin{split}
I(x)&=
\int_0^\infty\frac{1}{\sqrt{2\pi\tau_2^2}}\left(\frac{|x|v}{|\tau_1|}\right)^{-\frac{1}{2}}\frac{|x|}{|\tau_1|}
\exp\left\{
-\frac{1}{2\tau^2_2}\left(|x||\tau_1|(v^{-1}+v)-2\tau_1 x\right)\right\}
\frac{\gamma|x|}{1+\frac{|x|v}{|\tau_1|}}\left\{\frac{\log|x|}{1+\log(1+\frac{|x|v}{{|\tau_1|}})}\right\}^{1+\gamma}dv\\
&=\frac{1}{\sqrt{2\pi\tau_2^2}} |\tau_1|^{-\frac{1}{2}} |x|^{\frac{1}{2}}
\gamma\int_0^\infty
v^{-\frac{1}{2}}\exp\left\{
-\frac{|x||\tau_1|}{2\tau^2_2}\frac{1}{v}(\text{sgn}(x)\text{sgn}(\tau_1)-v)^2\right\}
\frac{|x|}{1+\frac{|x|v}{|\tau_1|}}\left\{\frac{\log|x|}{1+\log(1+\frac{|x|v}{|\tau_1|})}\right\}^{1+\gamma}dv.
\end{split}
\]
The exponent $h(v)\equiv \frac{1}{v}\left\{\text{sgn}(x)\text{sgn}(\tau_1) -v\right\}^2$ is minimised at $v=1$, around which 
$h(v)\approx h(1) + \frac{1}{2}(v-1)^2 h''(1)=h(1)+(v-1)^2$. 
For sufficiently large $|x|$, the integral is approximated by 
\[
\begin{split}
&\int_0^\infty
v^{-\frac{1}{2}}\exp\left\{
-\frac{|x||\tau_1|}{2\tau^2_2}(h(1) + (v-1)^2 )\right\}
\frac{|x|}{1+\frac{|x|v}{|\tau_1|}}\left\{\frac{\log|x|}{1+\log(1+\frac{|x|v}{|\tau_1|})}\right\}^{1+\gamma}dv\\
&=\exp\left\{-\frac{|x||\tau_1|}{2\tau_2^2} h(1)\right\}
\int_0^{\infty}v^{-\frac{1}{2}}
\exp\left\{-\frac{|x||\tau_1| }{2\tau^2_2}(v-1)^2\right\}\frac{|x|}{1+\frac{|x|v}{|\tau_1|}}\left\{\frac{\log|x|}{1+\log(1+\frac{|x|v}{|\tau_1|})}\right\}^{1+\gamma}dv\\
&\approx\exp\left\{-\frac{|x||\tau_1|}{2\tau_2^2} h(1)\right\}
\frac{|x|}{1+\frac{|x|}{|\tau_1|}}\left\{\frac{\log|x|}{1+\log(1+\frac{|x|}{|\tau_1|})}\right\}^{1+\gamma}\sqrt{2\pi\tau_2^2}|x|^{-\frac{1}{2}}|\tau_1|^{-\frac{1}{2}}.
\end{split}
\]
Collecting the terms 
\[
\begin{split}
I(x)&\approx
\frac{1}{\sqrt{2\pi\tau_2^2}}|\tau_1|^{-\frac{1}{2}}|x|^\frac{1}{2} 
\gamma\exp\left\{-\frac{|x||\tau_1|}{2\tau_2^2} h(1)\right\}
\frac{|x|}{1+\frac{|x|}{|\tau_1|}}\left\{\frac{\log|x|}{1+\log(1+\frac{|x|}{|\tau_1|})}\right\}^{1+\gamma}\sqrt{2\pi\tau_2^2}|x|^{-\frac{1}{2}}|\tau_1|^{-\frac{1}{2}}\\
&=\frac{\gamma}{|\tau_1|} \exp\left\{-\frac{|x||\tau_1|}{2\tau_2^2} h(1)\right\}
\frac{|x|}{1+\frac{|x|}{|\tau_1|}}\left\{\frac{\log|x|}{1+\log(1+\frac{|x|}{|\tau_1|})}\right\}^{1+\gamma}. 
\end{split}
\]
From Lemma~S1 of \cite{Hamura2022}, the last term converges to $1$ as $|x|\rightarrow \infty$. 
Note that $\tau_1>0$ when $\tau<\frac{1}{2}$, vice versa. 
Since the exponent $h(1)=\left\{\text{sgn}(x)\text{sgn}(\tau_1)-1\right\}^2$ is equal to either $0$ or $4$, depending on the combination of the signs of $x$ and $\tau_1$, the result follows. 

\section{Additional properties of the AL-LPAL distribution}

\subsection{Lemmas}

\begin{lem}\label{lem:prelim}

For $z\in \mathbb{R}$, let $f_0(z)=f_{\AL_\tau}(z|0,1)$, 
    \[
    f_1(z)=\int_{0}^{\infty}f_{\AL_\tau}(z|0,u)H(u;\gamma)du,
    \]
    and $f(z)=(1-s)f_0(z)+sf_1(z)$, where $H(u;\gamma)=H(u;\gamma,\delta=0)$, $u\in (0,\infty)$, is the $H$-distribution. Then we have

\begin{enumerate}[(i)]

\item
\[
f_1(z)
\sim
\gamma\tau(1-\tau)
\log\frac{1}{|z|}
\qquad\text{as }z\to0.
\]

\item
The function $f_1$ is continuous on $\mathbb{R}\setminus\{0\}$, strictly increasing on $(-\infty,0)$, and strictly decreasing on $(0,\infty)$.

\item
\[
f_1(z)
\sim
\begin{cases}
\frac{\gamma(1-\tau)}
{z(\log z)^{1+\gamma}},
\qquad z\to+\infty,\\
\frac{\gamma\tau}
{|z|(\log|z|)^{1+\gamma}},
\qquad z\to-\infty.
\end{cases}
\]

\item
There exist constants $C_1,C_2>0$ such that, for all
$z\ge1$,
\[
\frac{C_1}
{z\{1+\log(1+\tau z)\}^{1+\gamma}}
\le f_1(z)
\le
\frac{C_2}
{z\{1+\log(1+\tau z)\}^{1+\gamma}},
\]
and, for all $z\le-1$,
\[
\frac{C_1}
{|z|\{1+\log(1+(1-\tau)|z|)\}^{1+\gamma}}
\le f_1(z)
\le
\frac{C_2}
{|z|\{1+\log(1+(1-\tau)|z|)\}^{1+\gamma}}.
\]

\item
There exists a constant $C_3>0$ such that
\[
f_1(z)
\le
C_3
\frac{1+\log_+(1/|z|)}
{1+|z|},
\qquad z\neq0,
\]
where $\log_+(x)=\max\{\log x,0\}$.

\item
There exists a constant $C_4>0$ such that, for every
$\widetilde{\mu}\in\mathbb{R}$, $\sigma>0$, and
$\widetilde{y}\in\mathbb{R}$ satisfying
\[
|\widetilde{y}|>|\widetilde{\mu}|+\sigma,
\]
we have
\[
|\widetilde{y}|\,
f_1\left(
\frac{\widetilde{y}-\widetilde{\mu}}{\sigma}
\right)
\le
C_4(\sigma+|\widetilde{\mu}|).
\]
   
\end{enumerate}
\end{lem}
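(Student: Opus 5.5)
The plan rests on one reduction. By definition,
\[
f_1(z)=\tau(1-\tau)\int_0^\infty u^{-1}e^{-\rho_\tau(z)/u}H(u;\gamma)\,du .
\]
Substituting $t=\rho_\tau(z)/u$ gives
\[
f_1(z)=\tau(1-\tau)\,g(\rho_\tau(z)),\qquad g(r)=\int_0^\infty t^{-1}e^{-t}H(r/t;\gamma)\,dt .
\]
Here $\rho_\tau(z)=\tau z$ for $z>0$ and $(1-\tau)|z|$ for $z<0$. Every item in the statement therefore reduces to a one-dimensional statement about the nonnegative function $g$ on $(0,\infty)$.

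\textbf{Monotonicity and continuity (ii).} For fixed $u$, the integrand $u^{-1}e^{-r/u}$ is strictly decreasing in $r$. Hence $g$ is strictly decreasing on $(0,\infty)$. Continuity follows by dominated convergence on compact subsets of $(0,\infty)$, using that $u^{-1}e^{-r_0/u}H(u)$ is integrable for $r_0>0$. Composing with $\rho_\tau$ gives (ii).

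\textbf{Behaviour at zero (i).} Write $g(r)=\int_0^\infty u^{-1}e^{-r/u}H(u)\,du$ and note $H(0)=\gamma$. Split the integral at $u=1$.
\begin{itemize}
\item On $(1,\infty)$ the integral is bounded, since $u^{-1}H(u)$ is integrable there.
\item On $(0,1)$, replace $H(u)$ by $\gamma+O(u)$. The error term is bounded. The main term is $\gamma\int_0^1 u^{-1}e^{-r/u}\,du=\gamma E_1(r)\sim\gamma\log(1/r)$.
\end{itemize}
Since $\log(1/\rho_\tau(z))\sim\log(1/|z|)$, this yields (i).

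\textbf{Tails (iii) and (iv).} As $r\to\infty$, use the representation in $t$. Lemma~S1 of \cite{Hamura2022} (slow variation of $\{1+\log(1+u)\}^{-1-\gamma}$) gives, for fixed $t$,
\[
H(r/t)\approx \gamma\,t\,r^{-1}(\log r)^{-1-\gamma}.
\]
Hence $g(r)\sim \gamma r^{-1}(\log r)^{-1-\gamma}\int_0^\infty e^{-t}\,dt$. Justifying the interchange of limit and integral is the \emph{main obstacle}, because $H(r/t)$ is not uniformly controlled when $t$ is tiny or huge.

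I would split the $t$-range into three pieces:
\begin{itemize}
\item $t\in[r^{-1/2},r^{1/2}]$: here $\log(r/t)\sim\log r$ uniformly, so a Potter-type bound and dominated convergence apply.
\item $t>r^{1/2}$: here $H\le\gamma$, and the factor $e^{-t}$ makes this piece negligible.
\item $t<r^{-1/2}$: use the bound $H(r/t)\le \gamma t/r$. This piece contributes $O(r^{-1}(\log r)^{-1-\gamma}\cdot o(1))$, since $(\log(r/t))^{-1-\gamma}\le(\tfrac12\log r)^{-1-\gamma}$.
\end{itemize}
Then $f_1(z)\sim\tau(1-\tau)\gamma/\{\rho_\tau(z)(\log\rho_\tau(z))^{1+\gamma}\}$. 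For $z>0$, $\rho_\tau(z)=\tau z$, giving the constant $\gamma(1-\tau)$; for $z<0$, $\rho_\tau(z)=(1-\tau)|z|$, giving $\gamma\tau$. This is (iii).

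Part (iv) follows from (iii), continuity, and positivity on the compact range $[1,R]$. The explicit comparison function with $\log(1+\tau z)$ is asymptotically equivalent to the one in (iii) and is bounded above and below on compacts.

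\textbf{Global bound (v).} For $|z|\le1$, use (i) together with monotonicity and boundedness near $|z|=1$. For $|z|\ge1$, use (iv), which gives $f_1(z)\le C/|z|\le 2C/(1+|z|)$.

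\textbf{Bound (vi).} Put $z=(\tilde y-\tilde\mu)/\sigma$. Since $|\tilde y|>|\tilde\mu|+\sigma$, we have $|\tilde y-\tilde\mu|\ge|\tilde y|-|\tilde\mu|>\sigma$, so $|z|>1$. By (v),
\[
f_1(z)\le C_3\frac{1}{1+|z|}\le C_3\frac{\sigma}{|\tilde y-\tilde\mu|}.
\]
Therefore $|\tilde y|f_1(z)\le C_3\sigma\,|\tilde y|/|\tilde y-\tilde\mu|$. Finally,
\[
\frac{|\tilde y|}{|\tilde y-\tilde\mu|}\le 1+\frac{|\tilde\mu|}{|\tilde y-\tilde\mu|}\le 1+\frac{|\tilde\mu|}{\sigma},
\]
so $|\tilde y|f_1(z)\le C_3(\sigma+|\tilde\mu|)$, which is (vi) with $C_4=C_3$.
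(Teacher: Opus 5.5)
Your proof is correct and follows essentially the paper's route: the same integral representation (your substitution $t=\rho_\tau(z)/u$ is the paper's $u=\tau z v$ with $v=1/t$), the same near-zero split leading to the exponential-integral asymptotics, and dominated convergence for the tails. Parts (v) and (vi) are derived the same way, and deducing (iv) from (iii) plus continuity and positivity on compacts is a slightly cleaner substitute for the paper's re-bounding of the logarithmic ratio.

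One wording fix: on $t\in[r^{-1/2},r^{1/2}]$ the ratio $\log(r/t)/\log r$ only stays in $[1/2,3/2]$ and does not tend to one uniformly. This does not affect the argument, because boundedness together with pointwise convergence is exactly what dominated convergence needs.
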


\begin{proof}
To investigate the behavior around the origin, note that
\[
f_1(z)
=
\tau(1-\tau)
\int_0^\infty
u^{-1}
\exp\left\{
-\frac{\rho_\tau(z)}{u}
\right\}
H(u;\gamma)\,du.
\]
Since $H(u;\gamma)\to\gamma$ as $u\downarrow0$, for any fixed sufficiently small $\delta>0$, the leading contribution is
\[
\gamma\tau(1-\tau)
\int_0^\delta
u^{-1}
\exp\left\{
-\frac{\rho_\tau(z)}{u}
\right\}du.
\]
By the change of variables $v=\rho_\tau(z)/u$,
\[
\int_0^\delta
u^{-1}
\exp\left\{
-\frac{\rho_\tau(z)}{u}
\right\}du
=
\int_{\rho_\tau(z)/\delta}^{\infty}
v^{-1}e^{-v}\,dv
=
\log\frac{1}{|z|}+O(1),
\]
as $z\to0$. Hence,
\[
f_1(z)
\sim
\gamma\tau(1-\tau)
\log\frac{1}{|z|},
\]
which also implies $f_1(0)=\infty$.

For part (ii), when $z>0$,
\[
\frac{d}{dz}f_1(z)
=
-\tau^2(1-\tau)
\int_0^\infty
u^{-2}
\exp\left(-\frac{\tau z}{u}\right)
H(u;\gamma)\,du
<0.
\]
Similarly, when $z<0$,
\[
\frac{d}{dz}f_1(z)
=
\tau(1-\tau)^2
\int_0^\infty
u^{-2}
\exp\left(-\frac{(1-\tau)|z|}{u}\right)
H(u;\gamma)\,du
>0.
\]
Continuity on $\mathbb{R}\setminus\{0\}$ follows by dominated convergence.

For part (iii), suppose first that $z>0$. We have
\[
f_1(z)
=
\gamma\tau(1-\tau)
\int_0^\infty
\frac{u^{-1}}{1+u}
\exp\left(-\frac{\tau z}{u}\right)
\frac{du}
{\{1+\log(1+u)\}^{1+\gamma}}.
\]
Making the change of variables $u=\tau zv$ gives
\[
f_1(z)
=
\gamma\tau(1-\tau)
\int_0^\infty
\frac{v^{-1}e^{-1/v}}
{1+\tau zv}
\frac{dv}
{\{1+\log(1+\tau zv)\}^{1+\gamma}}.
\]
Therefore,
\[
z(\log z)^{1+\gamma}f_1(z)
=
\gamma(1-\tau)
\int_0^\infty
\frac{\tau zv}{1+\tau zv}
v^{-2}e^{-1/v}
\left\{
\frac{\log z}
{1+\log(1+\tau zv)}
\right\}^{1+\gamma}
dv.
\]
Using Lemma S1 of \cite{Hamura2022}, the integrand is bounded by an integrable function independent of $z$. Thus, by the dominated convergence theorem,
\[
\lim_{z\to\infty}
z(\log z)^{1+\gamma}f_1(z)
=
\gamma(1-\tau)
\int_0^\infty v^{-2}e^{-1/v}\,dv
=
\gamma(1-\tau).
\]
The case $z\to-\infty$ follows in exactly the same manner after replacing $\tau z$ with $(1-\tau)|z|$, yielding
\[
\lim_{z\to-\infty}
|z|(\log|z|)^{1+\gamma}f_1(z)
=
\gamma\tau.
\]

Part (iv) follows from the same change-of-variable argument and Lemma S1 of \cite{Hamura2022}, by bounding the logarithmic ratio from above and below.

For part (v), part (i) implies that, for $0<|z|\le1$,
\[
f_1(z)
\le
C\left\{
1+\log\frac{1}{|z|}
\right\}
\]
for some $C>0$, whereas part (iv) implies
\[
f_1(z)\le\frac{C}{1+|z|}
\]
for $|z|\ge1$. Combining these two bounds gives the result.

Finally, under the condition in part (vi),
\[
\left|
\frac{\widetilde{y}-\widetilde{\mu}}{\sigma}
\right|>1.
\]
Hence, from part (v),
\[
f_1\left(
\frac{\widetilde{y}-\widetilde{\mu}}{\sigma}
\right)
\le
C_3
\frac{\sigma}
{|\widetilde{y}-\widetilde{\mu}|}.
\]
It follows that
\[
|\widetilde{y}|
f_1\left(
\frac{\widetilde{y}-\widetilde{\mu}}{\sigma}
\right)
\le
C_3\sigma
\frac{|\widetilde{y}|}
{|\widetilde{y}|-|\widetilde{\mu}|}
\le
C_3(\sigma+|\widetilde{\mu}|),
\]
which completes the proof.
\end{proof}

\subsection{Proof of Proposition \ref{prop:LPAL}}

\begin{enumerate}[(a)]
\item For every $u>0$, the AL density satisfies
\[
\int_{-\infty}^{0}
f_{\mathrm{AL}_{\tau}}(\epsilon;0,u)\,d\epsilon
=
\tau.
\]
Since the integrand is nonnegative, Tonelli's theorem gives
\[
\begin{aligned}
\int_{-\infty}^{0}
f_{\mathrm{LPAL}_{\tau}}(\epsilon;\gamma)\,d\epsilon
&=
\int_{-\infty}^{0}
\int_0^\infty
f_{\mathrm{AL}_{\tau}}(\epsilon;0,u)
H(u;\gamma)\,du\,d\epsilon \\
&=
\int_0^\infty
\left\{
\int_{-\infty}^{0}
f_{\mathrm{AL}_{\tau}}(\epsilon;0,u)\,d\epsilon
\right\}
H(u;\gamma)\,du \\
&=
\tau\int_0^\infty H(u;\gamma)\,du
=
\tau.
\end{aligned}
\]
This yields the following result.
\[
\int_{-\infty}^{0}
f_{\mathrm{AL\text{-}LPAL}_{\tau}}
(\epsilon;s,\gamma)\,d\epsilon=\tau,
\]

\item This trivially follows from (iii) of Lemma~\ref{lem:prelim}.

\item This trivially follows from (i) of Lemma~\ref{lem:prelim}.

\end{enumerate}

\section{Proofs of posterior robustness and posterior moment existence}

\subsection{Proof of Theorem 1}

\begin{proof}
We define
\[
\Lambda(z)
=
1+\log_+\left(\frac{1}{|z|}\right),
\qquad
\ell(z)=1+\log(1+|z|),
\]
where $\log_+(x)=\max\{\log x,0\}$. By Lemma~\ref{lem:prelim}, together with the exponentially decaying tails of $f_0$, there exist constants $C_1,C_2>0$ and $M>1$ such that
\begin{equation}
f(z)
\le
C_1\frac{\Lambda(z)}{1+|z|},
\qquad z\neq0,
\label{eq:global_f_bound}
\end{equation}
and
\begin{equation}
\frac{C_2^{-1}}
{|z|\ell(z)^{1+\gamma}}
\le
f(z)
\le
\frac{C_2}
{|z|\ell(z)^{1+\gamma}},
\qquad |z|\ge M.
\label{eq:tail_f_bound}
\end{equation}
We assume throughout that
\[
\pi(\bm{\beta}_\tau,\sigma)
=
\pi_\sigma(\sigma)
\prod_{j=1}^p
\frac{1}{\sigma}
\pi_j\left(\frac{\beta_{j,\tau}}{\sigma}\right).
\]
For every $i\in \mathcal{L}$ and every fixed
$(\beta_\tau,\sigma)\in\mathbb R^p\times(0,\infty)$,
(iii) of Lemma \ref{lem:prelim} gives
\[
\frac{
f((y_i-x_i^\top\beta_\tau)/\sigma)/\sigma
}{
f(y_i)
}
\rightarrow1.
\]
Therefore,
\[
\prod_{i\in \mathcal{L}}
\frac{
f((y_i-\bm{x}_i^\top\bm{\beta}_\tau)/\sigma)/\sigma
}{
f(y_i)
}
\rightarrow1.
\]
On the other hand,
\[
\frac{
p(\bm{\beta}_\tau,\sigma\mid \mathcal{D})
}{
p(\bm{\beta}_\tau,\sigma\mid \mathcal{D}^*)
}
=
\frac{
p(\mathcal{D}^*)\prod_{i\in \mathcal{L}}f(y_i)
}{
p(\mathcal{D})
}
\prod_{i\in \mathcal{L}}
\frac{
f((y_i-\bm{x}_i^\top\bm{\beta}_\tau)/\sigma)/\sigma
}{
f(y_i)
}.
\]
By Lemma~\ref{lem:normalizing},
\[
\frac{
p(\mathcal{D}^*)\prod_{i\in \mathcal{L}}f(y_i)
}{
p(\mathcal{D})
}
\rightarrow1.
\]
Consequently,
\[
\frac{
p(\bm{\beta}_\tau,\sigma\mid \mathcal{D})
}{
p(\bm{\beta}_\tau,\sigma\mid \mathcal{D}^*)
}
\rightarrow1
\]
for every
$(\bm{\beta}_\tau,\sigma)\in\mathbb R^p\times(0,\infty)$.
Hence,
\[
p(\bm{\beta}_\tau,\sigma\mid \mathcal{D})
\rightarrow
p(\bm{\beta}_\tau,\sigma\mid \mathcal{D}^*),
\]
which proves posterior robustness.
\end{proof}

Before introducing a required lemma, we first establish a lemma which states a bound that will be used repeatedly.

\begin{lem}\label{lem:bound}
Let $m<\infty$, let $\bm{x}_1,\ldots,\bm{x}_m$ be fixed nonzero vectors in $\mathbb R^p$, and let $d_1,\ldots,d_m\in\mathbb R$.
Then
\begin{equation}
\sup_{d_1,\ldots,d_m\in\mathbb R}
\int_{\mathbb R^p}
\left\{
\prod_{j=1}^p\pi_j(\theta_j)
\right\}
\prod_{r=1}^m
\Lambda(d_r-\bm{x}_r^\top\bm{\theta})
\,d\bm{\theta}
<\infty.
\label{eq:log_claim1}
\end{equation}
Moreover, if $x_{i_0,k_0}\neq0$, then
\begin{equation}\label{eq:log_claim2}
\sup_{d_0,d_1,\ldots,d_m\in\mathbb R}
\int_{\mathbb R^p}
|x_{i_0,k_0}|
f(d_0-\bm{x}_{i_0}^\top\bm{\theta})
\left\{
\prod_{j\neq k_0}\pi_j(\theta_j)
\right\}\prod_{r=1}^m
\Lambda(d_r-\bm{x}_r^\top\bm{\theta})
\,d\bm{\theta}<\infty.
\end{equation}
\end{lem}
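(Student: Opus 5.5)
The plan is to exploit the fact that $\Lambda$ has only a logarithmic singularity and therefore lies in every $L^r_{\mathrm{loc}}$. The $L^q$ assumption on $\pi_j$ from (A.2) will be combined with Hölder's inequality to absorb products of finitely many such factors.

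For \eqref{eq:log_claim1} I would first reduce to single factors. Apply the generalized Hölder inequality with respect to the probability measure $\prod_j\pi_j(\theta_j)\,d\bm\theta$:
\[
\int \prod_j\pi_j\prod_{r=1}^m\Lambda_r\,d\bm\theta\le\prod_{r=1}^m\Big(\int\prod_j\pi_j\,\Lambda(d_r-\bm x_r^\top\bm\theta)^m\,d\bm\theta\Big)^{1/m}.
\]
It then suffices to bound $\sup_d\int\prod_j\pi_j(\theta_j)\Lambda(d-\bm x^\top\bm\theta)^m\,d\bm\theta$ for a fixed nonzero $\bm x$.

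Pick an index $k$ with $x_k\neq0$ and integrate first over $\theta_k$, with the other coordinates fixed. The inner integral has the form $\int\pi_k(t)\Lambda(e-x_kt)^m\,dt$. By Hölder with exponents $q$ and $q'=q/(q-1)$, this is at most
\[
\|\pi_k\|_q\Big(\int_{\mathbb R}\Lambda(e-x_kt)^{mq'}\,dt\Big)^{1/q'}.
\]
That last integral diverges, since $\Lambda\ge1$ everywhere. I would therefore split $\Lambda^m\le 2^{m}(1+\log_+^m(1/|z|))$. The constant part integrates against the probability density $\pi_k$ to a constant. For the $\log_+$ part, which is supported on $|e-x_kt|\le1$, Hölder gives $\|\pi_k\|_q\,|x_k|^{-1/q'}(\int_{|v|\le1}\log_+^{mq'}(1/|v|)\,dv)^{1/q'}<\infty$, and this bound is uniform in $e$. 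The remaining coordinates integrate to one against their densities, so the bound is uniform in $d$.

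For \eqref{eq:log_claim2} the factor $\pi_{k_0}$ is missing. Its role is taken by $|x_{i_0,k_0}|f(d_0-\bm x_{i_0}^\top\bm\theta)$, which as a function of $\theta_{k_0}$ integrates to exactly one after the substitution $v=d_0-\bm x_{i_0}^\top\bm\theta$, because $f$ is a probability density. The structure is then the same as before, with the measure $\pi_{k_0}(\theta_{k_0})\,d\theta_{k_0}$ replaced by $|x_{i_0,k_0}|f(\cdot)\,d\theta_{k_0}$.

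After the first Hölder step, each term is $\int|x_{i_0,k_0}|f\prod_{j\ne k_0}\pi_j\,\Lambda(d_r-\bm x_r^\top\bm\theta)^m\,d\bm\theta$, and I would consider two cases.

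First, if $\bm x_r$ has a nonzero coordinate $k\neq k_0$, integrate over $\theta_k$ first using the $L^q$ argument above. Then integrate $\theta_{k_0}$ against $f$, giving one, and the other coordinates against their densities.

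Second, if $\bm x_r$ is supported only on coordinate $k_0$, the factor $\Lambda$ depends only on $\theta_{k_0}$, and I integrate over $\theta_{k_0}$ first. For this I need $f\in L^{q''}$ for some $q''>1$. This holds by \eqref{eq:global_f_bound}: $f\le C_1\Lambda(z)/(1+|z|)$, so $f\in L^r$ for every $r\in(1,\infty)$. Hölder with $f\in L^{2}$ and $\Lambda^m\mathbf 1_{|v|\le1}\in L^2$, plus the constant part integrated against the density $f$, again gives a bound uniform in $d_0$ and $d_r$.

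The main obstacle is bookkeeping uniformity in the shifts $d$. Every bound must depend only on $\|\pi_j\|_q$, on $\|f\|_r$, and on the fixed nonzero entries of the $\bm x$'s, never on the $d$'s. I expect this to work because translation invariance of Lebesgue measure removes each $d$ after the linear substitution in the chosen coordinate.
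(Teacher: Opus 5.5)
Your argument for \eqref{eq:log_claim1} is correct and is essentially the paper's. The paper bounds the $L^q$ norm of the density of $d_r-\bm{x}_r^\top\bm\theta$ uniformly in $d_r$ via Young's inequality. You obtain the same uniform bound by Fubini and a one-dimensional H\"older step in a coordinate with $x_k\neq0$, and you also make explicit the split of $\Lambda^m$ into a bounded part and a $\log_+$ part supported on $|z|\le 1$.

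The gap is in your Case 1 for \eqref{eq:log_claim2}. It stems from the claim that the structure is ``the same as before'' with $\pi_{k_0}(\theta_{k_0})\,d\theta_{k_0}$ replaced by $|x_{i_0,k_0}|f(\cdot)\,d\theta_{k_0}$. The weight $|x_{i_0,k_0}|f(d_0-\bm{x}_{i_0}^\top\bm\theta)\prod_{j\neq k_0}\pi_j(\theta_j)$ is a probability density, but it is not a product density. The factor $|x_{i_0,k_0}|f(d_0-\bm{x}_{i_0}^\top\bm\theta)$ is the conditional density of $\theta_{k_0}$ given the other coordinates, and it depends on every $\theta_k$ with $x_{i_0,k}\neq0$. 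So ``$f$ integrates to one'' can only be used if $\theta_{k_0}$ is integrated innermost.

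If you integrate $\theta_k$ first with $x_{i_0,k}\neq0$, the inner integral contains $f(d_0-\bm{x}_{i_0}^\top\bm\theta)$. You cannot pull this factor out, since $f$ is unbounded at $0$. If you absorb it into a bound uniform in the shifts, the remaining $\theta_{k_0}$-integral runs over $\mathbb R$ with no integrable weight (there is no $\pi_{k_0}$) and diverges. For example, take $p=2$, $k_0=1$, $\bm{x}_{i_0}=\bm{x}_r=(1,1)^\top$. Your Case 1 applies with $k=2$, yet a sup-bound on the $\theta_2$-integral gives no control of its integral over $\theta_1\in\mathbb R$. This configuration is generic, not an edge case, since design rows typically have all entries nonzero.

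The repair is to always integrate $\theta_{k_0}$ innermost and to split on whether $x_{r,k_0}$ vanishes, rather than on whether $\bm{x}_r$ has support off $k_0$.
\begin{itemize}
\item If $x_{r,k_0}\neq0$, your Case 2 computation goes through verbatim. Both $f$ and $\Lambda$ are affine in $\theta_{k_0}$ with nonzero slopes $x_{i_0,k_0}$ and $x_{r,k_0}$, the other coordinates enter only through the shifts, and your H\"older bound with $f\in L^2$ is uniform in those shifts.
\item If $x_{r,k_0}=0$, then $\Lambda(d_r-\bm{x}_r^\top\bm\theta)$ does not involve $\theta_{k_0}$. The $\theta_{k_0}$-integral of $|x_{i_0,k_0}|f$ is exactly one, and what remains is an integral of the type in \eqref{eq:log_claim1} for the nonzero vector $(x_{r,j})_{j\neq k_0}$.
\end{itemize}
This is what the paper's change of variables achieves. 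It passes to $E=d_0-\bm{x}_{i_0}^\top\bm\theta\sim f$, independent of $\theta_j$ for $j\neq k_0$. Then $d_r-\bm{x}_r^\top\bm\theta$ has coefficient $x_{r,k_0}/x_{i_0,k_0}$ on $E$ and $x_{r,j}-x_{r,k_0}x_{i_0,j}/x_{i_0,k_0}$ on $\theta_j$. These coefficients cannot all vanish because $\bm{x}_r\neq0$.
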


\begin{proof}
To prove the claim, choose $q>1$ such that $\pi_j\in L^q(\mathbb R)$ for every $j$. (i) and (iii) of Lemma~\ref{lem:prelim} imply that $f\in L^q(\mathbb R)$
for every finite $q>1$, since its singularity at zero is only
logarithmic and its tails are of order
$|z|^{-1}(\log|z|)^{-(1+\gamma)}$.

Consider first the probability measure having density
$\prod_{j=1}^p\pi_j(\theta_j)$.
For every fixed nonzero $\bm{x}_r$, the random variable
$d_r-\bm{x}_r^\top\bm{\theta}$ has a density $g_r$ whose $L^q$ norm is bounded uniformly in $d_r$.  Indeed, at least one
coefficient of $x_r$ is nonzero, and Young's convolution
inequality gives
\[
\sup_{d_r\in\mathbb R}\|g_r\|_q<\infty.
\]
Since, for every integer $a\ge1$,
\[
\int_{-1}^1
\left\{
1+\log\left(\frac1{|z|}\right)
\right\}^{aq'}dz
<\infty,
\qquad
q'=\frac{q}{q-1},
\]
H\"older's inequality yields
\[
\sup_{d_r\in\mathbb R}
E\left[
\Lambda(d_r-\bm{x}_r^\top\bm{\theta})^a
\right]
<\infty.
\]
Applying the generalized H\"older inequality with exponent
$m$ proves \eqref{eq:log_claim1}. For \eqref{eq:log_claim2}, note that
\[
|x_{i_0,k_0}|
f(d_0-\bm{x}_{i_0}^\top\bm{\theta})
\prod_{j\neq k_0}\pi_j(\theta_j)
\]
is a probability density.  Indeed, after integrating first
with respect to $\theta_{k_0}$, its integral equals one.
Under this density, let
$
E=d_0-\bm{x}_{i_0}^\top\bm{\theta}.
$
Then $E\sim f$, while
$\theta_j\sim\pi_j$, $j\neq k_0$, are mutually independent,
and
\[
\theta_{k_0}
=
\frac{
d_0-E-\sum_{j\neq k_0}x_{i_0,j}\theta_j
}{
x_{i_0,k_0}
}.
\]
Consequently, each $d_r-\bm{x}_r^\top\bm{\theta}$ is a nondegenerate linear combination of variables having $L^q$ densities. Young's inequality again gives a uniformly bounded $L^q$ density, and the preceding H\"older argument proves \eqref{eq:log_claim2}. 

\end{proof}

Now we introduce the main lemma and its proof.

\begin{lem}\label{lem:normalizing}
Under the assumptions of Theorem~\ref{thm:posterior_robustness},
\[
\lim_{\omega\to\infty}
\frac{p(\mathcal{D})}
{\prod_{i\in \mathcal{L}}f(y_i)}
=
p(\mathcal{D}^*).
\]
\end{lem}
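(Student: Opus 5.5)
Write
\[
\frac{p(\mathcal{D})}{\prod_{i\in\mathcal{L}}f(y_i)}
=\int_{\mathbb R^p\times(0,\infty)}\pi(\bm{\beta}_\tau,\sigma)\prod_{i\in\mathcal{K}}\frac{1}{\sigma}f\Bigl(\frac{y_i-\bm{x}_i^\top\bm{\beta}_\tau}{\sigma}\Bigr)\prod_{i\in\mathcal{L}}\frac{f((y_i-\bm{x}_i^\top\bm{\beta}_\tau)/\sigma)/\sigma}{f(y_i)}\,d\bm{\beta}_\tau\,d\sigma .
\]
The limit of the integrand is obtained pointwise from (iii) of Lemma~\ref{lem:prelim}: the last product tends to $1$ for each fixed $(\bm{\beta}_\tau,\sigma)$, exactly as in the proof of Theorem~\ref{thm:posterior_robustness}. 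The integral of the limit is $p(\mathcal{D}^*)$. The whole task is therefore to justify the interchange of limit and integral. Since no single dominating function seems available uniformly in $\omega$, I will split the parameter space according to $\omega$ and use dominated convergence on one piece and direct smallness bounds on the other.

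Fix $\omega$ large, with $|y_i|\asymp|b_i|\omega$ for $i\in\mathcal{L}$. Let
\[
A_\omega=\{(\bm{\beta}_\tau,\sigma):\ |\bm{x}_i^\top\bm{\beta}_\tau|+\sigma\le |y_i|/2\ \text{ for all } i\in\mathcal{L}\}.
\]
On $A_\omega$ we have $|(y_i-\bm{x}_i^\top\bm{\beta}_\tau)/\sigma|\asymp|y_i|/\sigma$. The two-sided bound \eqref{eq:tail_f_bound}, together with monotonicity of $t\mapsto t\,\ell(t)^{1+\gamma}$, then gives
\[
\frac{f((y_i-\bm{x}_i^\top\bm{\beta}_\tau)/\sigma)/\sigma}{f(y_i)}\le C\Bigl\{\frac{\ell(y_i)}{\ell(y_i/\sigma)}\Bigr\}^{1+\gamma}\le C'\{1+|\log\sigma|\}^{1+\gamma}.
\]
This is the standard Lemma S1 of \cite{Hamura2022} type bound. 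Hence on $A_\omega$ the integrand is dominated by
\[
C\{1+|\log\sigma|\}^{(1+\gamma)|\mathcal{L}|}\,\pi(\bm{\beta}_\tau,\sigma)\prod_{i\in\mathcal{K}}\sigma^{-1}f(\cdot).
\]
This dominating function is integrable. Indeed, (A.1) and (A.4) let us integrate out $\bm{\beta}_\tau$ using $p$ of the $\mathcal{K}$ factors (after rescaling $\bm{\theta}=\bm{\beta}_\tau/\sigma$, bounding the remaining $f$'s by \eqref{eq:global_f_bound}, i.e.\ by $\Lambda$'s, and invoking \eqref{eq:log_claim1}), which leaves a power of $\sigma$ controlled by (A.3). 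The logarithmic factor in $\sigma$ is absorbed because $\sigma^{c-1}$ and $\sigma^{c-n}$ moments exist with slack. Dominated convergence then gives the limit $p(\mathcal{D}^*)$ on $A_\omega$, since $\mathbf 1_{A_\omega}\to1$.

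The main obstacle is the complement $A_\omega^c$, where some $|\bm{x}_i^\top\bm{\beta}_\tau|$ or $\sigma$ is of order $\omega$. There I must show that the integral, multiplied by $\prod_{i\in\mathcal{L}}f(y_i)^{-1}\asymp\prod_{i\in\mathcal L}\omega(\log\omega)^{1+\gamma}$, tends to $0$. The plan is to partition $\mathcal{L}$ according to which outliers remain "far" from the fitted value.
\begin{itemize}
\item[(a)] For outliers still satisfying $|y_i|>|\bm{x}_i^\top\bm{\beta}_\tau|+\sigma$, use Lemma~\ref{lem:prelim}(vi) to bound $|y_i|f(\cdot)/\sigma\le C(1+|\bm{x}_i^\top\bm{\beta}_\tau|/\sigma)$. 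This produces a polynomial factor in $|\bm{\beta}_\tau|/\sigma$ that is paid for by the tail condition $\sup|t|^c\pi_j(t)<\infty$ in (A.2).
\item[(b)] For the other outliers, the region forces $\sigma\gtrsim\omega$ or $|\bm{\beta}_\tau|/\sigma\cdot\sigma\gtrsim\omega$. Then either the prior tail contributes $\omega^{-c}$ (via (A.2), or via (A.3) through $E[\sigma^{c-1}]$ and Markov's inequality), or, if the outlier's own likelihood factor is near its singularity, I use \eqref{eq:log_claim2} to integrate over the coordinate $\theta_{k_0}$ along which $\bm{x}_{i}^\top\bm{\theta}$ moves, treating that $f$ factor as a density.
\end{itemize}
Meanwhile the $\mathcal{K}$ factors, at least $|\mathcal{L}|+p$ of them by (A.1), supply enough $\sigma^{-1}$ decay and $\Lambda$ factors to close the integrals via \eqref{eq:log_claim1}, with $E[\sigma^{c-n}]<\infty$ handling small $\sigma$.

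Bookkeeping of the powers should give a bound of order $\omega^{|\mathcal{L}|-c'}(\log\omega)^{\cdots}\cdot\omega^{-|\mathcal{L}|}\cdot(\text{slowly varying})$ with a strictly positive net decay, or at worst a $(\log\omega)^{-\gamma}$-type factor, which suffices. I expect this case analysis to be the delicate part, especially matching exponents so that the slack in $c\le|\mathcal{L}|$ and the extra logarithm $(\log\omega)^{1+\gamma}$ from $f(y_i)^{-1}$ are genuinely beaten. If needed, I would refine $A_\omega$ using thresholds $|y_i|/(\log\omega)^{k}$ to gain logarithmic room.
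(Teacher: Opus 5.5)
Your treatment of $A_\omega$ is correct, and it covers more of the parameter space than the paper's first region $B_\omega\times(0,\omega^\alpha]$. On $A_\omega$ one has $|\epsilon_i|\ge |y_i|/(2\sigma)\ge 1$, so the ratio $f(\epsilon_i)/\{\sigma f(y_i)\}$ is at most $C\{1+\log_+(2\sigma)\}^{1+\gamma}$, and $C\{1+\log_+(2\sigma)\}^{(1+\gamma)|\mathcal{L}|}q_*$ dominates. Use $\log_+$ rather than $|\log\sigma|$: when $c=|\mathcal{L}|$, (A.3) gives exactly $E_{\pi_\sigma}[\sigma^{-|\mathcal{K}|}]<\infty$, so there is no slack near $\sigma=0$ to absorb a logarithm. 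The genuine gap is $A_\omega^c$, which is where the substance of the lemma lies, and two of the mechanisms you propose there fail. First, for $\sigma\gtrsim\omega$, Markov's inequality with $E_{\pi_\sigma}[\sigma^{c-1}]$ yields at best $\omega^{1-c}$. Because $c\le|\mathcal{L}|$, this cannot offset $\prod_{i\in\mathcal{L}}f(y_i)^{-1}\asymp\omega^{|\mathcal{L}|}(\log\omega)^{(1+\gamma)|\mathcal{L}|}$. The decay has to come from the likelihood itself. Bound all $n$ factors by \eqref{eq:global_f_bound} and integrate over $\bm{\theta}=\bm{\beta}_\tau/\sigma$ using \eqref{eq:log_claim1}; uniformly over all $\bm{\beta}_\tau$ this gives at most $C\pi_\sigma(\sigma)\sigma^{-n}\omega^{|\mathcal{L}|}(\log\omega)^{(1+\gamma)|\mathcal{L}|}$. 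Then $\sigma\gtrsim\omega$ gives $\omega^{|\mathcal{L}|-n}(\log\omega)^{(1+\gamma)|\mathcal{L}|}\to 0$.

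Second, and more importantly, when $|\beta_{k_0,\tau}|>\varepsilon\omega$ (which is what $|\bm{x}_i^\top\bm{\beta}_\tau|>|y_i|/4$ forces), you assign the polynomial factor $\prod_{i\in\mathcal{L}}(1+|\bm{x}_i^\top\bm{\beta}_\tau|/\sigma)$ to the prior tail in (A.2). That cannot work. (A.2) only gives $\pi_j(t)\lesssim|t|^{-c}$ with $c\le|\mathcal{L}|$, which does not control a polynomial of degree $|\mathcal{L}|$. Moreover, that tail is already consumed in producing the factor $\sigma^{c-1}\omega^{-c}$ that beats the logarithms. The missing idea is the design-geometry bound. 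By (A.4), when $\|\bm{\beta}_\tau\|$ is large at most $p-1$ clean residuals $a_i-\bm{x}_i^\top\bm{\beta}_\tau$ can be small relative to $\|\bm{\beta}_\tau\|$. Hence $\prod_{i\in\mathcal{K}\setminus\{i_0\}}(1+|a_i-\bm{x}_i^\top\bm{\beta}_\tau|)^{-1}\le(1+\delta\|\bm{\beta}_\tau\|)^{-(|\mathcal{K}|-p)}$, and (A.1) is exactly $|\mathcal{K}|-p\ge|\mathcal{L}|$. This, rather than ``$\sigma^{-1}$ decay'', is what the clean observations contribute.

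With this bound in hand, no far/near split is needed. Since $|y_i|\le|y_i-\bm{x}_i^\top\bm{\beta}_\tau|+|\bm{x}_i^\top\bm{\beta}_\tau|$, \eqref{eq:global_f_bound} bounds every outlier ratio, for all $(\bm{\beta}_\tau,\sigma)$, by $C\ell(|y_i|)^{1+\gamma}(1+|\bm{x}_i^\top\bm{\beta}_\tau|/\sigma)\Lambda(\epsilon_i)$. The density in \eqref{eq:log_claim2} must be the factor of a clean observation $i_0\in\mathcal{K}$ with $x_{i_0,k_0}\neq0$, which replaces the spent prior on $\theta_{k_0}$. Taking an outlier's own $f$ factor as the density instead leaves an uncompensated $1/f(y_i)\asymp\omega(\log\omega)^{1+\gamma}$. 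Finally, writing $1/(\sigma+|r|)\le\max\{1,\sigma^{-1}\}/(1+|r|)$ and combining it with $\sigma^{c-1}$ is where both moment conditions in (A.3) enter. With these ingredients your decomposition closes, since $A_\omega^c\subset\{\sigma\gtrsim\omega\}\cup\bigcup_{k_0}\{|\beta_{k_0,\tau}|>\varepsilon\omega\}$.
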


\begin{proof}

For $i=1,\ldots,n$, write
\[
\epsilon_i
=
\frac{y_i-\bm{x}_i^\top\bm{\beta}_\tau}{\sigma}.
\]
Define
\[
q_*(\bm{\beta}_\tau,\sigma)
=
\pi_\sigma(\sigma)
\left\{
\prod_{j=1}^p
\frac1\sigma
\pi_j\left(\frac{\beta_{j,\tau}}{\sigma}\right)
\right\}
\prod_{i\in \mathcal{K}}
\frac1\sigma
f\left(
\frac{a_i-\bm{x}_i^\top\bm{\beta}_\tau}{\sigma}
\right).
\]
We first note that $q_*$ is integrable.  From
\eqref{eq:global_f_bound},
\[
\frac1\sigma
f\left(
\frac{a_i-\bm{x}_i^\top\bm{\beta}_\tau}{\sigma}
\right)
\le
\frac{C_1}{\sigma}
\Lambda\left(
\frac{a_i-\bm{x}_i^\top\bm{\beta}_\tau}{\sigma}
\right).
\]
Making the change of variables $\theta=\beta_\tau/\sigma$ and applying \eqref{eq:log_claim1}, we obtain
\[
\int_{\mathbb R^p}
q_*(\bm{\beta}_\tau,\sigma)\,d\bm{\beta}_\tau
\le
C\pi_\sigma(\sigma)\sigma^{-|\mathcal{K}|}.
\]
Since $|\mathcal{L}|\ge1$ in the nontrivial case and $|\mathcal{K}|\le n-c$ after enlarging the constant if necessary, condition (A.3) implies
\[
\int_{\mathbb R^p\times(0,\infty)}
q_*(\beta_\tau,\sigma)
\,d\beta_\tau\,d\sigma
<\infty.
\]
Thus $p(\mathcal{D}^*)<\infty$. Now define
\[
R_{i,\omega}(\bm{\beta}_\tau,\sigma)
=
\frac{
f(\epsilon_i)/\sigma
}{
f(y_i)
},
\qquad i\in L.
\]
Then
\begin{equation}
\frac{p(\mathcal{D})}
{\prod_{i\in \mathcal{L}}f(y_i)}
=
\int
q_*(\bm{\beta}_\tau,\sigma)
\prod_{i\in \mathcal{L}}
R_{i,\omega}(\bm{\beta}_\tau,\sigma)
\,d\bm{\beta}_\tau\,d\sigma.
\label{eq:marginal_ratio}
\end{equation}

For every fixed $(\bm{\beta}_\tau,\sigma)$, Lemma \ref{lem:prelim} implies
\begin{equation}
R_{i,\omega}(\bm{\beta}_\tau,\sigma)
\rightarrow1,
\qquad i\in \mathcal{L},
\label{eq:ratio_pointwise}
\end{equation}
because $y_i$ and
$y_i-\bm{x}_i^\top\bm{\beta}_\tau$ eventually have the same sign,
\[
\frac{|y_i-\bm{x}_i^\top\bm{\beta}_\tau|}{|y_i|}
\rightarrow1,
\]
and
\[
\frac{
\ell\left(
|y_i-\bm{x}_i^\top\bm{\beta}_\tau|/\sigma
\right)
}{
\ell(|y_i|)
}
\rightarrow1.
\]

It remains to justify convergence of the integral in
\eqref{eq:marginal_ratio}.  Choose $\varepsilon>0$ such that
\[
\varepsilon
<
\min_{i\in \mathcal{L}}
\frac{|b_i|}
{4\sum_{j=1}^p|x_{ij}|},
\]
and define
\[
B_\omega=[-\varepsilon\omega,\varepsilon\omega]^p.
\]
For all sufficiently large $\omega$ and every
$\bm{\beta}_\tau\in B_\omega$,
\begin{equation}
\frac12|y_i|
\le
|y_i-\bm{x}_i^\top\bm{\beta}_\tau|
\le
\frac32|y_i|,
\qquad i\in \mathcal{L}.
\label{eq:residual_comparable}
\end{equation}
Choose $\alpha$ such that
\[
\frac{|L|}{n}<\alpha<1.
\]
We split \eqref{eq:marginal_ratio} into the three regions
\[
B_\omega\times(0,\omega^\alpha],
\qquad
B_\omega\times(\omega^\alpha,\infty),
\qquad
B_\omega^c\times(0,\infty).
\]
First consider
$B_\omega\times(0,\omega^\alpha]$.
From \eqref{eq:residual_comparable},
\[
\frac{|y_i-\bm{x}_i^\top\bm{\beta}_\tau|}{\sigma}
\ge
C\omega^{1-\alpha}
\longrightarrow\infty.
\]
Using the upper and lower tail bounds in
\eqref{eq:tail_f_bound},
\[
R_{i,\omega}(\bm{\beta}_\tau,\sigma)
\le
C
\frac{|y_i|}
{|y_i-\bm{x}_i^\top\bm{\beta}_\tau|}
\left[
\frac{
\ell(|y_i|)
}{
\ell(
|y_i-\bm{x}_i^\top\bm{\beta}_\tau|/\sigma
)
}
\right]^{1+\gamma}
\le C_\alpha
\]
uniformly on this region.  Hence
\[
q_*(\bm{\beta}_\tau,\sigma)
\prod_{i\in \mathcal{L}}R_{i,\omega}(\bm{\beta}_\tau,\sigma)
\le
C_\alpha^{|L|}q_*(\bm{\beta}_\tau,\sigma).
\]
Since $q_*$ is integrable and the indicators of
$B_\omega\times(0,\omega^\alpha]$ converge pointwise to one,
\eqref{eq:ratio_pointwise} and the dominated convergence
theorem give

\begin{equation}
\lim_{\omega\to\infty}
\int_{B_\omega}
\int_0^{\omega^\alpha}
q_*(\bm{\beta}_\tau,\sigma)
\prod_{i\in \mathcal{L}}
R_{i,\omega}(\bm{\beta}_\tau,\sigma)
\,d\sigma\,d\bm{\beta}_\tau
=
p(\mathcal{D}^*).
\label{eq:main_region}
\end{equation}

Next consider
$B_\omega\times(\omega^\alpha,\infty)$.
By \eqref{eq:global_f_bound} and
\eqref{eq:tail_f_bound}, for $i\in \mathcal{L}$,
\[
\begin{aligned}
R_{i,\omega}(\beta_\tau,\sigma)
&\le
C|y_i|\ell(|y_i|)^{1+\gamma}
\frac{
\Lambda(\epsilon_i)
}{
\sigma+|y_i-\bm{x}_i^\top\bm{\beta}_\tau|
}
\nonumber\\
&\le
C\frac{
|y_i|\ell(|y_i|)^{1+\gamma}
}{
\sigma
}
\Lambda(\epsilon_i).
\end{aligned}
\]
Similarly, for $i\in \mathcal{K}$,
\[
\frac1\sigma
f\left(
\frac{a_i-\bm{x}_i^\top\bm{\beta}_\tau}{\sigma}
\right)
\le
\frac{C}{\sigma}
\Lambda\left(
\frac{a_i-\bm{x}_i^\top\bm{\beta}_\tau}{\sigma}
\right).
\]
After changing variables $\bm{\theta}=\bm{\beta}_\tau/\sigma$ and
using \eqref{eq:log_claim1},
\[
\int_{B_\omega}
q_*(\beta_\tau,\sigma)
\prod_{i\in L}
R_{i,\omega}(\beta_\tau,\sigma)
\,d\bm{\beta}_\tau\le
C\pi_\sigma(\sigma)
\sigma^{-n}
\prod_{i\in \mathcal{L}}
\left\{
|y_i|\ell(|y_i|)^{1+\gamma}
\right\}.
\]
Since $|y_i|=O(\omega)$ and $\ell(|y_i|)=O(\log\omega)$,
\begin{equation}
\begin{aligned}
\int_{\omega^\alpha}^\infty
\int_{B_\omega}
q_*(\bm{\beta}_\tau,\sigma)
\prod_{i\in \mathcal{L}}R_{i,\omega}(\bm{\beta}_\tau,\sigma)
\,d\bm{\beta}_\tau\,d\sigma
&\le
C
\omega^{|\mathcal{L}|-\alpha n}
(\log\omega)^{(1+\gamma)|\mathcal{L}|}
\int_{\omega^\alpha}^\infty
\pi_\sigma(\sigma)\,d\sigma
\\
&\le
C
\omega^{|\mathcal{L}|-\alpha n}
(\log\omega)^{(1+\gamma)|\mathcal{L}|}
\longrightarrow0,
\end{aligned}
\label{eq:large_sigma}
\end{equation}
because $\alpha>|\mathcal{L}|/n$.

It remains to consider $B_\omega^c$. For each $k_0=1,\ldots,p$, let
\[
B_{\omega,k_0}
=
\left\{
\bm{\beta}_\tau:
|\beta_{k_0,\tau}|>\varepsilon\omega
\right\}.
\]
Then
\[
B_\omega^c
\subset
\bigcup_{k_0=1}^p B_{\omega,k_0}.
\]
Fix $k_0$.  By assumption (A.4), we may select
$i_0\in \mathcal{K}$ such that $x_{i_0,k_0}\neq0$.
Define
\[
h_{k_0}(\beta_\tau,\sigma)
=
\pi_\sigma(\sigma)
\left\{
\prod_{j\neq k_0}
\frac1\sigma
\pi_j\left(
\frac{\beta_{j,\tau}}{\sigma}
\right)
\right\}
\frac{|x_{i_0,k_0}|}{\sigma}
f\left(
\frac{a_{i_0}-\bm{x}_{i_0}^\top\bm{\beta}_\tau}{\sigma}
\right).
\]
For every fixed $\sigma$,
\begin{equation}
\int_{\mathbb R^p}
h_{k_0}(\bm{\beta}_\tau,\sigma)\,d\bm{\beta}_\tau
=
\pi_\sigma(\sigma).
\label{eq:h_integral}
\end{equation}
By (A.2), on $B_{\omega,k_0}$,
\begin{equation}
\begin{aligned}
\frac1\sigma
\pi_{k_0}\left(
\frac{\beta_{k_0,\tau}}{\sigma}
\right)
&\le
C
\sigma^{c-1}
|\beta_{k_0,\tau}|^{-c}
\nonumber\\
&\le
C
\sigma^{c-1}\omega^{-c}.
\end{aligned}
\label{eq:prior_tail}
\end{equation}
Moreover, \eqref{eq:global_f_bound} and
\eqref{eq:tail_f_bound} give, for $i\in \mathcal{L}$,
\begin{equation}
R_{i,\omega}(\bm{\beta}_\tau,\sigma)
\le
C\ell(|y_i|)^{1+\gamma}
\left(
1+\frac{|\bm{x}_i^\top\bm{\beta}_\tau|}{\sigma}
\right)
\Lambda(\epsilon_i).
\label{eq:outlier_global}
\end{equation}
For $i\in \mathcal{K}\backslash\{i_0\}$,
\begin{equation}
\frac1\sigma
f\left(
\frac{a_i-x_i^\top\beta_\tau}{\sigma}
\right)
\le
C
\frac{
\Lambda(
(a_i-\bm{x}_i^\top\bm{\beta}_\tau)/\sigma
)
}{
\sigma+|a_i-\bm{x}_i^\top\bm{\beta}_\tau|
}.
\label{eq:clean_global}
\end{equation}
Let
\[
M_\sigma=\max\{1,\sigma^{-1}\}.
\]
Since
\[
\frac{1}{\sigma+|r|}
\le
\frac{M_\sigma}{1+|r|}
\]
and
\[
1+\frac{|\bm{x}_i^\top\bm{\beta}_\tau|}{\sigma}
\le
M_\sigma
(1+|\bm{x}_i^\top\bm{\beta}_\tau|),
\]
combining
\eqref{eq:prior_tail}-\eqref{eq:clean_global} gives
\begin{equation}
q_*(\bm{\beta}_\tau,\sigma)
\prod_{i\in L}
R_{i,\omega}(\bm{\beta}_\tau,\sigma)
I(B_{\omega,k_0})\le
C A_\omega
h_{k_0}(\bm{\beta}_\tau,\sigma)
\sigma^{c-1}M_\sigma^{n-1}
G_{k_0}(\bm{\beta}_\tau)
\prod_{i\neq i_0}\Lambda(\epsilon_i),
\label{eq:outer_pre}
\end{equation}
where
\[
A_\omega
=
\omega^{-c}
\prod_{i\in L}
\ell(|y_i|)^{1+\gamma},\quad G_{k_0}(\beta_\tau)
=
\frac{
\prod_{i\in L}
(1+|x_i^\top\beta_\tau|)
}{
\prod_{i\in K\setminus\{i_0\}}
(1+|a_i-x_i^\top\beta_\tau|)
}.
\]
By the regression-geometry lemma of \cite{Hamura2022}, assumption (A.4) implies that there exist $R,\delta>0$ such
that
\[
\prod_{i\in \mathcal{K}\backslash\{i_0\}}
\frac{1}
{1+|a_i-\bm{x}_i^\top\bm{\beta}_\tau|}
\le
\frac{1}
{(1+\delta\|\bm{\beta}_\tau\|)^{|\mathcal{K}|-p}}
\]
for $\|\bm{\beta}_\tau\|\ge R$.
Since
\[
\prod_{i\in \mathcal{L}}
(1+|\bm{x}_i^\top\bm{\beta}_\tau|)
\le
C(1+\|\bm{\beta}_\tau\|)^{|L|}
\]
and (A.1) gives
\[
|\mathcal{K}|-p\ge|\mathcal{L}|,
\]
we obtain
\begin{equation}
\sup_{\bm{\beta}_\tau\in\mathbb R^p}
G_{k_0}(\bm{\beta}_\tau)
<\infty.
\label{eq:G_bound}
\end{equation}

Finally, make the change of variables
$\bm{\theta}=\bm{\beta}_\tau/\sigma$ in the integral of
\eqref{eq:outer_pre}.  By
\eqref{eq:log_claim2}, \eqref{eq:h_integral}, and
\eqref{eq:G_bound},
\[
\begin{aligned}
&\int_0^\infty
\int_{B_{\omega,k_0}}
q_*(\bm{\beta}_\tau,\sigma)
\prod_{i\in L}
R_{i,\omega}(\bm{\beta}_\tau,\sigma)
\,d\bm{\beta}_\tau\,d\sigma\\
&\qquad\le
C A_\omega
\int_0^\infty
\pi_\sigma(\sigma)
\sigma^{c-1}M_\sigma^{n-1}
\,d\sigma
\nonumber\\
&\qquad=
C A_\omega
\left[
\int_1^\infty
\sigma^{c-1}\pi_\sigma(\sigma)\,d\sigma
+
\int_0^1
\sigma^{c-n}\pi_\sigma(\sigma)\,d\sigma
\right].
\end{aligned}
\]
The term in square brackets is finite by (A.3), whereas
\[
A_\omega
=
O\left\{
\omega^{-c}
(\log\omega)^{(1+\gamma)|\mathcal{L}|}
\right\}
\rightarrow0.
\]
Hence,
\begin{equation}
\int_{B_\omega^c}
q_*(\bm{\beta}_\tau,\sigma)
\prod_{i\in \mathcal{L}}
R_{i,\omega}(\bm{\beta}_\tau,\sigma)
\,d\bm{\beta}_\tau\,d\sigma
\rightarrow0.
\label{eq:outer_zero}
\end{equation}
Combining
\eqref{eq:main_region},
\eqref{eq:large_sigma}, and
\eqref{eq:outer_zero} yields
\[
\lim_{\omega\to\infty}
\frac{p(\mathcal{D})}
{\prod_{i\in \mathcal{L}}f(y_i)}
=
p(\mathcal{D}^*),
\]
which completes the proof.
\end{proof}

\subsection{Proof of Proposition \ref{prop:posterior_moments}}

Let $j=1,\dots,n$ denote the permutation of $i=1,\dots,n$. 
Let us denote the indicator such that $z_{j}=1$ if $j\in\mathcal{L}$ and $0$ if $j\in\mathcal{K}$ and $\vu_o=\left\{u_j:j\in\mathcal{L}\right\}$. 
We first consider the posterior expectation of $|\beta_{k_0}|^c$ conditional on the indicators 
\[
\begin{split}
I&=p(\mathcal{D}) E\left[|\beta_{k_0}|^c \mid \mathcal{D},\vz\right]\\
&=\int_{(0,\infty)^{n_o}}\int_{\mathbb{R}^p\times(0,\infty)}
IG(\sigma;a_\sigma,b_\sigma)
|\beta_{k_0}|^c 
\frac{1}{\sigma^{\nu_{k_0}/2}}\pi_{k_0}\left(\frac{\beta_{k_0}}{\sigma^{\nu_{k_0}/2}}\right)\left[\prod_{k\neq k_0}\frac{1}{\sigma^{\nu_{k}/2}}\pi_{k}\left(\frac{\beta_{k}}{\sigma^{\nu_{k}/2}}\right)\right]
\\
&\quad\times
\left[\prod_{j\in\mathcal{L}}\frac{\tau(1-\tau)}{u_{j}\sigma}\exp\left\{-\frac{\rho_\tau(y_{j}-\vx_{j}^t\vbeta)}{u_{j}\sigma}\right\}
\frac{\gamma}{1+u_{j}}\frac{1}{\left\{1+\log(1+u_{j})\right\}^{1+\gamma}}
\right]\\
&\quad \times
\left[\prod_{j\in\mathcal{K}}\frac{\tau(1-\tau)}{\sigma}\exp\left\{-\frac{\rho_\tau(y_{j}-\vx_{j}^t\vbeta)}{\sigma}\right\}
\right]
d\vbeta d\sigma d\vu_o
\end{split}
\]

Expressing the integral with respect to $\vbeta$ using $z_j$'s and applying the Cauchy-Schwarz inequality repeatedly following \cite{Yu2001}
\[
\begin{split}
I_\vbeta&=\left\{\prod_{j=1}^n\frac{\tau(1-\tau)}{u_{j}^{z_{j}}\sigma}\right\}
\int_{\mathbb{R}^p}|\beta_{k_0}|^c 
\frac{1}{\sigma^{\nu_{k_0}/2}}\pi_{k_0}\left(\frac{\beta_{k_0}}{\sigma^{\nu_{k_0}/2}}\right)\left[\prod_{k\neq k_0}\frac{1}{\sigma^{\nu_{k}/2}}\pi_{k}\left(\frac{\beta_{k}}{\sigma^{\nu_{k}/2}}\right)\right]
\left[\prod_{j=1}^n
\exp\left\{-\frac{\rho_\tau(y_{j}-\vx_{j}^t\vbeta)}{u_{j}^{z_{j}}\sigma}\right\}\right]
d\vbeta\\
&\leq\left(\int_{\mathbb{R}^p}\left[|\beta_{k_0}|^c 
\frac{1}{\sigma^{\nu_{k_0}/2}}\pi_{k_0}\left(\frac{\beta_{k_0}}{\sigma^{\nu_{k_0}/2}}\right)\left\{\prod_{k\neq k_0}\frac{1}{\sigma^{\nu_{k}/2}}\pi_{k}\left(\frac{\beta_{k}}{\sigma^{\nu_{k}/2}}\right)\right\}\right]^2
d\vbeta\right)^{1/2}\\
&\quad\times \prod_{j=1}^{n-1}\left(\int_{\mathbb{R}^p}\frac{\tau(1-\tau)}{u_{j}^{z_{j}}\sigma}
\exp\left\{-2^{j+1}\frac{(y_{j}-\vx_{j}^t\vbeta)(\tau-I(y_{j}\leq\vx_{j}^t\vbeta))}{u_{j}^{z_{j}}\sigma}\right\}d\vbeta\right)^{\frac{1}{2^{j+1}}}\\
&\quad\times
\left(\int_{\mathbb{R}^p}
\frac{\tau(1-\tau)}{u_{n}^{z_{n}}\sigma}
\exp\left\{-2^{n}\frac{(y_n-\vx_n^t\vbeta)(\tau-I(y_n\leq\vx_n^t\vbeta))}{u_n^{z_n}\sigma}\right\}d\vbeta\right)^{\frac{1}{2^{n}}}
\end{split}
\]
Assume the prior density of $\beta_{k}|\sigma_k$ is square integrable and $\sup_{\eta\in\mathbb{R}} \left\{|\eta|^{2c} \pi_{k_0}^2(\eta)\right\}<\infty$,
\[
\begin{split}
I_0 &= \int_{\mathbb{R}^p}\left[|\beta_{k_0}|^c 
\frac{1}{\sigma^{\nu_{k_0}/2}}\pi_{k_0}\left(\frac{\beta_{k_0}}{\sigma^{\nu_{k_0}/2}}\right)\left\{\prod_{k\neq k_0}\frac{1}{\sigma^{\nu_{k}/2}}\pi_{k}\left(\frac{\beta_{k}}{\sigma^{\nu_{k}/2}}\right)\right\}\right]^2
d\vbeta\\
&={\sigma^{\nu_{k_0}(c-1)}}\int_{\mathbb{R}^p}\left[|\eta|^c 
\pi_{k_0}\left(\eta\right)\left\{\prod_{k\neq k_0}\frac{1}{\sigma^{\nu_{k}/2}}\pi_{k}\left(\frac{\beta_{k}}{\sigma^{\nu_{k}/2}}\right)\right\}\right]^2
d\vbeta\\
&\leq\sigma^{\nu_{k_0}(c-1)}\sup_{\eta\in\mathbb{R}} \left\{|\eta|^{2c} \pi_{k_0}^2(\eta)\right\}. 
\end{split}
\]
For the $j$th integral in the product,
\[
I_j=
\frac{\tau(1-\tau)}{u_{j}^{z_{j}}\sigma}
\int_{\mathbb{R}^p}\exp\left\{-2^{j+1}\frac{(y_{j}-\vx_{j}^t\vbeta)(\tau-I(y_{j}\leq\vx_{j}^t\vbeta))}{u_{j}^{z_{j}}\sigma}\right\}d\vbeta,\quad j=1,\dots,n,
\]
writing $\vtheta'=(\vtheta_{-k_0}',\theta_{k_0})=(\vbeta_{-k_0}',\vx_j\vbeta)$ with the Jacobian $|d\vbeta/d\vtheta|=1/|x_{jk_0}|$, 
\[
\begin{split}
I_j&=\frac{1}{|x_{jk_0}|}\int_{\mathbb{R}^p}\exp\left\{-2^{j+1}\frac{(y_j-\theta_{k_0})(\tau-I(y_j\leq \theta_{k_0}))}{u_{j}^{z_{j}}\sigma}\right\}d\vtheta\\
&=\frac{1}{|x_{jk_0}|}\frac{u_{j}^{z_{j}}\sigma}{\tau(1-\tau)}\frac{1}{2^{j+1}}
\end{split}
\]
Substituting into $I_\vbeta$, 
\[
\begin{split}
I_\vbeta&\leq \sigma^{\nu_{k_0}(c-1)}\sup_{\eta\in\mathbb{R}} \left\{|\eta|^{2c} \pi_{k_0}^2(\eta)\right\}
\left\{\frac{\tau(1-\tau)}{\sigma}\right\}^n
\left\{\prod_{j\in\mathcal{L}}u_j^{1/2^{j+1}-1}\right\} \\
&\quad\quad\times\left[\prod_{j=1}^{n-1} \left\{\frac{1}{|x_{jk_0}|}\frac{\sigma}{\tau(1-\tau)}\frac{1}{2^{j+1}}\right\}^{\frac{1}{2^{j+1}}}\right]\times 
\left[\frac{1}{|x_{nq}|}\frac{\sigma}{\tau(1-\tau)}\frac{1}{2^{n}}\right]^{\frac{1}{2^{n}}}\\
&=\sigma^{\nu_{k_0}(c-1)}\sup_{\eta\in\mathbb{R}} \left\{|\eta|^{2c} \pi_{k_0}^2(\eta)\right\}
\left\{\prod_{j\in\mathcal{L}}u_j^{1/2^{j+1}-1}\right\} \\
&\quad\times
\left(\prod_{j=1}^{n-1}\frac{1}{|x_{jk_0}|^{1/2^{j+1}}}\frac{1}{2^{\frac{j+1}{2^{j+1}}}}\right)
\frac{1}{|x_{nk_0}|^{1/2^{n}}}\frac{1}{2^{\frac{n}{2^{n}}}}
\left\{\frac{\sigma}{\tau(1-\tau)}\right\}^{\sum_{j=1}^{n-1}1/2^{j+1} + 1/2^n - n}\\
&=\sigma^{\nu_{k_0}(c-1)-(n-\frac{1}{2})}\left\{\prod_{j\in\mathcal{L}}u_j^{1/2^{j+1}-1}\right\} \times C_1,
\end{split}
\]
where $C_1$ is a finite constant. 

Now 
\[
\begin{split}
    I&\leq C_1\int_{(0,\infty)^{n}}\int_{(0,\infty)} IG(\sigma;a_\sigma,b_\sigma) 
    \sigma^{\nu_{k_0}(c-1)-(n-\frac{1}{2})}\\
    &\quad\quad\times
    \left[\prod_{j\in\mathcal{L}}u_j^{1/2^{j+1}-1}\frac{\gamma}{1+u_j}\frac{1}{\left\{1+\log(1+u_j)\right\}^{1+\gamma}}\right]d\sigma d\vu_o. 
\end{split}
\]
From the assumption $c\leq n$, the integral with respect to $\sigma$ is finite. 
Then we have
\[
\begin{split}    
I&\leq C_2\int_{(0,\infty)^{n_o}} \left[\prod_{j\in\mathcal{L}}u_j^{1/2^{j+1}-1}\frac{\gamma}{1+u_j}\frac{1}{\left\{1+\log(1+u_j)\right\}^{1+\gamma}}\right] d\vu_o\\
&=C_2\left[\prod_{i\in\mathcal{L}}\int_0^\infty 
u_j^{1/2^{j+1}-1}\frac{\gamma}{1+u_j}\frac{1}{\left\{1+\log(1+u_j)\right\}^{1+\gamma}}\right]
 d\vu_o
\end{split}
\]
For $j\in \mathcal{L}$, the integral 
\[
\int_0^\infty u_j^{1/2^{j+1}-1}\frac{\gamma}{1+u_j}\frac{1}{\left\{1+\log(1+u_j)\right\}^{1+\gamma}}du_j
\]
is finite, because
for $u_j\in(0,1)$, the prior density is bounded
\[
\begin{split}
&\int_0^1 u_j^{1/2^{j+1}-1}\frac{\gamma}{1+u_j}\frac{1}{\left\{1+\log(1+u_j)\right\}^{1+\gamma}}du_j\\
&\leq C_{3j}\int_0^1 u_j^{1/2^{j+1}-1} du_j< \infty
\end{split}
\]
and for $u_j\in[1,\infty)$ 
\[
\begin{split}
&\int_1^\infty u_j^{1/2^{j+1}-1}\frac{\gamma}{1+u_j}\frac{1}{\left\{1+\log(1+u_j)\right\}^{1+\gamma}}du_j\\
&\leq \gamma\int_1^\infty u_j^{1/2^{j+1}-1}\frac{1}{u_j}\frac{1}{\left\{\log(u_j)\right\}^{1+\gamma}}du_j
\end{split}
\]
since $1/2^{j+1}-2<1/4-2<-1$

Therefore, $
E\left[|\beta_{k_0}|^c \mid \mathcal{D},\vz\right]\leq \infty$ 
for any configuration of $\left\{z_j\right\}$.
Finally, $E\left[|\beta_{k_0}|^c \mid \mathcal{D}\right]<\infty$ follows by the law of iterated expectations.

Next, we consider the posterior expectation of $\sigma^d$ conditional on the indicators 
\[
\begin{split}
I&=p(\mathcal{D})E\left[\sigma^d \mid \mathcal{D},\vz\right]\\
&=\int_{(0,\infty)^{n_o}}\int_{\mathbb{R}^p\times(0,\infty)}
 IG(\sigma;a_\sigma,b_\sigma)
\sigma^d
\left[\prod_{k=1}^p\frac{1}{\sigma^{\nu_{k}/2}}\pi_{k}\left(\frac{\beta_{k}}{\sigma^{\nu_{k}/2}}\right)\right]
\\
&\quad\times
\left[\prod_{j\in\mathcal{L}}\frac{\tau(1-\tau)}{u_{j}\sigma}\exp\left\{-\frac{\rho_\tau(y_{j}-\vx_{j}^t\vbeta)}{u_{j}\sigma}\right\}
\frac{\gamma}{1+u_{j}}\frac{1}{\left\{1+\log(1+u_{j})\right\}^{1+\gamma}}
\right]\\
&\quad \times
\left[\prod_{j\in\mathcal{K}}\frac{\tau(1-\tau)}{\sigma}\exp\left\{-\frac{\rho_\tau(y_{j}-\vx_{j}^t\vbeta)}{\sigma}\right\}
\right]
d\vbeta d\sigma d\vu_o
\end{split}
\]

Again, expressing the integral with respect to $\vbeta$ using $z_j$'s and applying the Cauchy-Schwarz inequality repeatedly,
\[
\begin{split}
I_\vbeta&=\left\{\prod_{j=1}^n\frac{\tau(1-\tau)}{u_{j}^{z_{j}}\sigma}\right\}
\int_{\mathbb{R}^p}
\left[\prod_{k=1}^p\frac{1}{\sigma^{\nu_{k}/2}}\pi_{k}\left(\frac{\beta_{k}}{\sigma^{\nu_{k}/2}}\right)\right]
\left[\prod_{j=1}^n
\exp\left\{-\frac{\rho_\tau(y_{j}-\vx_{j}^t\vbeta)}{u_{j}^{z_{j}}\sigma}\right\}\right]
d\vbeta\\
&\leq\left(\int_{\mathbb{R}}\left[
\left\{\prod_{k=1}^p\frac{1}{\sigma^{\nu_{k}/2}}\pi_{k}\left(\frac{\beta_{k}}{\sigma^{\nu_{k}/2}}\right)\right\}\right]^2
d\vbeta\right)^{1/2}\\
&\quad\times \prod_{j=1}^{n-1}\left(\int_{\mathbb{R}}\frac{\tau(1-\tau)}{u_{j}^{z_{j}}\sigma}
\exp\left\{-2^{j+1}\frac{(y_{j}-\vx_{j}^t\vbeta)(\tau-I(y_{j}\leq\vx_{j}^t\vbeta))}{u_{j}^{z_{j}}\sigma}\right\}d\vbeta\right)^{\frac{1}{2^{j+1}}}\\
&\quad\times
\left(\int_{\mathbb{R}}
\frac{\tau(1-\tau)}{u_{n}^{z_{n}}\sigma}
\exp\left\{-2^{n}\frac{(y_n-\vx_n^t\vbeta)(\tau-I(y_n\leq\vx_n^t\vbeta))}{u_n^{z_n}\sigma}\right\}d\vbeta\right)^{\frac{1}{2^{n}}}\\
&\leq C_5\times 
\prod_{j=1}^{n-1}\left(\int_{\mathbb{R}}\frac{\tau(1-\tau)}{u_{j}^{z_{j}}\sigma}
\exp\left\{-2^{j+1}\frac{(y_{j}-\vx_{j}^t\vbeta)(\tau-I(y_{j}\leq\vx_{j}^t\vbeta))}{u_{j}^{z_{j}}\sigma}\right\}d\vbeta\right)^{\frac{1}{2^{j+1}}}\\
&\quad\times
\left(\int_{\mathbb{R}}
\frac{\tau(1-\tau)}{u_{n}^{z_{n}}\sigma}
\exp\left\{-2^{n}\frac{(y_n-\vx_n^t\vbeta)(\tau-I(y_n\leq\vx_n^t\vbeta))}{u_n^{z_n}\sigma}\right\}d\vbeta\right)^{\frac{1}{2^{n}}}\\
&\leq \sigma^{-(n-\frac{1}{2})}\left\{\prod_{j\in\mathcal{L}}u_j^{1/2^{j+1}-1}\right\} \times C_6, 
\end{split}
\]
where $C_5$ and $C_6$ are finite constants. Then, 
\[
    I\leq C_1\int_{(0,\infty)^{n}}\int_{(0,\infty)}  IG(\sigma;a_\sigma,b_\sigma) 
    \sigma^{d-(n-\frac{1}{2})}
    \left[\prod_{j\in\mathcal{L}}u_j^{1/2^{j+1}-1}\frac{\gamma}{1+u_j}\frac{1}{\left\{1+\log(1+u_j)\right\}^{1+\gamma}}\right]d\sigma d\vu_o. 
\]
From the assumption $d\leq n-1$, the integral with respect to $\sigma$ is finite. 
Proceeding in the same way as above, the integral with respect to $\vu_o$ is shown to be finite. 
Then, $E[\sigma^d|\mathcal{D},\vz]<\infty$ for any configuration of $\left\{z_j\right\}$. 
Finally, $E\left[\sigma^d \mid \mathcal{D}\right]<\infty$ follows by the law of iterated expectations. 

\section{Details of posterior computation}

\subsection{Details on the Gibbs sampler}

Although the LPAL density does not admit a closed-form expression, its hierarchical construction leads to a tractable Gibbs sampling algorithm. The computational strategy combines two data-augmentation schemes. First, we use the normal-exponential mixture representation of the AL of \cite{Kozumi2011}, which renders the quantile regression likelihood conditionally Gaussian. Second, we employ the gamma augmentation of the log-Pareto mixing distribution introduced by \cite{Hamura2022}.

Throughout this paper, we treat the tail parameter $\gamma$ as fixed rather than estimating it from the data. \cite{Hamura2022} consider both fixed and estimated versions of $\gamma$ and employ the fixed specification $\gamma=1$ in several of their numerical studies. Following this practice, we set $\gamma=1$ unless otherwise stated. This choice retains the log-regularly varying tail behavior required for posterior robustness while avoiding an additional layer of posterior sampling.

For each observation, let $z_i$ denote the mixture indicator, with $z_i=0$ corresponding to the baseline AL component and $z_i=1$ to the LPAL component. Conditional on $z_i=1$, the latent variable $u_i$ acts as an observation-specific scale inflation factor. We further introduce $\theta_i$ through the normal-exponential representation of the AL distribution. Conditional on $(z_i,u_i,\theta_i)$, the observation model becomes Gaussian as:
\[
y_i=\begin{cases}
&\bm{x}_i'\bm{\beta}_\tau+\tau_1\theta_i+\tau_2\sqrt{\sigma\theta_i}\epsilon_i',\quad \text{ if }z_i=0,\\
&\bm{x}_i'\bm{\beta}_\tau+\tau_1\theta_i+\tau_2\sqrt{\sigma u_i\theta_i}\epsilon_i',\quad \text{ if }z_i=1,
\end{cases}
\]
where $\tau_1=\dfrac{1-2\tau}{\tau(1-\tau)}$ and $\tau_2^2=\dfrac{2}{\tau(1-\tau)}$. For the case where $z_i=0$, we have $\theta_i\sim \text{Exp}(\sigma)$. For the case where $z_i=1$, we have $\theta_i\sim \text{Exp}(\sigma u_i)$. For both cases, we have $\epsilon_i'\sim N(0,1)$.

Using the previous location-scale mixture representation, likelihood is proportional to
\[
\prod_{i=1}^{n}\left(\frac{1}{\tau_2^2\sigma\theta_i u_i^{z_i}}\right)^{1/2}\exp\left(-\frac{(y_i-\bm{x}_i'\bm{\beta}_\tau-\tau_1\theta_i)^2}{2\tau_2^2\sigma\theta_iu_i^{z_i}}\right)\times\prod_{i=1}^{n}\frac{1}{\sigma u_i^{z_i}}\exp\left(-\frac{\theta_i}{\sigma u_i^{z_i}}\right).
\]
For the priors of parameters in the model, we set $s\sim \text{Beta}(a_s,b_s)$ and $\sigma\sim \text{IG}(a_\sigma,b_\sigma)$. With respect to $\bm{\beta}_\tau$, we implement the horseshoe prior \citep{Carvalho2010} for each element of regression coefficient $\bm{\beta}_\tau=(\beta_{1,\tau},\ldots,\beta_{p,\tau})'$ as
\[
\beta_{j,\tau}\sim N(0,\tau_{\beta}^2\lambda_j^2),\quad \lambda_j\sim C^+(0,1), \quad j=1,\ldots,p. 
\]
For the global shrinkage parameter $\tau_{\beta}$, we assign the half-Cauchy prior as $\tau_\beta\sim \mathcal{C}^+(0,\tau_0)$, following the guidance of \cite{Piironen2017}, with the specification of $p_0=1$ with the possible ultra-sparse environment.

To sample from the full-conditional of $u_i$, we utilize the augmented representation of $H(u_i;\gamma)$ as
\[
H(u_i;\gamma)=\int_{0}^{\infty}\int_{0}^{\infty}\mathrm{Ga}(u_i;1,v_i)\text{Ga}(v_i;w_i,1)\text{Ga}(w_i;\gamma,1)dv_idw_i.
\]
which is introduced in \cite{Hamura2022}. 

The joint distribution of the parameters, latent variables and data is given by
\[
\begin{split}
&\prod_{i=1}^{n}\left[\left(\frac{1}{\tau_2^2\sigma\theta_i u_i^{z_i}}\right)^{1/2}\exp\left(-\frac{(y_i-\bm{x}_i'\bm{\beta}_\tau-\tau_1\theta_i)^2}{2\tau_2^2\sigma\theta_iu_i^{z_i}}\right)\times\frac{1}{\sigma u_i^{z_i}}\exp\left(-\frac{\theta_i}{\sigma u_i^{z_i}}\right)\times s^{z_i}(1-s)^{1-z_i}\right]\\
&\quad\times\prod_{i=1}^{n}\left[ v_i\, \exp\left\{-v_i u_i\right\}\times 
\frac{v_i^{w_i - 1}}{\Gamma(w_i)} \exp\left\{-v_i\right\}\times 
\frac{w_i^{\gamma - 1}}{\Gamma(\gamma)}  \exp\left\{-w_i\right\}
\right]\\
&\quad \times p(\sigma)p(s)\prod_{j=1}^p\left[p(\beta_{j,\tau}|\tau_\beta^2,\lambda_j^2)p(\tau_\beta)p(\lambda_j)\right]
\end{split}
\]
From the full conditional distribution, summary of sampling schemes can be organized as
\begin{itemize}
\item Sample $s$ from $\text{Beta}(\tilde{a}_s,\tilde{b}_s)$, where $\tilde{a}_s=a_s+\sum_{i=1}^{n}z_i$ and $\tilde{b}_s=b_s+n-\sum_{i=1}^{n}z_i$.
\item Sample $z_i$ from the Bernoulli distribution; the probabilities of $z_i=0$ and $z_i=1$ are proportional to $(1-s)f_{\AL_\tau}(y_i,\bm{x}_i^\top\bm{\beta}_\tau,\sigma)$ and $s f_{\AL_\tau}(y_i;\bm{x}_i^\top\bm{\beta}_\tau,u_i\sigma)$, respectively.
\item Sample $\sigma$ from $\text{IG}(\tilde{a}_\sigma,\tilde{b}_\sigma)$, where
\[
\tilde{a}_\sigma=a_\sigma+\frac{3}{2}n,\quad \tilde{b}_\sigma=b_\sigma+\sum_{i=1}^{n}\frac{\theta_i}{u_i^{z_i}}+\frac{1}{2\tau_2^2}\sum_{i=1}^{n}\frac{(y_i-\bm{x}_i^\top\bm{\beta}_\tau-\tau_1\theta_i)^2}{\theta_i u_i^{z_i}}.
\]
\item Sample $\theta_i$ from the generalized inverse Gaussian distribution $\text{GIG}(1/2,\chi_i,\psi_i)$, where
\[
\chi_i=\frac{(y_i-\bm{x}_i^\top\bm{\beta}_{\tau})^2
}{\tau_2^2\sigma u_i^{z_i}},\quad \psi_i=\frac{\tau_1^2}{\tau_2^2\sigma u_i^{z_i}}+\frac{2}{\sigma u_i^{z_i}}.
\]
\item For $i$ with $z_i=1$
\begin{itemize}
    \item 
    Sample $u_i$ from the following generalized inverse Gaussian distribution:
    \[
\text{GIG}\left(-\frac{1}{2}, 2v_i, \frac{1}{\sigma}\left(\frac{(y_i-\bm{x}_i^\top\bm{\beta}_\tau-\tau_1\theta_i)^2}{\tau_2^2\theta_i}+2\theta_i\right)\right)
    \]
    \item 
    Sample $w_i$ from $\text{Ga}(1+\gamma,1+\log(1+u_i))$ and $v_i$ from $\mathrm{Ga}(1+w_i,1+u_i)$ conditionally on $w_i$.
\end{itemize}
Otherwise, sample $u_i$ from $\mathrm{Ga}(1,v_i)$.
\item Sample $\bm{\beta}_\tau$ from the full conditional distribution $N(\widetilde{\bm{\mu}}_\beta,\widetilde{\bm{\Sigma}}_\beta)$, where
\[
\widetilde{\bm{\Sigma}}_{\beta}=\left(\frac{1}{\tau_\beta^2}\bm{\Lambda}^{-1}+\frac{1}{\sigma\tau_2^2} \bm{X}^\top\bm{D}\bm{X}\right)^{-1},\quad\tilde{\bm{\mu}}_\beta=\widetilde{\bm{\Sigma}}_\beta\cdot \frac{1}{\sigma\tau_2^2}\bm{X}^\top\bm{D}\widetilde{\bm{y}},
\]
with $\bm{\Lambda}=\text{diag}(\lambda_1^2,\ldots,\lambda_p^2)$, $\bm{D}=\text{diag}(\theta_1^{-1}u_1^{-z_1},\ldots,\theta_n^{-1}u_n^{-z_n})$, and $\widetilde{\bm{y}}=(y_1-\tau_1\theta_1,\ldots,y_n-\tau_1\theta_n)'$.
\end{itemize}
For local and global shrinkage parameters $\tau_\beta$ and $\lambda_j$, we use the representation of scale mixture introduced in \cite{Makalic2016}. That is, we have the following schemes of the sample.
\begin{itemize}
\item Sample $\tau_\beta^2$ from $\text{IG}\left(\dfrac{1+p}{2},\xi^{-1}+\dfrac{1}{2}\displaystyle\sum_{j=1}^{p}\frac{\beta_{j,\tau}^2}{\lambda_j^2}\right)$.
\item Sample $\xi$ from $\text{IG}(1,\tau_0^{-2}+\tau_\beta^{-2})$.
\item Sample $\lambda_j^2$ from $\text{IG}\left(1,\nu_j^{-1}+\dfrac{\beta_{j,\tau}^2}{2\tau_\beta^2}\right)$.
\item Sample $\nu_j$ from $\text{IG}(1,\lambda_j^{-2}+1)$.
\end{itemize}

\subsection{Details on the Variational Bayes updates}

We monitor convergence using the evidence lower bound (ELBO),
\[
\mathcal{L}(q)
=
E_q\{\log p(\boldsymbol{y},\Theta)\}
-
E_q\{\log q(\Theta)\}.
\]
The coordinate updates are iterated until the relative change in the ELBO between successive iterations falls below a prespecified tolerance. The closed-form expression of the ELBO used in our implementation is given by
\[
\begin{aligned}
\text{ELBO}&=-\frac{n}{2}\left(\log 2\pi + \log \tau_2^2 - 2\log 2 - 2\log \gamma \right) + \frac{3p}{2} -(2p+2)\log \Gamma\left(\frac{1}{2}\right)+\log\Gamma\left(\frac{1+p}{2}\right)\\
&-\log B(a_s,b_s)+\log B(\tilde{a}_s,\tilde{b}_s)-\sum_{i=1}^{n}(p_i\log p_i + (1-p_i)\log(1-p_i))\\
&-\log\Gamma(a_\sigma) + \log\Gamma(\tilde{a}_\sigma)+a_\sigma \log b_\sigma - \tilde{a}_\sigma\log \tilde{b}_\sigma-\frac{1}{2}\log|\tilde{\Sigma}_{\beta}|\\
&-\frac{1}{4}\sum_{i=1}^{n}(\log \psi_i+\log \chi_i)+\log K_{1/2}(\sqrt{\psi_i\chi_i})+\frac{1}{2}\sum_{i=1}^{n}(\psi_i E(\theta_i)+\chi_i E(\theta_i^{-1}))\\
&-\frac{1}{4}\sum_{i=1}^{n}(\log \psi_i'+\log \chi_i')+\log K_{1-\frac{3}{2}E(z_i)}(\sqrt{\psi_i'\chi_i'})+\frac{1}{2}\sum_{i=1}^{n}(\psi_i' E(u_i)+\chi_i' E(u_i^{-1}))\\
&-(1+\gamma)\sum_{i=1}^{n}E(\log(1+\log (1+u_i)))+\sum_{i=1}^{n}E(\log\Gamma(1+w_i))\\
&-\sum_{i=1}^{n}(2+E(w_i))E(\log(1+u_i))-\sum_{i=1}^{n}E(u_i)E(\log v_i) + \sum_{i=1}^{n}(1+E(u_i)E(v_i))\\
&-\frac{1+p}{2}\log b_\tau + E(\tau_\beta^{-2})E(\xi^{-1})-\log b_\xi-\sum_{j=1}^{p}(\log b_{\lambda_j}+\log b_{\nu_j})
\end{aligned}
\]

\section{Additional simulation results}

\subsection{Additional coefficient configurations}

To examine whether the conclusions in the main simulation study depend on the particular coefficient configuration, we additionally considered the moderately high-dimensional settings with $\boldsymbol{\beta}^{(2)}$ and $\boldsymbol{\beta}^{(3)}$ defined in Section \ref{sec4.1}. Tables \ref{tab:s1}-\ref{tab:s4} report the corresponding results. Overall, the qualitative findings from the main analysis remain largely unchanged across these alternative sparsity patterns. Under $\boldsymbol{\beta}^{(2)}$, the proposed methods remain competitive in RMSE at the lower and median quantiles and show a pronounced advantage when the fitted quantile is directly affected by contamination. In particular, at $\tau=0.9$ under one-sided contamination and at both $\tau=0.1$ and $0.9$ under two-sided contamination, the RMSEs of the proposed methods are substantially smaller than those of GAL and skew-$t$, while their credible intervals are considerably shorter than those of the competing heavy-tailed specifications. GAH remains competitive in some settings, but generally produces wider intervals. As in the main low-dimensional results, AL-LPAL can exhibit coverage below the nominal level for $\bm{\beta}^{(2)}$, whereas AL-LPAL-VB often attains coverage closer to 0.95 or above, at the cost of somewhat wider intervals.

The results under the more strongly sparse configuration $\boldsymbol{\beta}^{(3)}$ are particularly favorable to the proposed approach. Across both one- and two-sided contamination settings, AL-LPAL and AL-LPAL-VB generally achieve RMSEs comparable to or smaller than those of GAH, and markedly smaller than those of GAL and skew-$t$ when contamination affects the quantile of interest. At the same time, the proposed methods retain substantially shorter credible intervals, with empirical coverage generally close to the nominal level. These additional experiments therefore indicate that the robustness and interval-concentration patterns observed for $\boldsymbol{\beta}^{(1)}$ are not specific to a single coefficient configuration, but persist across substantially different degrees and structures of sparsity.

\begin{table}[!ht]
    \centering
    \small
    \begin{tabular}{ccc|ccc}
        \hline
        Case&Quantile&Method&RMSE&CP&AvL\\
        \hline
        \multirow{15}{*}{Case 1}&\multirow{5}{*}{0.1}&AL-LPAL-VB&0.095(0.019)&0.974(0.028)&0.517(0.028)\\
        &&AL-LPAL&0.087(0.021)&0.887(0.086)&0.207(0.043)\\
        &&GAH&\textbf{0.080}(0.019)&0.989(0.027)&0.432(0.047)\\
        &&GAL&0.361(0.161)&0.955(0.015)&1.433(0.340)\\
        &&Skew-$t$&0.119(0.023)&1.000(0.002)&1.354(0.177)\\
        \cline{2-6}
        &\multirow{5}{*}{0.5}&AL-LPAL-VB&0.078(0.020)&0.993(0.020)&0.459(0.026)\\
        &&AL-LPAL&\textbf{0.069}(0.019)&0.948(0.062)&0.228(0.034)\\
        &&GAH&0.072(0.018)&1.000(0.004)&0.515(0.043)\\
        &&GAL&0.357(0.040)&0.952(0.002)&1.456(0.037)\\
        &&Skew-$t$&0.105(0.173)&1.000(0.002)&1.264(0.851)\\
        \cline{2-6}
        &\multirow{5}{*}{0.9}&AL-LPAL-VB&\textbf{0.091}(0.022)&0.993(0.019)&0.573(0.027)\\
        &&AL-LPAL&0.126(0.354)&0.887(0.083)&0.237(0.293)\\
        &&GAH&1.884(1.138)&0.905(0.059)&3.147(1.274)\\
        &&GAL&2.960(0.107)&0.892(0.058)&3.253(0.391)\\
        &&Skew-$t$&3.850(0.277)&0.933(0.042)&11.831(1.705)\\
        \hline
        \multirow{15}{*}{Case 2}&\multirow{5}{*}{0.1}&AL-LPAL-VB&0.129(0.025)&0.953(0.032)&0.569(0.035)\\
        &&AL-LPAL&0.129(0.044)&0.877(0.091)&0.275(0.085)\\
        &&GAH&\textbf{0.109}(0.028)&0.985(0.029)&0.547(0.077)\\
        &&GAL&0.360(0.142)&0.956(0.017)&1.471(0.308)\\
        &&Skew-$t$&0.176(0.042)&1.000(0.004)&1.856(0.303)\\
        \cline{2-6}
        &\multirow{5}{*}{0.5}&AL-LPAL-VB&0.088(0.026)&0.990(0.025)&0.499(0.032)\\
        &&AL-LPAL&\textbf{0.073}(0.023)&0.953(0.062)&0.255(0.045)\\
        &&GAH&0.087(0.024)&0.999(0.007)&0.600(0.062)\\
        &&GAL&0.368(0.042)&0.952(0.004)&1.480(0.040)\\
        &&Skew-$t$&0.137(0.225)&1.000(0.004)&1.714(1.084)\\
        \cline{2-6}
        &\multirow{5}{*}{0.9}&AL-LPAL-VB&\textbf{0.128}(0.034)&0.973(0.037)&0.632(0.038)\\
        &&AL-LPAL&0.395(0.947)&0.864(0.099)&0.431(0.568)\\
        &&GAH&1.967(1.083)&0.903(0.057)&3.277(1.190)\\
        &&GAL&2.926(0.109)&0.892(0.058)&3.247(0.386)\\
        &&Skew-$t$&3.819(0.274)&0.932(0.042)&11.818(1.734)\\
        \hline
        \multirow{15}{*}{Case 3}&\multirow{5}{*}{0.1}&AL-LPAL-VB&0.132(0.027)&0.951(0.033)&0.583(0.036)\\
        &&AL-LPAL&0.130(0.047)&0.883(0.086)&0.290(0.098)\\
        &&GAH&\textbf{0.112}(0.030)&0.985(0.030)&0.575(0.081)\\
        &&GAL&0.385(0.143)&0.953(0.017)&1.505(0.291)\\
        &&Skew-$t$&0.192(0.053)&1.000(0.004)&2.001(0.379)\\
        \cline{2-6}
        &\multirow{5}{*}{0.5}&AL-LPAL-VB&0.094(0.026)&0.988(0.027)&0.516(0.034)\\
        &&AL-LPAL&\textbf{0.080}(0.024)&0.959(0.054)&0.281(0.050)\\
        &&GAH&0.092(0.024)&0.999(0.007)&0.631(0.067)\\
        &&GAL&0.371(0.049)&0.952(0.005)&1.496(0.046)\\
        &&Skew-$t$&0.144(0.227)&1.000(0.005)&1.843(1.156)\\
        \cline{2-6}
        &\multirow{5}{*}{0.9}&AL-LPAL-VB&\textbf{0.132}(0.038)&0.972(0.042)&0.645(0.034)\\
        &&AL-LPAL&0.525(1.137)&0.864(0.103)&0.503(0.673)\\
        &&GAH&1.958(1.054)&0.906(0.055)&3.319(1.150)\\
        &&GAL&2.924(0.110)&0.891(0.056)&3.240(0.380)\\
        &&Skew-$t$&3.830(0.293)&0.934(0.041)&12.068(1.617)\\
        \hline
    \end{tabular}
    \caption{Simulation results under one-sided contamination (Cases 1-3) for the moderately high-dimensional setting with $n=100$, $p=20$, and $\bm{\beta}=\bm{\beta}^{(2)}$. Results are based on 500 simulation replications, with Monte Carlo standard deviations reported in parentheses.}
    \label{tab:s1}
\end{table}

\begin{table}[!ht]
    \centering
    \small
    \begin{tabular}{ccc|ccc}
        \hline
        Case&Quantile&Method&RMSE&CP&AvL\\
        \hline
        \multirow{15}{*}{Case 4}&\multirow{5}{*}{0.1}&AL-LPAL-VB&\textbf{0.098}(0.028)&0.997(0.014)&0.701(0.029)\\
        &&AL-LPAL&0.095(0.051)&0.886(0.084)&0.226(0.122)\\
        &&GAH&0.257(0.036)&0.952(0.002)&1.244(0.048)\\
        &&GAL&2.840(0.150)&0.922(0.043)&3.686(0.592)\\
        &&Skew-$t$&4.945(0.570)&0.910(0.053)&16.127(2.965)\\
        \cline{2-6}
        &\multirow{5}{*}{0.5}&AL-LPAL-VB&\textbf{0.081}(0.020)&0.998(0.010)&0.578(0.029)\\
        &&AL-LPAL&0.073(0.018)&0.948(0.061)&0.245(0.039)\\
        &&GAH&0.078(0.021)&1.000(0.002)&0.713(0.062)\\
        &&GAL&0.283(0.056)&1.000(0.000)&1.915(0.075)\\
        &&Skew-$t$&0.087(0.018)&1.000(0.000)&1.944(0.214)\\
        \cline{2-6}
        &\multirow{5}{*}{0.9}&AL-LPAL-VB&\textbf{0.097}(0.025)&0.998(0.011)&0.691(0.027)\\
        &&AL-LPAL&0.093(0.023)&0.888(0.086)&0.219(0.047)\\
        &&GAH&0.236(0.039)&0.953(0.007)&1.185(0.053)\\
        &&GAL&2.110(0.112)&0.922(0.047)&3.043(0.413)\\
        &&Skew-$t$&3.320(0.416)&0.930(0.037)&12.888(2.614)\\
        \hline
        \multirow{15}{*}{Case 5}&\multirow{5}{*}{0.1}&AL-LPAL-VB&\textbf{0.139}(0.046)&0.981(0.041)&0.758(0.037)\\
        &&AL-LPAL&0.137(0.057)&0.885(0.090)&0.295(0.093)\\
        &&GAH&0.281(0.047)&0.953(0.006)&1.327(0.061)\\
        &&GAL&2.811(0.151)&0.923(0.041)&3.703(0.584)\\
        &&Skew-$t$&4.911(0.568)&0.910(0.052)&16.149(3.058)\\
        \cline{2-6}
        &\multirow{5}{*}{0.5}&AL-LPAL-VB&\textbf{0.097}(0.029)&0.995(0.017)&0.624(0.036)\\
        &&AL-LPAL&0.079(0.026)&0.959(0.055)&0.281(0.057)\\
        &&GAH&0.098(0.028)&1.000(0.000)&0.836(0.084)\\
        &&GAL&0.289(0.055)&1.000(0.004)&1.935(0.088)\\
        &&Skew-$t$&0.110(0.025)&1.000(0.000)&2.626(0.368)\\
        \cline{2-6}
        &\multirow{5}{*}{0.9}&AL-LPAL-VB&\textbf{0.137}(0.041)&0.983(0.033)&0.750(0.037)\\
        &&AL-LPAL&0.134(0.047)&0.881(0.086)&0.297(0.105)\\
        &&GAH&0.268(0.047)&0.956(0.086)&1.294(0.105)\\
        &&GAL&2.082(0.104)&0.925(0.041)&3.049(0.421)\\
        &&Skew-$t$&3.297(0.376)&0.933(0.031)&13.027(2.520)\\
        \hline
        \multirow{15}{*}{Case 6}&\multirow{5}{*}{0.1}&AL-LPAL-VB&0.148(0.047)&0.980(0.033)&0.772(0.037)\\
        &&AL-LPAL&\textbf{0.143}(0.058)&0.881(0.086)&0.313(0.104)\\
        &&GAH&0.296(0.052)&0.953(0.030)&1.359(0.073)\\
        &&GAL&2.805(0.148)&0.920(0.017)&3.685(0.595)\\
        &&Skew-$t$&4.905(0.593)&0.916(0.004)&16.524(3.187)\\
        \cline{2-6}
        &\multirow{5}{*}{0.5}&AL-LPAL-VB&0.105(0.033)&0.992(0.025)&0.638(0.035)\\
        &&AL-LPAL&\textbf{0.088}(0.029)&0.949(0.063)&0.303(0.059)\\
        &&GAH&0.108(0.033)&1.000(0.004)&0.866(0.082)\\
        &&GAL&0.294(0.058)&1.000(0.002)&1.944(0.082)\\
        &&Skew-$t$&0.119(0.028)&1.000(0.000)&2.826(0.429)\\
        \cline{2-6}
        &\multirow{5}{*}{0.9}&AL-LPAL-VB&\textbf{0.146}(0.049)&0.979(0.044)&0.763(0.039)\\
        &&AL-LPAL&\textbf{0.146}(0.120)&0.871(0.102)&0.308(0.142)\\
        &&GAH&0.287(0.059)&0.954(0.011)&1.330(0.085)\\
        &&GAL&2.088(0.100)&0.923(0.043)&3.050(0.421)\\
        &&Skew-$t$&3.303(0.391)&0.932(0.037)&13.029(2.536)\\
        \hline
    \end{tabular}
    \caption{Simulation results under two-sided contamination (Cases 4-6) for the moderately high-dimensional setting with $n=100$, $p=20$, and $\bm{\beta}=\bm{\beta}^{(2)}$. Results are based on 500 simulation replications, with Monte Carlo standard deviations reported in parentheses.}
    \label{tab:s2}
\end{table}

\begin{table}[!ht]
    \centering
    \small
    \begin{tabular}{ccc|ccc}
        \hline
        Case&Quantile&Method&RMSE&CP&AvL\\
        \hline
        \multirow{15}{*}{Case 1}&\multirow{5}{*}{0.1}&AL-LPAL-VB&0.051(0.015)&0.960(0.022)&0.163(0.023)\\
        &&AL-LPAL&\textbf{0.046}(0.018)&0.958(0.037)&0.109(0.026)\\
        &&GAH&0.049(0.014)&0.977(0.026)&0.242(0.034)\\
        &&GAL&0.127(0.064)&0.963(0.020)&0.502(0.134)\\
        &&Skew-$t$&0.119(0.023)&1.000(0.002)&1.354(0.177)\\
        \cline{2-6}
        &\multirow{5}{*}{0.5}&AL-LPAL-VB&\textbf{0.022}(0.010)&0.990(0.021)&0.133(0.023)\\
        &&AL-LPAL&0.030(0.011)&0.987(0.025)&0.133(0.021)\\
        &&GAH&0.025(0.008)&1.000(0.004)&0.235(0.031)\\
        &&GAL&0.234(0.013)&0.952(0.000)&0.495(0.023)\\
        &&Skew-$t$&0.105(0.173)&1.000(0.002)&1.264(0.851)\\
        \cline{2-6}
        &\multirow{5}{*}{0.9}&AL-LPAL-VB&\textbf{0.050}(0.117)&0.972(0.027)&0.163(0.036)\\
        &&AL-LPAL&0.070(0.286)&0.959(0.039)&0.121(0.164)\\
        &&GAH&1.441(1.303)&0.940(0.057)&2.386(1.739)\\
        &&GAL&2.952(0.105)&0.932(0.031)&3.238(0.329)\\
        &&Skew-$t$&3.850(0.277)&0.933(0.042)&11.831(1.705)\\
        \hline
        \multirow{15}{*}{Case 2}&\multirow{5}{*}{0.1}&AL-LPAL-VB&0.087(0.019)&0.949(0.015)&0.167(0.032)\\
        &&AL-LPAL&\textbf{0.075}(0.025)&0.944(0.034)&0.140(0.038)\\
        &&GAH&\textbf{0.075}(0.017)&0.967(0.026)&0.302(0.053)\\
        &&GAL&0.131(0.064)&0.965(0.021)&0.558(0.133)\\
        &&Skew-$t$&0.176(0.042)&1.000(0.004)&1.856(0.303)\\
        \cline{2-6}
        &\multirow{5}{*}{0.5}&AL-LPAL-VB&\textbf{0.020}(0.010)&0.993(0.018)&0.125(0.030)\\
        &&AL-LPAL&0.027(0.011)&0.991(0.021)&0.139(0.024)\\
        &&GAH&0.024(0.009)&0.999(0.006)&0.239(0.036)\\
        &&GAL&0.243(0.016)&0.952(0.000)&0.517(0.028)\\
        &&Skew-$t$&0.137(0.225)&1.000(0.004)&1.714(1.084)\\
        \cline{2-6}
        &\multirow{5}{*}{0.9}&AL-LPAL-VB&\textbf{0.084}(0.127)&0.953(0.020)&0.168(0.055)\\
        &&AL-LPAL&0.116(0.369)&0.946(0.038)&0.155(0.186)\\
        &&GAH&1.519(1.285)&0.940(0.060)&2.526(1.680)\\
        &&GAL&2.922(0.110)&0.931(0.032)&3.238(0.326)\\
        &&Skew-$t$&3.819(0.274)&0.932(0.042)&11.818(1.734)\\
        \hline
        \multirow{15}{*}{Case 3}&\multirow{5}{*}{0.1}&AL-LPAL-VB&0.084(0.019)&0.952(0.012)&0.177(0.031)\\
        &&AL-LPAL&0.079(0.024)&0.945(0.034)&0.141(0.036)\\
        &&GAH&\textbf{0.072}(0.018)&0.970(0.026)&0.307(0.051)\\
        &&GAL&0.142(0.067)&0.962(0.019)&0.585(0.133)\\
        &&Skew-$t$&0.191(0.057)&1.000(0.004)&2.003(0.379)\\
        \cline{2-6}
        &\multirow{5}{*}{0.5}&AL-LPAL-VB&\textbf{0.025}(0.011)&0.991(0.021)&0.145(0.030)\\
        &&AL-LPAL&0.033(0.014)&0.989(0.023)&0.158(0.026)\\
        &&GAH&0.028(0.010)&1.000(0.000)&0.264(0.039)\\
        &&GAL&0.245(0.018)&0.952(0.000)&0.530(0.031)\\
        &&Skew-$t$&0.137(0.200)&1.000(0.004)&1.827(1.084)\\
        \cline{2-6}
        &\multirow{5}{*}{0.9}&AL-LPAL-VB&\textbf{0.082}(0.153)&0.958(0.021)&0.182(0.056)\\
        &&AL-LPAL&0.128(0.442)&0.948(0.037)&0.171(0.267)\\
        &&GAH&1.574(1.279)&0.937(0.059)&2.612(1.660)\\
        &&GAL&2.914(0.123)&0.927(0.034)&3.247(0.310)\\
        &&Skew-$t$&3.811(0.301)&0.931(0.045)&11.787(1.682)\\
        \hline
    \end{tabular}
    \caption{Simulation results under one-sided contamination (Cases 1-3) for the moderately high-dimensional setting with $n=100$, $p=20$, and $\bm{\beta}=\bm{\beta}^{(3)}$. Results are based on 500 simulation replications, with Monte Carlo standard deviations reported in parentheses.}
    \label{tab:s3}
\end{table}

\begin{table}[!ht]
    \centering
    \small
    \begin{tabular}{ccc|ccc}
        \hline
        Case&Quantile&Method&RMSE&CP&AvL\\
        \hline
        \multirow{15}{*}{Case 4}&\multirow{5}{*}{0.1}&AL-LPAL-VB&\textbf{0.036}(0.015)&0.985(0.023)&0.188(0.029)\\
        &&AL-LPAL&0.060(0.202)&0.957(0.039)&0.125(0.194)\\
        &&GAH&0.060(0.019)&0.988(0.020)&0.412(0.040)\\
        &&GAL&2.833(0.186)&0.939(0.022)&3.642(0.535)\\
        &&Skew-$t$&4.945(0.570)&0.910(0.053)&16.127(2.965)\\
        \cline{2-6}
        &\multirow{5}{*}{0.5}&AL-LPAL-VB&\textbf{0.022}(0.011)&0.992(0.019)&0.142(0.032)\\
        &&AL-LPAL&0.033(0.012)&0.985(0.027)&0.144(0.024)\\
        &&GAH&\textbf{0.022}(0.007)&1.000(0.000)&0.274(0.033)\\
        &&GAL&0.070(0.013)&1.000(0.000)&0.614(0.033)\\
        &&Skew-$t$&0.087(0.018)&1.000(0.000)&1.944(0.214)\\
        \cline{2-6}
        &\multirow{5}{*}{0.9}&AL-LPAL-VB&\textbf{0.036}(0.014)&0.986(0.022)&0.184(0.028)\\
        &&AL-LPAL&0.061(0.181)&0.955(0.039)&0.125(0.195)\\
        &&GAH&0.049(0.018)&0.995(0.015)&0.385(0.040)\\
        &&GAL&1.965(0.169)&0.945(0.018)&2.789(0.351)\\
        &&Skew-$t$&3.300(0.420)&0.933(0.035)&13.079(2.661)\\
        \hline
        \multirow{15}{*}{Case 5}&\multirow{5}{*}{0.1}&AL-LPAL-VB&0.068(0.021)&0.962(0.023)&0.195(0.040)\\
        &&AL-LPAL&0.086(0.025)&0.939(0.036)&0.144(0.041)\\
        &&GAH&\textbf{0.066}(0.029)&0.995(0.015)&0.509(0.065)\\
        &&GAL&2.801(0.190)&0.940(0.022)&3.648(0.532)\\
        &&Skew-$t$&4.911(0.568)&0.910(0.052)&16.149(3.058)\\
        \cline{2-6}
        &\multirow{5}{*}{0.5}&AL-LPAL-VB&\textbf{0.019}(0.010)&0.995(0.016)&0.127(0.037)\\
        &&AL-LPAL&0.029(0.012)&0.990(0.023)&0.149(0.028)\\
        &&GAH&0.021(0.008)&1.000(0.002)&0.281(0.041)\\
        &&GAL&0.074(0.016)&1.000(0.000)&0.657(0.040)\\
        &&Skew-$t$&0.110(0.025)&1.000(0.000)&2.626(0.368)\\
        \cline{2-6}
        &\multirow{5}{*}{0.9}&AL-LPAL-VB&0.067(0.021)&0.963(0.022)&0.191(0.040)\\
        &&AL-LPAL&0.086(0.024)&0.941(0.036)&0.145(0.038)\\
        &&GAH&\textbf{0.059}(0.027)&0.996(0.012)&0.487(0.063)\\
        &&GAL&1.920(0.171)&0.947(0.015)&2.766(0.357)\\
        &&Skew-$t$&3.297(0.376)&0.933(0.031)&13.027(2.520)\\
        \hline
        \multirow{15}{*}{Case 6}&\multirow{5}{*}{0.1}&AL-LPAL-VB&\textbf{0.065}(0.020)&0.965(0.022)&0.205(0.035)\\
        &&AL-LPAL&0.082(0.025)&0.944(0.035)&0.150(0.056)\\
        &&GAH&0.069(0.031)&0.994(0.016)&0.520(0.069)\\
        &&GAL&2.799(0.193)&0.939(0.023)&3.600(0.512)\\
        &&Skew-$t$&4.934(0.559)&0.916(0.048)&16.569(2.951)\\
        \cline{2-6}
        &\multirow{5}{*}{0.5}&AL-LPAL-VB&\textbf{0.024}(0.012)&0.993(0.018)&0.153(0.038)\\
        &&AL-LPAL&0.036(0.013)&0.988(0.025)&0.170(0.030)\\
        &&GAH&\textbf{0.024}(0.009)&1.000(0.000)&0.307(0.042)\\
        &&GAL&0.076(0.017)&1.000(0.000)&0.671(0.039)\\
        &&Skew-$t$&0.118(0.030)&1.000(0.000)&2.827(0.389)\\
        \cline{2-6}
        &\multirow{5}{*}{0.9}&AL-LPAL-VB&0.065(0.020)&0.964(0.023)&0.199(0.038)\\
        &&AL-LPAL&0.086(0.072)&0.944(0.034)&0.152(0.117)\\
        &&GAH&\textbf{0.061}(0.028)&0.996(0.013)&0.499(0.070)\\
        &&GAL&1.931(0.168)&0.946(0.017)&2.802(0.338)\\
        &&Skew-$t$&3.296(0.392)&0.932(0.036)&13.013(2.559)\\
        \hline
    \end{tabular}
    \caption{Simulation results under two-sided contamination (Cases 4-6) for the moderately high-dimensional setting with $n=100$, $p=20$, and $\bm{\beta}=\bm{\beta}^{(3)}$. Results are based on 500 simulation replications, with Monte Carlo standard deviations reported in parentheses.}
    \label{tab:s4}
\end{table}

\subsection{Performance without contamination and comparison with the ordinary AL model}

The numerical experiments considered so far focus primarily on contaminated settings, reflecting the main objective of the proposed AL-LPAL model. To further examine the cost of introducing the robust mixture component, we consider an additional experiment in which no contamination is present. We also include Bayesian quantile regression based on the ordinary AL likelihood as a baseline.

We use the low-dimensional setting with $n=100$, $p=3$, and $\boldsymbol{\beta}=(1,-2,1)^\top$, while retaining the same covariate distribution and prior specification as in the preceding simulations. For the uncontaminated setting, the contamination component in Case~1 is removed and the error distribution is specified as $\epsilon_i \sim N(\mu,0.5^2)$,  where, for each $\tau\in\{0.1,0.5,0.9\}$, the location parameter $\mu$ is chosen so that the $\tau$th quantile of $\epsilon_i$ is zero. Hence, $Q_\tau(y_i\mid\bm{x}_i)=\bm{x}_i^\top\bm{\beta}_\tau$. The uncontaminated experiment is based on 200 simulation replications. We compare AL-LPAL, AL-LPAL-VB, and the ordinary AL model in terms of root mean squared error (RMSE), empirical coverage probability (CP), and average length (AvL) of the nominal 95\% credible intervals.

For comparison, we also evaluate the ordinary AL model under the original Case~1 contamination setting, $\epsilon_i \sim 0.8N(\mu,0.5^2)+0.2N(\mu+20,1)$. The same $N(0,10^3)$ prior for the regression coefficients is used for the ordinary AL model. This comparison makes it possible to assess both the efficiency cost of the proposed robustification in the absence of contamination and the benefit of the LPAL component when extreme observations are present.

\begin{table}[!htbp]
\centering
\begin{tabular}{cccccc}
\hline
Setting & Quantile & Method & RMSE & CP & AvL \\
\hline

\multirow{6}{*}{No contamination}
& \multirow{2}{*}{$0.1$}
& AL-LPAL & 0.084 (0.035) & 0.779 (0.238) & 0.225 (0.044) \\
&
& AL            & 0.081 (0.034) & 0.845 (0.201) & 0.248 (0.039) \\[2pt]

&
\multirow{2}{*}{$0.5$}
& AL-LPAL & 0.064 (0.027) & 0.912 (0.173) & 0.235 (0.031) \\
&
& AL            & 0.063 (0.026) & 0.932 (0.150) & 0.243 (0.029) \\[2pt]

&
\multirow{2}{*}{$0.9$}
& AL-LPAL & 0.085 (0.038) & 0.791 (0.237) & 0.234 (0.043) \\
&
& AL            & 0.080 (0.036) & 0.856 (0.207) & 0.258 (0.041) \\
\hline

\multirow{6}{*}{Case 1}
& \multirow{2}{*}{$0.1$}
& AL-LPAL & 0.112 (0.044) & 0.726 (0.244) & 0.264 (0.071) \\
&
& AL            & 0.125 (0.043) & 0.741 (0.203) & 0.312 (0.035) \\[2pt]

&
\multirow{2}{*}{$0.5$}
& AL-LPAL & 0.086 (0.035) & 0.843 (0.225) & 0.268 (0.060) \\
&
& AL            & 0.111 (0.042) & 0.708 (0.269) & 0.267 (0.083) \\[2pt]

&
\multirow{2}{*}{$0.9$}
& AL-LPAL & 0.113 (0.043) & 0.722 (0.258) & 0.265 (0.076) \\
&
& AL            & 9.476 (0.233) & 0.691 (0.130) & 3.707 (0.773) \\
\hline
\end{tabular}
\caption{
Simulation results comparing the proposed AL-LPAL model with the ordinary AL model in the low-dimensional setting with $n=100$ and $p=3$. Results under the uncontaminated normal setting are
based on 200 simulation replications, whereas those under Case~1 are based on 500 replications. Monte Carlo standard deviations are reported in parentheses.
}
\label{tab:clean_AL}

\end{table}

Table~\ref{tab:clean_AL} provides a direct assessment of the robustness-efficiency trade-off induced by the LPAL component. Under the uncontaminated normal setting, the proposed AL-LPAL model performs almost identically to the ordinary AL model in terms of point estimation. The RMSEs differ only slightly across all three quantile levels, indicating that the robustification entails little loss of point-estimation efficiency when extreme contamination is absent. The proposed model also yields slightly shorter credible intervals, although with somewhat lower empirical coverage than the ordinary AL model.

The contrast becomes substantially stronger under Case~1 contamination. At $\tau=0.1$, where the positive contamination lies away from the lower quantile being estimated, the two methods remain broadly comparable. The advantage of AL-LPAL becomes more visible at $\tau=0.5$ and is particularly pronounced at $\tau=0.9$, where the contamination directly affects the upper tail. At $\tau=0.9$, the RMSE of the ordinary AL model increases from $0.080$ in the uncontaminated setting to $9.476$ under Case~1, and its average interval length increases from $0.258$ to $3.707$. In contrast, the corresponding RMSE of AL-LPAL increases only from $0.085$ to $0.113$, with little change in interval length. These results indicate that the additional robustness provided by the LPAL component can be obtained with only a small loss of point-estimation efficiency in the absence of contamination, while offering substantial protection when extreme observations affect the quantile of interest.

\subsection{Sensitivity analysis}

To assess the sensitivity of the proposed method to the specification of the log-Pareto tail and mixture weight, we considered $\gamma \in \{0.5,1,2\}$ and three prior distributions for $s$, namely $\mathrm{Beta}(1,1)$, $\mathrm{Beta}(2,2)$, and $\mathrm{Beta}(1,9)$. The analysis was conducted under the low-dimensional setting of Case 3 with $n=100$, $p=3$, and $\tau=0.5$, based on 100 simulation replications. Table~\ref{tab:sensitivity} reports the resulting RMSEs.

\begin{table}[!ht]
\centering
\begin{tabular}{lccc}
\toprule
& \multicolumn{3}{c}{Tail parameter $\gamma$} \\
\cmidrule(lr){2-4}
Prior for $s$ 
& $0.5$ & $1$ & $2$ \\
\midrule
$\mathrm{Beta}(1,1)$ 
& 0.094 (0.039) & \textbf{0.094} (0.038) & 0.094 (0.039) \\

$\mathrm{Beta}(2,2)$ 
& 0.093 (0.037) & 0.094 (0.039) & 0.093 (0.038) \\

$\mathrm{Beta}(1,9)$ 
& 0.089 (0.036) & 0.089 (0.035) & 0.090 (0.035) \\
\bottomrule
\end{tabular}
\caption{
Sensitivity analysis with respect to the tail parameter $\gamma$  and the prior distribution of the mixture weight $s$. Reported values are RMSEs averaged over 100 simulation replications under the low-dimensional setting ($n=100$, $p=3$), Case 3, and $\tau=0.5$. The default specification used in the main analysis is shown in bold.
}
\label{tab:sensitivity}
\end{table}

Table~\ref{tab:sensitivity} indicates that the estimation performance of the proposed method is largely insensitive to the choice of the tail parameter $\gamma$. For each prior specification of $s$, the RMSEs are nearly unchanged over $\gamma\in\{0.5,1,2\}$. The choice of prior for $s$ has a slightly more visible effect: the $\mathrm{Beta}(1,9)$ prior yields marginally smaller RMSEs than the $\mathrm{Beta}(1,1)$ and $\mathrm{Beta}(2,2)$ priors across all values of $\gamma$. However, the differences are small relative to the Monte Carlo variation reported in parentheses. Overall, these results suggest that the point-estimation performance of the proposed method is not materially sensitive to the hyperparameter specifications considered here. We therefore retain $\gamma=1$ and $s\sim\mathrm{Beta}(1,1)$ as the default specification used throughout the main analysis.

\subsection{Coverage calibration}

The simulation results in the main paper show that the credible intervals obtained from the proposed MCMC method can exhibit empirical coverage below the nominal level in some low-dimensional settings. Such undercoverage need not imply poor point estimation or a failure of the posterior computation. More generally, under likelihood misspecification, the posterior may concentrate around the parameter that best approximates the data-generating distribution while its posterior covariance does not necessarily coincide with the repeated-sampling covariance of the corresponding estimator \citep{Kleijn2012,Mueller2013}. Consequently, nominal Bayesian credible intervals can have incorrect frequentist coverage even when the posterior is appropriately centered. Related issues have been studied specifically in Bayesian quantile regression based on working likelihoods. For the asymmetric Laplace likelihood, \citet{Sriram2015,Yang2016} showed that posterior uncertainty generally requires a sandwich-type covariance correction under misspecification to obtain asymptotically valid frequentist inference. 

Motivated by these results, we conduct an additional calibration experiment to distinguish two possible sources of the undercoverage observed in the main simulation study: finite-sample posterior calibration under the proposed model and posterior under-dispersion caused by likelihood misspecification. We consider the low-dimensional setting with $p=3$ and $\tau=0.5$ and vary the sample size over $n\in\{100,200,500\}$. Under the correctly specified setting, the errors are generated from the AL-LPAL distribution used for posterior inference. We additionally consider Case~3 of the main simulation study as a representative misspecified data-generating process. All other settings are kept the same as those in the main simulation study. 

In addition to the empirical coverage probability (CP) of the nominal 95\% credible intervals, we compare the posterior dispersion with the repeated-sampling variability of the posterior point estimator. Let $\widehat{\beta}_{j}^{(r)}$ and $\widehat{\mathrm{sd}}_{j}^{(r)}$ denote, respectively, the posterior point estimate and posterior standard deviation of $\beta_j$ in simulation replication $r$, and define \[ \overline{\widehat{\beta}}_j = \frac{1}{M}\sum_{r=1}^{M} \widehat{\beta}_{j}^{(r)}. \] The empirical repeated-sampling standard deviation is \[ \mathrm{SD}_{\mathrm{emp},j} = \left\{ \frac{1}{M-1} \sum_{r=1}^{M} \left( \widehat{\beta}_{j}^{(r)} - \overline{\widehat{\beta}}_j \right)^2 \right\}^{1/2}. \] We then define the dispersion ratio \[ R_j = \frac{ M^{-1}\sum_{r=1}^{M} \widehat{\mathrm{sd}}_{j}^{(r)} }{ \mathrm{SD}_{\mathrm{emp},j} }, \qquad j=1,\ldots,p. \] Thus, $R_j$ compares the uncertainty represented by the posterior with the actual repeated-sampling variability of the posterior estimator. A value close to one indicates agreement between the two measures of uncertainty, whereas $R_j<1$ indicates that the posterior is under-dispersed relative to the sampling distribution. This comparison is closely related to the general distinction between model-based posterior covariance and sampling covariance under likelihood misspecification \citep{Kleijn2012,Mueller2013}. 

To separately assess the accuracy of posterior centering, we also report the average absolute bias, \[ \mathrm{ABias} = \frac{1}{p} \sum_{j=1}^{p} \left| \overline{\widehat{\beta}}_j-\beta_j \right|. \] Considering ABias together with $R_j$ is useful because likelihood misspecification may have different consequences for posterior centering and posterior dispersion. In particular, consistency or small estimation bias does not by itself guarantee that model-based posterior credible intervals have the nominal frequentist coverage, as has also been documented for working-likelihood approaches to Bayesian quantile regression \citep{Sriram2015,Yang2016}. 

\begin{table}[!ht]
\centering
\begin{tabular}{llccccc}
\toprule
DGP & $n$ & ABias & CP & $R_1$ & $R_2$ & $R_3$ \\
\midrule
\multirow{3}{*}{Correctly specified AL-LPAL}
    & 100 & 0.103 & 0.938 & 0.976 & 1.019 & 0.991 \\
    & 200 & 0.066 & 0.944 & 1.027 & 1.086 & 1.068 \\
    & 500 & 0.037 & 0.922 & 1.113 & 1.053 & 1.076 \\
\addlinespace
\multirow{3}{*}{Misspecified (Case 3)}
    & 100 & 0.080 & 0.869 & 0.819 & 0.805 & 0.779 \\
    & 200 & 0.061 & 0.812 & 0.692 & 0.687 & 0.682 \\
    & 500 & 0.042 & 0.739 & 0.621 & 0.597 & 0.608 \\
\bottomrule
\end{tabular}
\caption{
Coverage calibration of the proposed MCMC method under correctly specified and misspecified data-generating processes. CP denotes the empirical coverage probability of the nominal 95\% credible intervals, ABias denotes the average absolute bias of the regression coefficients, and $R_j$ is the ratio of the average posterior standard deviation to the repeated-sampling empirical standard deviation for coefficient $j$. Values of $R_j$ close to one indicate well-calibrated posterior dispersion. The experiment was repeated 500 times.
}
\label{tab:coverage_calibration}
\end{table}

Table~\ref{tab:coverage_calibration} reveals a clear distinction between correct specification and likelihood misspecification. Under the correctly specified AL-LPAL data-generating process, the average absolute bias decreases steadily from 0.103 at $n=100$ to 0.037 at $n=500$, indicating improved centering of the posterior point estimates as the sample size increases. The dispersion ratios $R_j$ remain broadly around one, although they tend to exceed one for the larger sample sizes. Hence, there is no evidence that the posterior standard deviations systematically underestimate the repeated-sampling variability under correct specification. Nevertheless, the empirical coverage remains slightly below the nominal level, taking values 0.938, 0.944, and 0.922 for $n=100$, 200, and 500, respectively. In particular, the undercoverage at $n=500$ occurs despite $R_j>1$ for all three coefficients. Thus, the residual coverage discrepancy under correct specification cannot be explained by posterior under-dispersion alone and may reflect finite-sample features of posterior centering or shape that are not captured by the dispersion ratio. The pattern under misspecification is substantially different. In Case~3, the average absolute bias decreases from 0.080 at $n=100$ to 0.042 at $n=500$, whereas the empirical coverage deteriorates from 0.869 to 0.739. At the same time, all three dispersion ratios remain below one and decrease markedly with the sample size. At $n=500$, the values of $R_j$ range from approximately 0.60 to 0.62, indicating that the posterior standard deviations substantially underestimate the repeated-sampling variability of the coefficient estimators. Thus, even as the posterior point estimates become better centered, the posterior distribution becomes too concentrated relative to the sampling distribution. Taken together, these findings suggest that the pronounced undercoverage observed under likelihood misspecification is closely associated with posterior under-dispersion, consistent with the general distinction between model-based posterior covariance and repeated-sampling covariance under misspecification \citep{Kleijn2012,Mueller2013}. The simultaneous decrease in absolute bias and deterioration in coverage further indicates that the coverage problem cannot be attributed primarily to persistent estimation bias. This behavior is also consistent with previous results for working-likelihood Bayesian quantile regression \citep{Sriram2015,Yang2016}. In contrast, the more moderate undercoverage under correct specification appears to have a different finite-sample origin, since the corresponding posterior dispersion is not systematically too small.

\subsection{MCMC diagnostics}

Since the regression coefficients $\bm{\beta}_\tau$ are the primary inferential quantities in the proposed quantile regression model, we focus the MCMC diagnostic analysis on their posterior draws. To examine the numerical reliability of the Gibbs sampler under relatively challenging settings, we consider representative contamination scenarios in which the fitted quantile is directly affected by extreme observations. Specifically, we examine Case~3 with $p=3$ and $\tau=0.9$ and Case~6 with $p=20$ and $\tau=0.9$. For each setting, five independently initialized chains were run for 100,000 iterations, with the first 50,000 iterations of each chain discarded as burn-in with 10 thinnings. We assess convergence using the rank-normalized split-$\widehat{R}$ statistic and the bulk and tail effective sample sizes (ESS) proposed by \citet{Vehtari2021}. The $\widehat{R}$ diagnostic originates from the multiple-chain convergence assessment of \citet{Gelman1992}, which compares within- and between-chain variation. Since the proposed model contains a super-heavy-tailed component, we use the rank-based version of $\widehat{R}$ rather than the conventional statistic, as the latter may fail to detect convergence problems for heavy-tailed target distributions \citep{Vehtari2021}.  The effective sample size measures the amount of information contained in correlated MCMC draws relative to an equivalent number of independent draws. Rather than relying on a single ESS measure, \citet{Vehtari2021} distinguish between sampling efficiency in the bulk and in the tails of the posterior distribution. The bulk ESS is computed using rank-normalized posterior draws and is primarily informative about the accuracy of posterior summaries determined by the central part of the distribution, such as posterior means and medians. In contrast, the tail ESS is based on localized effective sample sizes for the lower and upper posterior quantiles and is designed to assess whether the tails of the posterior distribution have been explored sufficiently well. In particular, the tail ESS is determined from the effective sample sizes associated with the 5\% and 95\% posterior quantiles \citep{Vehtari2021}. This distinction is relevant in the present study because posterior point estimates of $\bm{\beta}_\tau$ depend primarily on adequate exploration of the posterior bulk, whereas reliable estimation of credible interval endpoints additionally requires sufficient exploration of the posterior tails. Bulk and tail ESS have also been used together with $\widehat{R}$ as convergence diagnostics in applied Bayesian analyses; see, for example, \citet{Wang2022}. Because the objective is to assess the reliability of posterior inference for the entire coefficient vector, we report conservative summaries over the regression coefficients. Specifically, for each simulation setting we report the maximum $\widehat{R}$, the minimum bulk ESS, and the minimum tail ESS across $j=1,\ldots,p$. Thus, each reported quantity represents the least favorable diagnostic among the regression coefficients rather than an average measure of MCMC performance.

\begin{table}[!ht]
\centering

\begin{tabular}{lccc}
\toprule
Setting
& $\max_j \widehat{R}_j$
& $\min_j \mathrm{ESS}_{\mathrm{bulk},j}$
& $\min_j \mathrm{ESS}_{\mathrm{tail},j}$
\\
\midrule
Case 3, $p=3$, $\tau=0.9$
& 1.002 & 2314.558 & 2496.804 \\

Case 6, $p=20$, $\tau=0.9$
& 1.005 & 1466.584 & 406.195 \\
\bottomrule
\end{tabular}
\caption{
MCMC diagnostics for the regression coefficients $\bm{\beta}_\tau$ in representative challenging simulation settings. Reported values are the maximum $\widehat{R}$, minimum bulk ESS, and minimum tail ESS across all regression coefficients.
}
\label{tab:mcmc_diagnostics}
\end{table}

Table~\ref{tab:mcmc_diagnostics} shows that the convergence and Monte Carlo accuracy of the posterior draws for $\beta_\tau$ remain satisfactory in both settings. The maximum $\widehat{R}$ values are close to one, indicating agreement among the independently initialized chains. Moreover, the smallest bulk and tail effective sample sizes remain sufficiently large. Since these summaries are based on the least favorable coefficient in each setting, the results provide evidence that the posterior estimates of the regression coefficients used in the simulation study are not materially affected by lack of MCMC convergence.

\subsection{Computational time}

We additionally compared the computational costs of the Gibbs sampler and the variational Bayes (VB) approximation. The purpose of this comparison is to quantify the computational advantage of VB when repeated model fitting is required, rather than to regard VB as a replacement for the MCMC procedure for full posterior inference.

We considered representative low- and moderately high-dimensional simulation settings with $n=100$ and $p\in\{3,20\}$. For the case of $p=20$, we considered Case 3. For the MCMC implementation, each fit consisted of 10,000 iterations, with the first 5,000 iterations discarded as burn-in and the remaining 5,000 draws retained for posterior inference. For VB, the coordinate-ascent algorithm was terminated when the relative change in the evidence lower bound (ELBO) was less than $10^{-5}$, as in the main simulation study. To ensure a direct comparison of the computational cost of a single model fit, elapsed times were measured without parallelizing individual fits.

All computations were performed in R 4.5.2 under Windows on a workstation equipped with an Intel Core Ultra 9 285K processor and 64 GB of RAM. For each setting, the same dataset was fitted repeatedly by each computational method with 20 replications, and Table~\ref{tab:computational_time} reports the mean elapsed time. The speed-up factor is defined as
\[
    \text{Speed-up}
    =
    \frac{
        \text{mean elapsed time for MCMC}
    }{
        \text{mean elapsed time for VB}
    }.
\]

\begin{table}[!ht]
\centering
\begin{tabular}{llccc}
\toprule
Dimension & Quantile
& MCMC (sec.)
& VB (sec.)
& Speed-up \\
\midrule
$p=3$  & $\tau=0.1$ & 3.460 (0.110) & 0.073 (0.044) & 47.4 \\
       & $\tau=0.5$ & 3.447 (0.065) & 0.065 (0.050) & 53.0 \\
       & $\tau=0.9$ & 3.515 (0.123) & 0.071 (0.036) & 49.5 \\
\addlinespace
$p=20$ & $\tau=0.1$ & 3.944 (0.162) & 0.023 (0.009) & 171.5 \\
       & $\tau=0.5$ & 3.825 (0.157) & 0.020 (0.009) & 191.3 \\
       & $\tau=0.9$ & 3.947 (0.183) & 0.041 (0.036) & 96.3 \\
\bottomrule
\end{tabular}
\caption{
Computational time of the proposed MCMC and VB procedures in representative simulation settings. Elapsed times are reported in seconds and averaged over repeated runs on the same computing environment. The speed-up factor is the ratio of the mean MCMC elapsed time to the mean VB elapsed time.
}
\label{tab:computational_time}

\end{table}

As shown in Table~\ref{tab:computational_time}, the VB approximation is substantially faster than the Gibbs sampler in every setting considered. The computational advantage is particularly pronounced in the moderately high-dimensional setting. For $p=3$, the reduction in computation time is substantial but relatively stable across quantile levels, whereas for $p=20$ the relative advantage of VB becomes considerably larger. This reflects the fact that the cost of the Gibbs sampler increases with the repeated simulation of latent variables and posterior parameter draws, while the deterministic coordinate updates of VB remain inexpensive in the dimensions considered here. The computational gain should, however, be interpreted together with the simulation results in the main paper. The VB approximation is not intended to replace MCMC as the reference procedure for full posterior inference, since the mean-field factorization does not preserve all posterior dependence and can differ from MCMC in both point and uncertainty quantification under difficult contamination settings. Rather, the results demonstrate that VB provides a computationally attractive alternative when the proposed model must be fitted repeatedly, such as over a large number of simulation replications or quantile levels.

\bibliographystyle{chicago}
\bibliography{Ref}

@article{Hamura2022,
title = {Log-regularly varying scale mixture of normals for robust regression},
author = {Yasuyuki Hamura and Kaoru Irie and Shonosuke Sugasawa},
journal = {Computational Statistics \& Data Analysis},
volume = {173},
pages = {107517},
year = {2022}
}

@article{Kozumi2011,
author = {Hideo Kozumi and Genya Kobayashi},
title = {Gibbs sampling methods for {B}ayesian quantile regression},
journal = {Journal of Statistical Computation and Simulation},
volume = {81},
number = {11},
pages = {1565--1578},
year = {2011}
}

@ARTICLE{Makalic2016,
  author={Makalic, Enes and Schmidt, Daniel F.},
  journal={IEEE Signal Processing Letters}, 
  title={A Simple Sampler for the Horseshoe Estimator}, 
  year={2016},
  volume={23},
  number={1},
  pages={179-182}
}

@article{Morales2017,
author = {Galarza Morales, Christian and Lachos Davila, Victor and Barbosa Cabral, Celso and Castro Cepero, Luis},
title = {Robust quantile regression using a generalized class of skewed distributions},
journal = {Stat},
volume = {6},
number = {1},
pages = {113-130},
year = {2017}
}

@article{Bernardi2018,
title = {Bayesian quantile regression using the skew exponential power distribution},
journal = {Computational Statistics \& Data Analysis},
volume = {126},
pages = {92-111},
year = {2018},
author = {Mauro Bernardi and Marco Bottone and Lea Petrella}
}

@article{Yu2023,
author = {Hanjun Yu and Lichao Yu},
title = {Flexible {B}ayesian quantile regression for nonlinear mixed effects models based on the generalized asymmetric Laplace distribution},
journal = {Journal of Statistical Computation and Simulation},
volume = {93},
number = {15},
pages = {2725--2750},
year = {2023}
}

@article{Soomro2023,
  author  = {Soomro, Sanna and Yu, Keming and Yu, Yan},
  title   = {A New {B}ayesian Huberised Regularisation and Beyond},
  journal = {arXiv preprint arXiv:2307.12123},
  year    = {2023}
}

@article{Hu2024,
author={Hu, Weitao
and Zhang, Weiping},
title={Flexible Bayesian quantile regression based on the generalized asymmetric Huberised-type distribution},
journal={Statistics and Computing},
year={2024},
volume={34},
number={4},
pages={144}
}

@article{Arnroth2024,
author = {Lukas Arnroth and Johan Vegelius},
title = {Quantile regression based on the skewed exponential power distribution},
journal = {Communications in Statistics - Simulation and Computation},
volume = {53},
number = {12},
pages = {6189--6205},
year = {2024}

}

@article{Yu2001,
title = {Bayesian quantile regression},
journal = {Statistics \& Probability Letters},
volume = {54},
number = {4},
pages = {437-447},
year = {2001},
author = {Keming Yu and Rana A. Moyeed},
}

@article{Desgagne2015,
author = {Alain Desgagn{\'e}},
title = {{Robustness to outliers in location–scale parameter model using log-regularly varying distributions}},
volume = {43},
journal = {The Annals of Statistics},
number = {4},
pages = {1568 -- 1595},
year = {2015}
}

@article{Carvalho2010,
    author = {Carvalho, Carlos M. and Polson, Nicholas G. and Scott, James G.},
    title = {The horseshoe estimator for sparse signals},
    journal = {Biometrika},
    volume = {97},
    number = {2},
    pages = {465-480},
    year = {2010}
}

@article{Gagnon2020,
author = {Philippe Gagnon and Alain Desgagn{\'e} and Myl{\`e}ne B{\'e}dard},
title = {{A New Bayesian Approach to Robustness Against Outliers in Linear Regression}},
volume = {15},
journal = {Bayesian Analysis},
number = {2},
publisher = {International Society for Bayesian Analysis},
pages = {389 -- 414},
year = {2020}
}

@Article{Hashimoto2020,
AUTHOR = {Hashimoto, Shintaro and Sugasawa, Shonosuke},
TITLE = {Robust {B}ayesian Regression with Synthetic Posterior Distributions},
JOURNAL = {Entropy},
VOLUME = {22},
YEAR = {2020},
NUMBER = {6},
ARTICLE-NUMBER = {661},
}

@article{Koenker1978,
 author = {Roger Koenker and Gilbert Bassett},
 journal = {Econometrica},
 number = {1},
 pages = {33--50},
 title = {Regression Quantiles},
 volume = {46},
 year = {1978}
}

@article{Blei2017,
author = {David M. Blei and Alp Kucukelbir and Jon D. McAuliffe},
title = {Variational Inference: A Review for Statisticians},
journal = {Journal of the American Statistical Association},
volume = {112},
number = {518},
pages = {859--877},
year = {2017}
}

@article{Koenker2001,
Author = {Koenker, Roger and Hallock, Kevin F.},
Title = {Quantile Regression},
Journal = {Journal of Economic Perspectives},
Volume = {15},
Number = {4},
Year = {2001},
Pages = {143–156},
}

@article{Wichitaksorn2014,
author = {Wichitaksorn, Nuttanan and Choy, S. T. Boris and Gerlach, Richard},
title = {A generalized class of skew distributions and associated robust quantile regression models},
journal = {Canadian Journal of Statistics},
volume = {42},
number = {4},
pages = {579-596},
year = {2014}
}

@article{Reich2010,
    author = {Reich, Brian J. and Bondell, Howard D. and Wang, Huixia J.},
    title = {Flexible Bayesian quantile regression for independent and clustered data},
    journal = {Biostatistics},
    volume = {11},
    number = {2},
    pages = {337-352},
    year = {2010} 
}

@article{Santos2016,
  author  = {Santos, Bruno and Bolfarine, Heleno},
  title   = {On {B}ayesian quantile regression and outliers},
  journal = {arXiv preprint arXiv:1601.07344},
  year    = {2016}
}

@Article{Burger2025,
author={Burger, Divan A.
and van der Merwe, Sean
and Lesaffre, Emmanuel},
title={A robust mixed-effects quantile regression model using generalized Laplace mixtures to handle outliers and skewness},
journal={Computational Statistics},
year={2025},
volume={40},
number={8},
pages={4775-4798}
}

@article{Yan2025,
  author  = {Yan, Yifei and Zheng, Xiaotian and Kottas, Athanasios},
  title   = {A New Family of Error Distributions for {B}ayesian Quantile Regression},
  journal = {Bayesian Analysis},
  year    = {2025},
  pages   = {1--29}
}

@article{Piironen2017,
  author  = {Piironen, Juho and Vehtari, Aki},
  title   = {Sparsity Information and Regularization in the Horseshoe and Other Shrinkage Priors},
  journal = {Electronic Journal of Statistics},
  year    = {2017},
  volume  = {11},
  number  = {2},
  pages   = {5018--5051}
}

@article{Ormerod2010,
 author = {J. T. Ormerod and M. P. Wand},
 journal = {The American Statistician},
 number = {2},
 pages = {140--153},
 title = {Explaining Variational Approximations},
 volume = {64},
 year = {2010}
}

@article{Kleijn2012,
  author  = {Kleijn, B. J. K. and van der Vaart, A. W.},
  title   = {The {Bernstein--von Mises} Theorem under Misspecification},
  journal = {Electronic Journal of Statistics},
  year    = {2012},
  volume  = {6},
  pages   = {354--381}
}

@article{Mueller2013,
  author  = {M{\"u}ller, Ulrich K.},
  title   = {Risk of {Bayesian} Inference in Misspecified Models, and the Sandwich Covariance Matrix},
  journal = {Econometrica},
  year    = {2013},
  volume  = {81},
  number  = {5},
  pages   = {1805--1849}
}

@article{Sriram2015,
  author  = {Sriram, Karthik},
  title   = {A Sandwich Likelihood Correction for {Bayesian} Quantile Regression Based on the Misspecified Asymmetric Laplace Density},
  journal = {Statistics \& Probability Letters},
  year    = {2015},
  volume  = {107},
  pages   = {18--26}
}

@article{Yang2016,
  author  = {Yang, Yunwen and Wang, Huixia Judy and He, Xuming},
  title   = {Posterior Inference in {Bayesian} Quantile Regression with Asymmetric Laplace Likelihood},
  journal = {International Statistical Review},
  year    = {2016},
  volume  = {84},
  number  = {3},
  pages   = {327--344}
}

@article{Gelman1992,
  author  = {Gelman, Andrew and Rubin, Donald B.},
  title   = {Inference from Iterative Simulation Using Multiple Sequences},
  journal = {Statistical Science},
  year    = {1992},
  volume  = {7},
  number  = {4},
  pages   = {457--472}
}

@article{Vehtari2021,
  author  = {Vehtari, Aki and Gelman, Andrew and Simpson, Daniel and Carpenter, Bob and B{\"u}rkner, Paul-Christian},
  title   = {Rank-Normalization, Folding, and Localization:
             An Improved {$\widehat{R}$} for Assessing Convergence of {MCMC}},
  journal = {Bayesian Analysis},
  year    = {2021},
  volume  = {16},
  number  = {2},
  pages   = {667--718}
}

@article{Wang2022,
  author  = {Wang, Zhengfan and Fix, Miranda J. and Hug, Lucia
             and Mishra, Anu and You, Danzhen and Blencowe, Hannah
             and Wakefield, Jon and Alkema, Leontine},
  title   = {Estimating the Stillbirth Rate for 195 Countries Using a
             Bayesian Sparse Regression Model with Temporal Smoothing},
  journal = {The Annals of Applied Statistics},
  year    = {2022},
  volume  = {16},
  number  = {4},
  pages   = {2101--2121}
}

@article{Lim2020,
  author  = {Lim, Daeyoung and Park, Beomjo and Nott, David and Wang, Xueou and Choi, Taeryon},
  title   = {Sparse signal shrinkage and outlier detection in high-dimensional quantile regression with variational {B}ayes},
  journal = {Statistics and Its Interface},
  year    = {2020},
  volume  = {13},
  number  = {2},
  pages   = {237--249}
}

@article{Sabetrasekh2026,
  author  = {Sabetrasekh, Maryam and Kazemi, Iraj},
  title   = {An adaptable {B}ayesian quantile-based regression framework for asymmetric data structures and extreme points},
  journal = {Statistical Analysis and Data Mining: An ASA Data Science Journal},
  year    = {2026},
  volume  = {19},
  number  = {1},
  pages   = {e70063}
}

@article{Basu1998,
  author  = {Basu, Ayanendranath and Harris, Ian R. and Hjort, Nils L. and Jones, M. C.},
  title   = {Robust and efficient estimation by minimising a density power divergence},
  journal = {Biometrika},
  volume  = {85},
  number  = {3},
  pages   = {549--559},
  year    = {1998}
}

@article{Ghosh2016,
  author  = {Ghosh, Abhik and Basu, Ayanendranath},
  title   = {Robust Bayes estimation using the density power divergence},
  journal = {Annals of the Institute of Statistical Mathematics},
  volume  = {68},
  number  = {2},
  pages   = {413--437},
  year    = {2016}
}

@book{Hampel2011,
  author    = {Hampel, Frank R. and Ronchetti, Elvezio M. and
               Rousseeuw, Peter J. and Stahel, Werner A.},
  title     = {Robust Statistics: The Approach Based on Influence Functions},
  publisher = {John Wiley \& Sons},
  year      = {2011},
  doi       = {10.1002/9781118186435}
}

\end{document}